\def\PAY{\bP\bA\bY}
\def\algzero{\text{{\em Algorand}}_0}
\def\alg1{\text{{\em Algorand1}}}
\def\alg2{\text{{\em Algorand2}}}
\def\alg3{\text{{\em Algorand3}}}
\def\alg{\text{{\em Algorand}}\,'}
\def\BBA*{BBA^\star}
\def\BA*{BA^\star}
\def\rs{^{r,s}}
\renewcommand\section{\@startsection {section}{1}{\z@}%
                                   {-3.5ex \@plus -1ex \@minus -.2ex}%
                                   {2.3ex \@plus.2ex}%
                                   {\normalfont\LARGE\bfseries}}
\renewcommand\subsection{\@startsection{subsection}{2}{\z@}%
                                     {-3.25ex\@plus -1ex \@minus -.2ex}%
                                     {1.5ex \@plus .2ex}%
                                     {\normalfont\Large\bfseries}}
\renewcommand\subsubsection{\@startsection{subsubsection}{3}{\z@}%
                                     {-3.25ex\@plus -1ex \@minus -.2ex}%
                                     {1.5ex \@plus .2ex}%
                                     {\normalfont\large\bfseries}}
  \renewcommand*{\bm}[1]{#1}%
\newcommand{\remove}[1]{}
\newcommand{\ignore}[1]{}
\newtheorem{theorem}{Theorem}[section]
\newtheorem{definition}{Definition}[section]
\newtheorem{lemma}[theorem]{Lemma}
\newcommand{\scparagraph}[1]{\medskip\noindent\textsc{#1}}
\def\blackslug{\hbox{\hskip 1pt \vrule width 8pt height 8pt depth 0pt
\hskip 1pt}}
\def\bqed{\quad\blackslug\lower 8.5pt\null\par}
\def\wqed
\quad\raisebox{-.3ex}{\Large$\Box$}\lower 8.5pt\null\par}
\long\gdef\boxit#1{\begingroup\vbox{\hrule\hbox{\vrule\kern3pt
      \vbox{\kern3pt#1\kern3pt}\kern3pt\vrule}\hrule}\endgroup}
\def\cI{{\cal I}}
\def\cP{{\cal P}}
\newcommand{\bit}{ \{0,1\}}
\newcommand{\bA}{\mathbb{A}}
\newcommand{\bE}{\mathbb{E}}
\newcommand{\bP}{\mathbb{P}}
\newcommand{\bY}{\mathbb{Y}}
\newcommand{\bZ}{\mathbb{Z}}
\def\bit{\{0,1\}}
\newcommand{\upr}{^r}
\newcommand{\uprm}{^{r-1}}
\newcommand{\uprp}{^{r+1}}
\def\bydef{\triangleq}
\DeclareMathOperator*{\argmin}{arg\,min}
\def\bydef{\triangleq}
\def\ShowAuthNotes{1}
\newcommand{\authnote}[2]{\textcolor{red}{\parbox{0.9\linewidth}{[{\footnotesize {\bf #1:} { {#2}}}]}}}
\newcommand{\authnote}[2]{}
\newcommand{\absnewline}{\ifnum\full=0 \\ \fi}
\setlist[description]{leftmargin=18pt}
\setlist[enumerate]{leftmargin=18pt}
\setlist[itemize]{leftmargin=18pt}
\def\full{1} 
\newlength{\saveparindent}
\newlength{\saveparskip}
\newcounter{ctr}
\newenvironment{newitemize}{%
\begin{list}{$\bullet$}{\labelwidth=18pt%
\labelsep=5pt \leftmargin=23pt \topsep=1pt%
\setlength{\listparindent}{0pt}%
\setlength{\parsep}{\saveparskip}%
\setlength{\itemsep}{3pt}}}{\end{list}}
\newenvironment{newcenter}{%
\begin{center}%
\vspace{-5pt}%
\setlength{\parskip}{0pt}}{\end{center}\vspace{-5pt}}
\DeclareMathOperator{\lsb}{{\tt lsb}}
\title{ALGORAND%
\footnote{This is the more formal (and asynchronous) version of the ArXiv paper by the second author \cite{ALG}, a paper itself based on that of Gorbunov and Micali \cite{democoin}.
Algorand's technologies are the object of the following patent applications: US62/117,138   US62/120,916   US62/142,318   US62/218,817   US62/314,601  PCT/US2016/018300   US62/326,865   62/331,654   US62/333,340   US62/343,369   US62/344,667   US62/346,775   US62/351,011   US62/653,482   US62/352,195   US62/363,970  US62/369,447   US62/378,753   US62/383,299   US62/394,091   US62/400,361   US62/403,403  US62/410,721   US62/416,959   US62/422,883   US62/455,444   US62/458,746   US62/459,652   US62/460,928   US62/465,931
}
}
\author{Jing Chen \\ Computer Science Department \\ Stony Brook University \\ Stony Brook, NY 11794, USA \\ jingchen@cs.stonybrook.edu \and Silvio Micali \\ CSAIL \\ MIT \\ Cambridge, MA 02139, USA \\ silvio@csail.mit.edu
}
\begin{document}

\date{}
\maketitle

\begin{center}
Abstract
\end{center}

\noindent
A public ledger is a tamperproof sequence of data that can be read and augmented by everyone. Public ledgers have innumerable and compelling uses. They can secure, in plain sight, all kinds of transactions ---such as titles, sales, and payments--- in the exact order in which they occur.
Public ledgers not only curb corruption, but also enable very sophisticated applications ---such as cryptocurrencies and smart contracts. They stand to revolutionize the way a democratic society operates. As currently implemented, however, they scale poorly and cannot achieve their  potential.

\medskip

\noindent
{\em Algorand} is a  truly democratic and  efficient way to implement a public ledger. Unlike prior implementations based on proof of work, it requires a negligible amount of computation, and generates a transaction history that will not ``fork'' with overwhelmingly high probability.

\medskip

Algorand is based on (a novel and super fast) message-passing Byzantine agreement.

\bigskip

\smallskip

\begin{center}
For  concreteness,  we shall describe Algorand only as a money platform.
\end{center}

\section{Introduction}

Money is becoming increasingly virtual.
It has been estimated that about 80\% of United States dollars today only exist as
ledger entries~\cite{ledger-dollars}.
Other financial instruments are following suit.

In an ideal world, in which we could count on a universally trusted central entity, immune to all possible cyber attacks,
money and other financial transactions could be  solely electronic.
Unfortunately, we do not live in such a world.   Accordingly,
decentralized cryptocurrencies, such as Bitcoin~\cite{bitcoin},
and ``smart contract'' systems, such as Ethereum, have been proposed~\cite{Ethereum}.
At the heart of these systems is a shared \emph{ledger} that reliably records a
 sequence of transactions, as varied as payments and contracts, in a tamperproof way.
The technology of choice to guarantee such tamperproofness is the {\em blockchain}. Blockchains are behind applications such as cryptocurrencies~\cite{bitcoin},
financial applications~\cite{Ethereum}, and
the Internet of Things~\cite{coindesk-iot}.
Several techniques to manage blockchain-based  ledgers have been proposed: \emph{proof of work}~\cite{bitcoin},
\emph{proof of stake}~\cite{pos},
\emph{practical Byzantine fault-tolerance} \cite{PBFT},
or some combination.

Currently, however,
ledgers can be inefficient to manage.
For example,
Bitcoin's  \emph{proof-of-work}  approach (based on the original concept of \cite{DN}) requires a vast amount of computation,
is wasteful and scales poorly \cite{bitcoinwaste}. In addition, it {\em de facto} concentrates power in very few hands.

We therefore wish to put forward a new method to implement a public ledger that offers the
convenience and efficiency of a centralized system run by a trusted and inviolable authority,
without the inefficiencies and weaknesses of current decentralized implementations.
We call our approach {\em Algorand}, because we use algorithmic randomness to select, based on the ledger constructed so far,  a set of {\em verifiers} who are in charge of constructing the next block of valid transactions.
Naturally, we ensure that such selections are provably immune from manipulations and unpredictable until the last minute, but also that they ultimately are universally clear.

Algorand's approach is quite democratic, in the sense that neither in principle nor {\em de facto} it creates different classes of users (as ``miners" and ``ordinary users" in Bitcoin). In Algorand ``all power resides with the set of all users".

One notable property of Algorand is that its transaction history may fork only with very small probability (e.g., one in a trillion, that is, or even $10^{-18}$). Algorand can also address some legal and political concerns.

The Algorand approach  applies to blockchains and, more generally, to any method of generating a tamperproof sequence of blocks. We actually put forward a new  method ---alternative to, and more efficient than, blockchains--- that may be of independent interest.

\subsection{Bitcoin's  Assumption and Technical Problems}\label{BitcoinProblems}

Bitcoin is a very ingenious  system and has inspired a great amount of subsequent research. Yet, it  is also problematic.  Let us summarize  its underlying assumption  and  technical problems ---which are actually shared by essentially all cryptocurrencies that, like Bitcoin, are based on {\em proof-of-work}.

For this summary, it suffices to recall that, in Bitcoin, a user may own multiple public keys of a digital signature scheme, that money is associated with public keys, and that a payment is a digital signature that transfers some amount of money  from one public key to another. Essentially,  Bitcoin organizes all processed payments in a chain of blocks, $B_1,B_2,\ldots$,
each consisting of multiple payments, such that,
all payments of $B_1$, taken in any order,
followed by those of $B_2$, in any order, etc., constitute a sequence of valid
payments. Each block is generated, on average, every 10 minutes.

This sequence of blocks is  a \emph{chain}, because it is structured so as to ensure that any change, even in a single block,  percolates into all subsequent blocks, making it easier to spot any alteration of the payment history.
(As we shall see, this is achieved by including in each block a {\em cryptographic hash} of the previous one.)
Such block structure is referred to as a \emph{blockchain}.

\paragraph{Assumption: Honest Majority of Computational Power}
Bitcoin  assumes that no malicious entity (nor a coalition of coordinated malicious entities)  controls the majority of the computational power devoted to block generation. Such an entity, in fact, would be able to  modify the blockchain, and thus  re-write the payment history, as it pleases. In particular, it could make a payment $\wp$, obtain the benefits paid for, and then ``erase'' any trace of $\wp$.

\paragraph{Technical Problem 1: Computational Waste}
Bitcoin's proof-of-work approach to block generation requires  an extraordinary amount of computation. Currently, with just a few hundred thousands public keys in the system, the top 500 most powerful supercomputers can only muster a mere 12.8\% percent of the total computational power required from the Bitcoin players.
This amount of computation would greatly increase,  should significantly more users join the system.

\paragraph{Technical Problem 2: Concentration of Power}

Today, due to the exorbitant amount of computation required,  a user, trying to generate a new block using an ordinary desktop (let alone a cell phone), expects to lose money. Indeed, for computing a new block with an ordinary computer, the expected cost of the necessary electricity  to power the computation exceeds the expected reward. Only using \emph{pools} of specially built computers (that do nothing  other than ``mine new blocks''), one might expect to make a profit by generating new blocks.
Accordingly, today there are, {\em de facto},  two disjoint classes of users: ordinary users, who only make payments, and specialized mining pools, that only search for new blocks.

It  should therefore not be a surprise that, as of recently, the total computing
power for block generation lies within just five pools.
In such conditions,
the assumption that a majority of the computational power is honest becomes
less credible.

\paragraph{Technical Problem 3:  Ambiguity}
In Bitcoin,  the blockchain is not necessarily unique. Indeed its latest portion often {\em forks}: the blockchain may be ---say---
$B_1,\ldots,B_k,B_{k+1}',B_{k+2}'$, according to one user, and
$B_1,\ldots,B_k,B_{k+1}'',B_{k+2}'',B_{k+3}''$ according another user.
Only after several blocks have been added to the chain, can one be reasonably sure that the first $k+3$ blocks will be the same for all users.
Thus, one cannot rely right away on the payments contained in the last block of the chain. It is more prudent to wait and see whether the block becomes sufficiently deep  in the blockchain and thus sufficiently stable.

\bigskip
Separately, {\em law-enforcement} and {\em monetary-policy} concerns have also been raised about Bitcoin.%
\footnote{The (pseudo) anonymity offered by Bitcoin payments may be misused for money laundering and/or the financing of criminal individuals or terrorist organizations.
Traditional banknotes or gold bars, that in principle offer perfect anonymity, should pose the same challenge, but the physicality of these currencies substantially slows down money transfers, so as to  permit some degree of monitoring by law-enforcement agencies.

The ability to ``print money'' is one of the very basic powers of a nation state. In principle, therefore, the massive adoption of an independently floating currency may curtail this power.
Currently, however, Bitcoin is far from being a threat to governmental monetary policies, and, due to its scalability problems,  may never be.}

\subsection{Algorand, in a Nutshell}

\paragraph{Setting}
Algorand works in a very tough setting.
Briefly,

\begin{itemize}

\item[(a)] {\em Permissionless and Permissioned Environments.} Algorand works efficiently  and securely even in a totally permissionless environment, where arbitrarily many users are allowed to join the system at any time, without any vetting or permission of any kind. Of course, Algorand works even better in a {\em permissioned} environment.

\item[(b)] {\em Very Adversarial Environments.} Algorand withstands a very powerful Adversary, who can

    (1) {\em instantaneously} corrupt {\em any user} he wants, at {\em any time} he wants, provided  that, in a permissionless environment, 2/3 of the money in the system belongs to honest user. (In a permissioned environment, irrespective of money, it suffices that 2/3 of the users are honest.)

    (2) {\em totally control and perfectly  coordinate} all corrupted users; and

    (3) {\em schedule the delivery of all messages,} provided that each message $m$ sent by a honest user reaches 95\% of the honest users within
    a time $\lambda_m$, which solely depends on the size of $m$.

\end{itemize}

\paragraph{Main Properties}

Despite the presence of our powerful adversary, in Algorand
\begin{itemize}
  \item {\em The amount of computation required is minimal.} Essentially, no matter how many users are present in the system, each of fifteen hundred users must perform at most a few seconds of computation.

      \item {\em A New Block is Generated in less than 10 minutes, and will {\em de facto} never leave the blockchain.}
For instance, in expectation, the time to generate a block in the first embodiment is less than $\Lambda +12.4\lambda$, where $\Lambda$ is  the time necessary to propagate a block, in a peer-to-peer gossip fashion, {\em no matter what  block size one may choose}, and $\lambda$ is the time to propagate 1,500 200B-long messages.
(Since in a truly decentralized system, $\Lambda$ essentially is an intrinsic latency, in Algorand the limiting factor in block generation  is network speed.)
{\em The second embodiment has actually been tested experimentally ( by ?), indicating that a block is generated in less than 40 seconds.}

In addition, Algorand's blockchain {\em may fork only with negligible probability} (i.e.,  less than one in  a trillion), and thus users can relay on the payments contained in a new block as soon as the block appears.

\item {\em All power resides with the users themselves.} Algorand is a truy distributed system. In particular, there are no exogenous entities  (as the ``miners" in Bitcoin), who can control which transactions are recognized.

\end{itemize}

\paragraph{Algorand's Techniques.}$ $

\scparagraph{1. A New and Fast Byzantine Agreement Protocol.}
 Algorand generates a new block via a new cryptographic,  {\em message-passing},  binary Byzantine agreement (BA) protocol, $\BA*$. Protocol
 $\BA*$ not only satisfies some additional properties (that we shall soon discuss), but is also very fast. Roughly said, its binary-input version consists of a 3-step loop, in which a player $i$ sends a {\em single} message $m_i$ to all other players. Executed in a complete and synchronous network, with more than 2/3 of the players being honest,  with probability $>1/3$, after each loop the protocol ends in agreement. (We stress that protocol $\BA*$ satisfies the original definition of Byzantine agreement of Pease, Shostak, and Lamport \cite{PSL}, without any weakenings.)

Algorand leverages this binary BA protocol to reach agreement, in our different communication model, on each new block. The agreed upon block is then {\em certified}, via a prescribed number of digital signature of the proper verifiers, and propagated through the network.

\medskip

\scparagraph{2. Cryptographic Sortition.}
Although very fast,  protocol $\BA*$ would benefit from further speed when played by millions of users.
Accordingly, Algorand chooses the players of $\BA*$  to be a much smaller subset of the set of all users. To avoid a different kind of concentration-of-power problem, each new block $B\upr$ will be constructed and agreed upon, via a new execution of $\BA*$, by a separate set of {\em selected verifiers},~$SV\upr$.
In principle,
selecting such a set might be as hard as
selecting $B\upr$ directly. We traverse this potential problem by an approach that we term, embracing the insightful suggestion of Maurice Herlihy, {\em cryptographic sortition}. Sortition is the practice of selecting officials at random from a large set of eligible individuals \cite{Wiki}.  (Sortition was practiced across centuries: for instance,  by the  republics of Athens, Florence, and Venice.
In modern judicial systems, random selection is often used to choose juries. Random sampling  has also been recently advocated for elections by David Chaum \cite{ChaumRSE}.) In a decentralized system, of course, choosing the random coins necessary
to randomly select the members of each verifier set $SV\upr$ is problematic. We thus resort to cryptography in order to select each verifier set, from the population of all users, in a way that is guaranteed to be automatic (i.e., requiring no message exchange) and random. In essence, we use a
cryptographic function to automatically determine, from the previous block $B\uprm$,  a user, the {\em leader}, in charge of proposing the new block $B\upr$, and the verifier set $SV\upr$, in charge to reach agreement on the block proposed by the leader. Since malicious users can affect the composition of $B\uprm$ (e.g., by choosing some of its payments), we  specially construct and use additional inputs so as to prove
that the leader for the $r$th block and the verifier set $SV\upr$ are indeed randomly chosen.

\medskip

\scparagraph{3. The Quantity (Seed) ${Q\upr}$.}
We use the  the last block $B\uprm$ in the blockchain in order to automatically determine the next verifier set and leader in charge of constructing the new block $B\upr$.
The challenge with this approach is that, by just choosing a slightly different payment in the previous round, our powerful  Adversary gains a tremendous control over the next leader. Even if he only controlled only 1/1000 of the players/money in the system, he could ensure that all leaders are malicious. (See the Intuition Section \ref{sec:BasicAlgorand}.)
This challenge is central to all proof-of-stake approaches, and, to the best of our knowledge, it has not, up to now, been satisfactorily solved.

To meet this challenge, we purposely construct, and continually update, a separate and carefully defined  quantity, $Q\upr$,
which {\em provably} is, not only unpredictable, but also not influentiable, by our powerful Adversary.
We may refer to $Q\upr$ as the $r$th {\em seed}, as it is from $Q\upr$ that Algorand selects, via secret cryptographic sortition, all the users that will play a special role in the generation of the $r$th block.

\medskip

\scparagraph{4. Secret Crytographic Sortition and Secret Credentials.}
Randomly and  unambiguously using the current last block, $B\uprm$, in order to choose the verifier set and the leader in charge of constructing the new block, $B\upr$, is not enough. Since  $B\uprm$ must be known before generating $B\upr$, the last non-influentiable quantity $Q\uprm$ contained in $B\uprm$ must be known too. Accordingly, so are the verifiers and the leader in charge to compute the block $B\upr$. Thus, our powerful  Adversary might immediately corrupt {\em all of them},
before they engage in any discussion about $B\upr$, so as to get full control over the block they certify.

To prevent this problem, leaders (and actually verifiers too) {\em secretly} learn of their role, but can compute a proper  {\em credential}, capable of proving to everyone that indeed have that role.
When a user privately realizes that he is the leader for the next block, first he secretly assembles his own proposed  new block, and then disseminates it (so that can be certified) together with  his own credential.
This way, though the Adversary will immediately realize who the leader of the next block is, and although he can corrupt him right away, it will be too late for the Adversary to influence the choice of a new block. Indeed, he cannot ``call back" the leader's message no more than a powerful government can put back into the bottle a message virally spread by WikiLeaks.

As we shall see, we cannot guarantee leader uniqueness, nor that everyone is sure who the leader is, including the leader himself! But, in Algorand,  unambiguous progress will be guaranteed.

\medskip

\scparagraph{5. Player Replaceability.}
After he proposes a new block, the leader might as well ``die" (or be corrupted by the Adversary), because his job is done.  But, for the verifiers in $SV\upr$, things are less simple. Indeed, being in charge of certifying the new block $B\upr$ with sufficiently many signatures, they must first run Byzantine agreement on the block proposed by the leader. The problem is that, no matter how efficient it is,  $\BA*$ requires {\em multiple} steps and the honesty of $>2/3$ of its players. This is a problem, because, for efficiency reasons,  the player set of $\BA*$ consists  the small set $SV\upr$ randomly selected among the set of all users. Thus, our powerful Adversary, although unable to corrupt  1/3 of {\em  all the users}, can certainly corrupt {\em all members of $SV\upr$}!

Fortunately we'll prove that protocol $\BA*$, executed by propagating messages in a peer-to-peer fashion, is {\em player-replaceable}. This novel requirement means that the protocol correctly and efficiently reaches consensus even if each of its step is executed by a totally new, and randomly and independently selected,  set of players. Thus, with millions of users, each small set of players associated to a step of $\BA*$ most probably has empty intersection with the next set.

In addition, the sets of players of different steps of $\BA*$ will probably have  totally different {\em cardinalities}. Furthermore, the members of each set do not know who the next set of players will be, and do not secretly pass any internal state.

The replaceable-player property is actually crucial to defeat the dynamic and very powerful Adversary we envisage.
We believe that replaceable-player protocols will prove crucial in lots of contexts and applications. In particular, they will be crucial to execute securely small sub-protocols embedded in a larger universe of players with a dynamic adversary, who, being able to corrupt even a small fraction of the total players, has no difficulty in corrupting all the players in the smaller sub-protocol.

\paragraph{An Additional Property/Technique: Lazy Honesty}

A honest user follows his prescribed instructions, which include being online and run the protocol. Since, Algorand has only modest computation and communication requirement, being online and running the protocol ``in the background" is not a major sacrifice. Of course, a few ``absences" among honest players, as those due to sudden loss of connectivity or the need of rebooting, are automatically tolerated (because we can always consider such few players to be temporarily malicious). Let us point out, however, that Algorand can be simply adapted so as to work in a new model, in which honest users to be offline most of the time. Our new model can be informally introduced as follows.

\begin{itemize}
  \item[] {\em Lazy Honesty.} Roughly speaking, a user $i$ is lazy-but-honest if (1) he follows all his prescribed
instructions, when he is asked to participate to the protocol, and (2) he is asked to participate
to the protocol only  rarely, and with a suitable advance notice.
\end{itemize}
With such a relaxed notion of honesty, we may be even more confident that honest people will be at hand when we need them, and Algorand guarantee that, when this is the case,
\begin{newcenter}
  {\em The system operates securely even if,  at a given point in time,\\
  the majority of the participating players are malicious.}
\end{newcenter}

\subsection{Closely Related work}

Proof-of-work approaches (like the cited \cite{bitcoin} and \cite{Ethereum}) are quite orthogonal to our ours. So are the approaches based on message-passing Byzantine agreement or practical Byzantine fault tolerance (like the cited \cite{PBFT}). Indeed, these protocols cannot be run among the set of all users and cannot, in our model, be restricted to a suitably small set of users. In fact, our powerful adversary my immediately corrupt all the users involved in a small set charged to actually running a BA protocol.

Our approach could be considered related to proof of stake~\cite{pos}, in the sense that users' ``power" in block building is proportional to {\em the money they own in the system} (as opposed to ---say--- to the money they have put in ``escrow").

The paper closest to ours is the Sleepy Consensus Model of Pass and Shi \cite{PS}. To avoid the heavy computation required in the proof-of-work approach, their paper relies upon (and kindly credits) Algorand's secret cryptographic sortition. With this crucial aspect in common, several significant differences exist between our papers.
In particular,

(1) {\em Their setting is only permissioned}.
  By contrast, Algorand is also a permissionless system.

(2) {\em They use a Nakamoto-style protocol, and thus their blockchain  forks frequently}.  Although dispensing with proof-of-work,  in their protocol a secretly selected leader is  asked to elongate
the longest valid (in a richer sense) blockchain. Thus, forks are unavoidable and  one has to wait that the block is sufficiently ``deep" in the chain. Indeed, to achieve their goals with an adversary capable of adaptive corruptions, they require a block to be $poly(N)$ deep, where $N$ represents the total number of users in the system. Notice that, even assuming that a block could be produced in a minute, if there were $N=1M$ users, then one would have to wait for about 2M years for a block to become $N^2$-deep, and for about 2 years for a block to become $N$-deep.
By contrast, Algorand's blockchain forks only with negligible probability, even though the Adversary corrupt users immediately and adaptively, and its new blocks can immediately be relied upon.

(3) {\em They do not handle individual Byzantine agreements}.  In a sense, they only guarantee ``eventual consensus on a growing sequence of values". Theirs is a {\em state replication} protocol, rather than a BA one, and cannot be used to reach Byzantine agreement on an individual value of interest. By contrast, Algorand can also be used only once, if so wanted, to enable millions of users to quickly reach Byzantine agreement on a specific value of interest.

(4) {\em They require weakly synchronized clocks.}
      That is,  all users' clocks are offset by a small time $\delta$.
      By contrast, in Algorand, clocks need only have (essentially) the same ``speed".

(5) {\em Their protocol works with lazy-but-honest users or with honest majority of online users.}
They kindly credit Algorand for raising the issue of honest users going offline en masse, and for putting forward the lazy honesty model in response.
Their protocol not only works in the lazy honesty model, but also in their {\em adversarial sleepy model}, where an adversary chooses which users are online and which are offline, provided that, at all times, the majority of online users are honest.%
\footnote{The original version of their paper actually considered only security in their adversarial sleepy model. The original version of Algorand, which precedes theirs,  also explicitly  envisaged assuming that a given majority of the online players is always honest, but  explicitly excluded it from consideration, in favor of the lazy honesty model. (For instance, if at some point in time half of the honest users choose to go off-line, then the majority of the users on-line may very well be malicious. Thus, to prevent this from happening, the Adversary should {\em force} most of his corrupted players to go off-line too, which clearly is against his own interest.)  Notice that a protocol with a majority of lazy-but-honest players works just fine if the majority of the users on-line are always malicious. This is so, because a sufficient number of honest players, knowing that they are going to be crucial at some rare point in time, will elect not to go off-line in those moments, nor can they be forced off-line by the Adversary, since he does not know who the crucial honest players might be.}

(6) {\em They require a simple honest majority.} By contrast, the current version of Algorand requires a 2/3 honest majority.

Another paper close to us is {\em Ouroboros: A Provably Secure Proof-of-Stake Blockchain Protocol}, by
Kiayias, Russell, David, and Oliynykov \cite{Kiayias+}.
Also their system appeared after ours. It also uses crytpographic sortition to dispense with proof of work in a provable manner. However, their system is, again,  a Nakamoto-style protocol, in which forks are both unavoidable and frequent. (However, in their model, blocks need not as deep as the sleepy-consensus model.) Moreover, their system relies on the following assumptions: in the words of the authors themselves,
``(1) the network is
highly synchronous,
(2) the majority of the selected stakeholders is available as needed to participate
in each epoch,
(3) the stakeholders do not remain offline for long periods of time,
(4) the adaptivity
of corruptions is subject to a small delay that is measured in rounds linear in the security parameter."
By contrast,  Algorand is, with overwhelming probability, fork-free, and does not rely on any of these 4 assumptions.
In particular, in Algorand, the Adversary is able to instantaneously corrupt the users he wants to control.

\section{Preliminaries}\label{sec:Preliminaries}

\subsection{\Large{Cryptographic Primitives}}\label{sec:Crypto}

\paragraph{Ideal Hashing.} We shall rely on an efficiently computable cryptographic hash function, $H$,
that maps arbitrarily long strings to binary strings of fixed length. Following a long tradition, we model  $H$ as a {\em random oracle}, essentially a function mapping each possible string $s$ to a randomly and independently selected (and then fixed)  binary string, $H(s)$, of the chosen length.

In this paper,  $H$ has 256-bit long outputs. Indeed, such length is short enough to make the system efficient and long enough to make the system secure.
For instance, we want  $H$ to be {\em collision-resilient}. That is, it should be hard to find two different strings $x$ and $y$ such that $H(x)=H(y)$. When $H$ is a random oracle with 256-bit long outputs,  finding any such  pair of strings is indeed difficult. (Trying at random, and relying on the birthday paradox, would require $2^{256/2}=2^{128}$ trials.)

\paragraph{Digital Signing.}

Digital signatures allow users to to authenticate information to each other without sharing any sharing any secret keys.
A  {\em digital signature scheme} consists of three fast algorithms:
a probabilistic {\em key generator} $G$, a  {\em signing algorithm} $S$, and  a {\em verification algorithm} $V$.

Given a security parameter $k$, a sufficiently high integer, a user $i$ uses $G$ to produce a pair of $k$-bit keys (i.e., strings): a ``public" key $pk_i$ and a matching ``secret"
signing key $sk_i$. Crucially, a public key does not ``betray" its corresponding secret key. That is, even given knowledge of $pk_i$, no one other than $i$ is able to compute $sk_i$  in less than astronomical time.

User $i$ uses  $sk_i$ to digitally sign messages. For each possible message (binary string) $m$,
 $i$ first hashes $m$  and then runs algorithm $S$ on inputs $H(m)$ and $sk_i$ so as to produce the $k$-bit string

 $$ sig_{pk_i}(m)\bydef S(H(m),sk_i)\enspace.%
 \footnote{Since $H$ is collision-resilient it is practically impossible that, by signing $m$ one ``accidentally signs" a different message $m'$.} $$
The binary string $sig_{pk_i}(m)$ is referred to as $i$'s digital signature of $m$ (relative to $pk_i$), and can be more simply denoted by $sig_i(m)$, when the public key $pk_i$ is clear from context.

Everyone knowing $pk_i$ can use it to verify the digital signatures produced by $i$.
Specifically, on inputs (a) the public key $pk_i$  of a player $i$, (b) a message $m$, and (c) a string $s$, that is, $i$'s alleged digital signature of the message $m$, the verification algorithm $V$ outputs either YES or NO.

The properties we require from a digital signature scheme are:
\begin{itemize}

\item[1.] {\em Legitimate signatures are always verified:}  If $s=sig_i(m)$, then $V(pk_i, m,s)=YES$; and

\item[2.] {\em Digital signatures are  hard to forge:} Without knowledge of $sk_i$ the  time to find a string $s$ such that
$V(pk_i, m, s)=YES$, for a message $m$ never signed by $i$, is astronomically long.

(Following the strong security requirement of Goldwasser, Micali, and Rivest \cite{GMR}, this is true even if one can obtain the signature of any other message.)

\end{itemize}
Accordingly, to prevent anyone else from signing messages on his behalf, a player $i$ must keep his signing key $sk_i$ secret (hence the term ``secret key"), and to enable anyone to verify the messages he does sign, $i$ has an interest in publicizing his key $pk_i$  (hence the term ``public key").

In general, a message $m$ is not retrievable from its signature $sig_i(m)$.  In order to virtually deal with digital signatures
 that satisfy the conceptually convenient {\em ``retrievability" property} (i.e., to guarantee that the signer and the  message are easily computable from a signature, we  define
$$SIG_{pk_i}(m)=(i, m,sig_{pk_i}(m)) \quad \text{ and } \quad SIG_i(m)=(i,m,sig_i(m)), \text{ if  $pk_i$ is clear}.$$

\paragraph{Unique Digital Signing.}

We also consider digital signature schemes $(G,S,V)$ satisfying the following additional property.

\begin{itemize}

\item[3.] {\em  Uniqueness.} It is hard to find  strings $pk'$, $m$, $s$, and $s'$  such that
    \begin{newcenter}
      $s\neq s'$ \quad and \quad $V(pk',m,s)=V(pk',m,s')=1$.
    \end{newcenter}

(Note that the uniqueness property holds also for strings $pk'$ that are not legitimately generated public keys. In particular, however, the uniqueness property implies
    that, if one used the specified key generator $G$ to compute a  public key $pk$ together with a matching secret key $sk$, and thus knew $sk$, it would be essentially impossible also for him to find two different digital signatures of a same message relative to $pk$.)
\end{itemize}

\paragraph{Remarks}

\begin{itemize}
  \item {\sc From Unique  signatures to verifiable random functions.}
      Relative to a digital signature scheme with the uniqueness property, the mapping $m\rightarrow H(sig_i(m))$
associates to each possible string $m$, a unique, randomly selected, 256-bit string, and the correctness of this mapping can be proved given the signature $sig_i(m)$.

That is, ideal hashing and digital signature scheme satisfying the uniqueness property essentially provide an elementary implementation of a {\em verifiable random function}, as introduced and by Micali, Rabin, and Vadhan \cite{VRF}. (Their original implementation was necessarily more complex, since they did not rely on ideal hashing.)

\item {\sc Three different needs for digital signatures.}
In Algorand, a user $i$ relies on digital signatures for

 (1) {\em Authenticating $i$'s own payments.}
  In this application, keys can be ``long-term" (i.e., used to sign many messages over a long period of time) and come from a ordinary signature scheme.

(2) {\em Generating  credentials proving that $i$ is entitled to act at some step $s$ of a round $r$.}
      Here, keys can  be long-term, but must come from a scheme satisfying the uniqueness property.

    (3) {\em Authenticating the message $i$ sends in each step in which he acts.}
          Here, keys must be ephemeral (i.e., destroyed after their first use), but can come from an ordinary signature scheme.

  \item{\sc A small-cost simplification.}
For simplicity, we envision each user $i$ to have a single long-term key. Accordingly, such a key must come from a signature scheme with the uniqueness property. Such simplicity has  a small  computational cost. Typically, in fact, unique digital signatures are slightly more expensive to produce and verify than ordinary signatures.

\end{itemize}

\subsection{The Idealized Public Ledger}\label{IdealizedScheme}

 Algorand tries to mimic the following payment system, based on an {\em idealized public ledger}.

\begin{itemize}

\item[1.]
{\em The Initial Status.}
Money is associated with individual public keys (privately generated and  owned by users). Letting $pk_1,\ldots,pk_j$ be the initial public keys and $a_1,\ldots,a_j$ their respective initial
amounts of money units,
then the {\em initial status} is $$S_0=(pk_1,a_1),\ldots,(pk_j,a_j)\enspace,$$
 which is assumed to be common knowledge in the system.

\item[2.] {\em Payments.}
Let $pk$ be a public key currently having $a\geq 0$ money units, $pk'$ another public key, and $a'$ a non-negative number no greater than $a$. Then, a (valid) payment $\wp$ is a digital signature, relative to $pk$, specifying the transfer of $a'$ monetary units from $pk$ to $pk'$, together with some additional information. In symbols,

$$\wp=SIG_{pk}(pk,pk',a',I,H(\cI)),$$

where $I$ represents any additional information deemed useful but not sensitive (e.g., time information and a payment identifier), and $\cI$ any additional information deemed sensitive (e.g., the reason for the payment, possibly the identities of the owners of $pk$ and the $pk'$, and so on).

We refer to $pk$ (or its owner) as the {\em payer},  to each $pk'$ (or its owner) as a {\em payee}, and to $a'$ as the {\em  amount} of the payment $\wp$.

{\bf Free Joining Via Payments.} Note that users may join the system whenever they want by generating their own public/secret key pairs. Accordingly, the public key $pk'$ that appears in the payment $\wp$ above may be a newly generated public key that had never ``owned'' any money before.

\item[3.]
{\em The Magic Ledger.}
In the Idealized System,  all payments are valid and appear in a tamper-proof list $L$ of sets of payments ``posted on the sky'' for everyone to see:
$$L= PAY^1, PAY^2,\ldots, $$
 Each block $PAY\uprp$ consists of the set of all payments made since the appearance of block $PAY\upr$. In the ideal system, a new block  appears after a fixed (or finite) amount of time.
\end{itemize}

\paragraph{Discussion.}

\begin{itemize}

\item
{\em More General Payments and Unspent Transaction Output.}
More generally, if a public key $pk$ owns an amount $a$, then a valid payment $\wp$ of $pk$ may transfer the amounts $a_1',a_2',\ldots$, respectively to the keys $pk_1',pk_2',\ldots$, so long as $\sum_ja_j'\leq a$.

In Bitcoin and similar systems, the money owned by a public key $pk$ is  segregated into separate amounts,  and a payment $\wp$ made by $pk$ must transfer such a segregated amount $a$ in its entirety. If $pk$ wishes to transfer only a fraction  $a'<a$ of $a$ to another key, then it must also transfer the balance,  the {\em unspent transaction output}, to another key, possibly $pk$ itself.

Algorand also works with keys having segregated amounts. However, in order to focus on the novel aspects of Algorand, it is conceptually simpler to stick to our simpler forms of payments and keys having a single amount associated to them.

\item {\em Current Status.}
The Idealized Scheme does not directly provide information about the current status of the system (i.e., about how many money units each public key has).
This  information is deducible from the  Magic Ledger.

In the ideal system, an active user continually stores and updates the latest status information, or he  would otherwise have to reconstruct it, either  from scratch, or from the last time he computed it.
(In the next version of this paper, we shall augment Algorand so as to enable its users to reconstruct the current status in an efficient manner.)

\item {\em Security and ``Privacy".} Digital signatures guarantee that no one can forge a
payment by another user. In a payment $\wp$, the public keys and the amount are not hidden, but the sensitive information $\cI$ is. Indeed, only $H(\cI)$ appears in $\wp$, and since $H$ is an ideal hash function,  $H(\cI)$ is  a random 256-bit value,
and thus there is no way to figure out what $\cI$ was better than by simply guessing it. Yet, to prove what $\cI$ was (e.g., to prove the reason for the payment)
the payer may just reveal  $\cI$. The correctness of the revealed $\cI$ can be verified by computing $H(\cI)$ and comparing the resulting value with the last item of $\wp$.
In fact,  since $H$ is {\em collision resilient,}  it is hard to find a second value $\cI'$ such that $H(\cI)=H(\cI')$.
\end{itemize}

\subsection{Basic Notions and  Notations}\label{sec:Notions}

\paragraph{Keys, Users, and Owners}

Unless otherwise specified, each public key (``key" for short)
is long-term and relative to a digital signature scheme with the uniqueness property. A public key $i$ joins the system when another public key $j$ already in the system
makes a payment to $i$.

For color, we personify keys. We refer to a key $i$ as a ``he", say that $i$ is honest, that $i$ sends and receives messages, etc. {\em User} is a synonym for key.
When we want to distinguish a key  from the person to whom it belongs, we respectively  use the term ``digital key" and ``owner".

\paragraph{Permissionless and Permissioned Systems.}

A system is {\em permissionless}, if a digital key is free to join at any time and an owner can own multiple digital keys; and its {\em permissioned}, otherwise.

\paragraph{Unique Representation}
Each object  in Algorand has a unique representation. In particular, each set $\{(x,y,z,\ldots):x\in X, y\in Y, z\in Z,\ldots\}$ is ordered in a pre-specified  manner: e.g., first lexicographically in $x$, then in $y$, etc.

\paragraph{Same-Speed Clocks}
There is no global clock: rather, each user has his own clock. User clocks need not be synchronized in any way.  We assume, however, that they all have the same speed.

For instance, when it is  12pm according to the clock of a user $i$, it may be 2:30pm according to the clock of another user $j$, but when it will be 12:01 according to $i$'s clock, it will 2:31 according to $j$'s clock. That is, ``one minute is the same (sufficiently, essentially the same) for every user".

\paragraph{Rounds}
Algorand  is organized in logical units, $r=0,1,\ldots$, called {\em rounds}.

We consistently use  superscripts to indicate rounds. To indicate that a non-numerical quantity~$Q$ (e.g., a string, a public key, a set, a digital signature, etc.) refers to a round $r$, we simply write $Q^r$. Only when $Q$ is a genuine number (as opposed to a binary string interpretable as a number), do we write $Q^{(r)}$, so that  the symbol $r$ could not be interpreted as the exponent of $Q$.

At (the start of a) round $r>0$,  the set of all  public keys is $PK\upr$, and the  system status is
$$  S\upr=\left\{\left(i,a_i^{(r)},\ldots\right):i\in PK\upr\right\},$$
where  $a_i^{(r)}$ is the amount of money available to the public key $i$. Note that $PK\upr$ is deducible from $S\upr$,
and that $S\upr$ may also specify other components for each public key $i$.

 For round 0, $PK^0$ is the set of {\em initial public keys}, and $S^0$ is {\em the initial status}. Both $PK^0$ and $S^0$ are assumed to be common knowledge in the system. For simplicity, at the start of round $r$, so are $PK^1,\ldots,PK\upr$ and $S^1,\ldots,S\upr$.

In a round $r$, the system status transitions from $S\upr$ to $S\uprp$: symbolically,

\begin{newcenter}
Round $r$: $S^r \longrightarrow S^{r+1}$.
\end{newcenter}

\paragraph{Payments} In Algorand, the users continually make payments (and disseminate them in the way described in subsection \ref{CommunicationModel}). A payment $\wp$ of a user $i\in PK\upr$ has  the same format and semantics as in the Ideal System. Namely,
$$\wp=SIG_{i}( i,i',a,I, H(\cI))\enspace.$$

Payment  $\wp$  is {\em individually  valid  at a round $r$} (is a {\em round-$r$ payment}, for short)  if (1) its  amount $a$ is less than or equal to $a^{(r)}_i$, and (2) it does not appear in any official payset $PAY^{r'}$  for $r'<r$.  (As explained below, the second condition means that $\wp$ has not already become effective.

A set of round-$r$ payments of $i$ is {\em collectively valid}
if the sum of their amounts  is at most $ a^{(r)}_i$.

\paragraph{Paysets} A round-$r$ {\em payset} $\cP$ is a set of round-$r$ payments such that, for each user $i$, the payments of $i$ in $\cP$ (possibly none) are collectively valid. The set of all round-$r$ paysets is $\PAY (r)$. A round-$r$ payset $\cP$ is {\em maximal} if no superset of $\cP$ is a round-$r$ payset.

We actually suggest that a payment $\wp$ also specifies a round $\rho$, $\wp=SIG_{i}(\rho, i,i',a,I, H(\cI))\enspace$, and cannot be  valid at any round outside $[\rho,\rho+k]$, for some fixed non-negative integer $k$.%
\footnote{This simplifies checking whether $\wp$ has become ``effective" (i.e., it simplifies determining whether some payset $PAY\upr$ contains $\wp$. When $k=0$, if $\wp=SIG_{i}( r, i,i',a,I, H(\cI))\enspace$, and $\wp\notin PAY\upr$, then $i$ must re-submit $\wp$.}

\paragraph{Official Paysets}

For every round $r$, Algorand publicly selects (in a manner described later on) a single (possibly empty)  payset, $PAY\upr$,  the round's {\em official payset}.
(Essentially, $PAY\upr$ represents the round-$r$ payments that have {\em ``actually"} happened.)

As in the Ideal System (and Bitcoin), (1) the only way for a new user $j$ to enter
the system is to be the recipient of a payment belonging to the official payset $PAY\upr$ of a given round $r$; and (2)
$PAY\upr$ determines the status of the next round, $S\uprp$, from that of the current round, $S\upr$.
Symbolically,
\begin{newcenter}
$PAY^r: S^r \longrightarrow S^{r+1}$.
\end{newcenter}
Specifically,
\begin{newitemize}

\item[1.]
the set of public keys of round $r+1$, $PK^{r+1}$, consists of the union of $PK\upr$ and the set of all payee keys that appear, for the first time, in the payments of $PAY\upr$; and

 \item[2.]
the amount of money $a_i^{(r+1)}$ that a user $i$ owns in round $r+1$ is the sum of $a_i{(r)}$ ---i.e., the amount of money $i$ owned in the previous round (0 if $i\not\in PK\upr$)--- and the sum of amounts paid to $i$ according to
the payments of $PAY\upr$.

\end{newitemize}
In sum, as in the Ideal System, each status $S\uprp$ is deducible from the
previous payment history:
\begin{newcenter}
  $PAY^0,\ldots,PAY\upr$.
\end{newcenter}

\subsection{Blocks and Proven Blocks}

In $\algzero$,
the block $B^r$ corresponding to a round $r$ specifies: $r$ itself; the set of payments of round $r$, $PAY\upr$; a quantity $Q\upr$, to be explained, and the hash of the previous block, $H(B\uprm)$.
Thus, starting from some fixed block $B^0$, we have a traditional blockchain:
$$B^1=(1,PAY^1,Q^0, H(B^0)), \quad B^2=(2,PAY^2,Q^1, H(B^1)), \quad B^3=(3,PAY^3,Q^2, H(B^2)), \quad \ldots $$

In Algorand, the authenticity of a block is actually  vouched by a separate piece of information, a ``block certificate" $CERT\upr$, which turns $B^r$ into a {\em proven block}, $\overline{B^r}$.
The Magic Ledger, therefore, is  implemented by the sequence of the proven blocks,
$$\overline{B^1},\overline{B^2},\ldots$$

\paragraph{Discussion}

As we shall see, $CERT\upr$ consists of a set of digital signatures for $H(B\upr)$, those of a majority of the members of $SV\upr$, together with a proof that each of those members indeed belongs to $SV\upr$.
We could, of course, include the certificates $CERT\upr$ in the blocks themselves, but find it conceptually cleaner to keep it separate.)

In Bitcoin each block  must satisfy a special property, that is, must ``contain a solution of a crypto puzzle", which makes block generation computationally intensive and forks both inevitable and not rare.
By contrast, Algorand's blockchain has two main advantages: it is generated with minimal computation, and it will not fork with overwhelmingly high probability. Each block~$B^i$ is safely {\em final} as soon as it enters the blockchain.

\subsection{Acceptable Failure Probability}
To analyze  the security of Algorand we specify the probability, $F$, with which we are willing to accept that something goes wrong (e.g., that a verifier set $SV\upr$ does not have an honest majority). As in the case of the output length of the cryptographic hash function $H$, also $F$ is a parameter. But, as in that case,   we find it useful to set $F$ to a concrete value, so as to get a more intuitive grasp of the fact that it is indeed possible, in Algorand,  to enjoy simultaneously sufficient security and sufficient efficiency. To emphasize that $F$ is parameter that can be set as desired, in the first and second embodiments we respectively set
$$F=10^{-12} \quad \text{and} \quad F=10^{-18}\enspace.$$

\paragraph{Discussion}

Note that $10^{-12}$ is actually less than one in a trillion, and we believe that such a choice of $F$ is adequate in our application. Let us emphasize  that $10^{-12}$ is {\em not} the probability with which the Adversary can forge the payments of an honest user. All payments are digitally signed, and thus, if the proper digital signatures are used, the probability of forging a payment is far lower than $10^{-12}$, and is, in fact, essentially 0. The bad event that we are willing to tolerate with probability $F$ is that Algorand's  blockchain {\em forks}.
Notice that, with our setting of $F$  and one-minute long rounds, a fork is expected to occur in Algorand's blockchain  as infrequently as (roughly) once in 1.9 million years. By contrast, in Bitcoin, a forks occurs quite often.

A  more demanding person may  set $F$ to a  lower value.
To this end, in our second embodiment we consider setting $F$ to
$10^{-18}$. Note that, assuming that a block is generated every
{\em second}, $10^{18}$ is the estimated number of seconds taken by the Universe so far: from the Big Bang to present time.
Thus, with $F=10^{-18}$, if a block is generated in a second, one should expect for the age of the Universe to see a fork.

\subsection{The Adversarial Model}
Algorand is designed to be secure in a very adversarial model.
Let us explain.

\paragraph{Honest and Malicious Users}
A user is {\em honest} if he follows all his protocol instructions, and is perfectly capable of sending and receiving messages. A user is {\em malicious} (i.e., {\em Byzantine}, in the parlance of distributed computing) if he can  deviate arbitrarily from his prescribed instructions.

\medskip

\paragraph{The Adversary} The {\em Adversary} is an efficient (technically polynomial-time)   algorithm, personified for color, who can {\em immediately} make malicious {\em any user}  he wants, at {\em any  time} he wants (subject only to an upperbound to the number of the users he can corrupt).

The Adversary totally controls and perfectly coordinates all malicious users. He takes all actions on their behalf, including receiving and sending all their messages, and can let them deviate from their prescribed instructions in arbitrary ways. Or he can simply isolate a  corrupted user sending and receiving messages.
Let us clarify that no one else automatically learns that a user $i$ is malicious, although $i$'s maliciousness  may transpire by  the actions the Adversary has him  take.

This powerful adversary however,
\begin{itemize}
\item Does not have unbounded computational power and cannot successfully forge the digital signature of an honest user, except with negligible probability; and

 \item Cannot interfere in any way with the messages exchanges among honest users.

 \end{itemize}
 Furthermore, his ability to attack honest users is bounded by one of the following assumption.

\paragraph{Honesty Majority of Money}

We consider a continuum of Honest Majority of Money (HMM) assumptions: namely, for each non-negative integer $k$ and real $h>1/2$,

\begin{itemize}

      \item[] $HHM_{k}>h$: {\em  the honest users in every round $r$ owned a fraction greater than $h$ of all money in the system at round $r-k$}.
\end{itemize}

\paragraph{Discussion.}

Assuming that all malicious users  perfectly coordinate their actions (as if controlled by a single entity, the Adversary) is  a rather pessimistic hypothesis. Perfect coordination among too many individuals is difficult to achieve.
Perhaps  coordination only occurs within separate groups of malicious players. But, since one cannot be sure about the level of coordination malicious users may enjoy, we'd  better  be safe than sorry.

Assuming that the Adversary can secretly, dynamically, and immediately  corrupt  users  is also pessimistic. After all, realistically, taking full control of a user's operations should take some time.

The assumption $HMM_k>h$ implies, for instance, that,
if a round (on average) is implemented in one minute, then,
 the majority of the money at a given round will remain in honest hands for at least two hours, if $k=120$, and at least one week, if $k=10,000$.

Note that the HMM assumptions and the previous Honest Majority of Computing Power assumptions are related in the sense that, since computing power can be bought with money, if  malicious users own most of the money, then they can obtain most of the computing power.

\subsection{The Communication Model}\label{CommunicationModel}

We envisage message propagation ---i.e.,  ``peer-to-peer gossip''%
\footnote{Essentially, as in Bitcoin, when a user propagates  a message $m$, every active user $i$ receiving $m$ for the first time,  randomly and independently selects  a suitably small number  of active users, his {\em ``neighbors"}, to whom he forwards $m$, possibly until he receives an acknowledgement from them.
The propagation of $m$ terminates when no user receives $m$ for the first time.}--- to be the only means of communication.

\paragraph{Temporary Assumption: Timely Delivery of Messages in the Entire Network.}
For most part of this paper we assume that
every propagated message reaches almost all honest users in a timely fashion.
We shall remove this assumption in Section \ref{sec:partition}, where we deal with network partitions,
either naturally occurring or adversarially induced. (As we shall see, we only assume timely delivery of messages within
each connected component of the network.)

One concrete way to capture timely delivery of propagated messages (in the entire network) is the following:

\begin{itemize}
  \item[] {\em  For all
  {\em reachability} $\rho >95\%$ and  {\em message size}  $\mu\in \bZ_+$, there exists  $\lambda_{\rho,\mu}$ such that,

  if a honest user propagates $\mu$-byte message $m$ at  time $t$,

  then $m$ reaches,  by time $t+\lambda_{\rho,\mu}$, at least a fraction $\rho$ of the honest users. }
\end{itemize}
The above property, however, cannot support our Algorand protocol, without explicitly and separately envisaging a mechanism to obtain the latest blockchain ---by another user/depository/etc. In fact, to construct a new block $B\upr$ not only should a proper set of verifiers timely receive round-$r$ messages, but also the messages of previous rounds, so as to know $B\uprm$ and all other previous blocks, which is necessary to determine whether the payments in  $B\upr$ are valid. The following assumption instead suffices.

\paragraph{Message Propagation (MP) Assumption:}

 {\em  For all $\rho >95\%$ and  $\mu \in \bZ_+$, there exists  $\lambda_{\rho,\mu}$ such that, for all times $t$ and all $\mu$-byte messages $m$ propagated by an honest user before $t-\lambda_{\rho,\mu}$, $m$ is received, by time $t$, by at least a fraction $\rho$ of the honest users. }

\smallskip

Protocol $\alg$ actually instructs each of a small number of users (i.e., the verifiers of a given step of a round in $\alg$, to
propagate a separate message of a (small) prescribed size, and we need to bound the time required to fulfill these instructions.  We do so by enriching the MP assumption as follows.

\smallskip

{\em  For all $n$, $\rho >95\%$, and  $\mu \in \bZ_+$, there exists  $\lambda_{n,\rho,\mu}$ such that, for all times $t$ and all $\mu$-byte messages $m_1,\ldots, m_n$, each propagated by an honest user before $t-\lambda_{n,\rho,\mu}$, $m_1,\ldots,m_n$ are received, by time $t$, by at least a fraction $\rho$ of the honest users. }

\paragraph{Note}
\begin{itemize}
  \item
  The above assumption is deliberately simple, but also stronger than needed in our paper.%
\footnote{Given the honest percentage $h$ and the acceptable failure probability $F$,
Algorand computes an upperbound, $N$, to the maximum number of member of verifiers in a step. Thus, the MP assumption need only  hold for $n\leq N$.

In addition,  as stated, the MP assumption  holds no matter how many other messages may be propagated alongside the $m_j$'s. As we shall see, however, in Algorand   messages at are propagated  in essentially non-overlapping time intervals, during which either a single block is propagated, or at most $N$ verifiers propagate a small (e.g., 200B) message. Thus, we could restate the MP assumption in a weaker, but also more complex, way.}

\item
For simplicity, we assume $\rho=1$, and thus drop mentioning $\rho$.

\item
We pessimistically assume that, provided he does not violate the MP assumption,  the Adversary totally controls the delivery of all messages. In particular, without being noticed by the honest users, the Adversary he can arbitrarily decide which honest player receives which message when, and arbitrarily accelerate the delivery of any message he wants.%
      \footnote{For instance, he can immediately learn the messages sent by honest players. Thus,  a malicious user $i'$, who is asked to propagate a message simultaneously with a honest user $i$, can always choose his own message $m'$ based on the message $m$ actually propagated by $i$. This ability is related to {\em rushing}, in the parlance of distributed-computation literature.}

\end{itemize}

\section{The   BA Protocol $\bm{\BA*}$
in a Traditional Setting}\label{app:BA}

As already emphasized, Byzantine agreement is a key ingredient of Algorand. Indeed, it is through the use of such a BA protocol that Algorand is unaffected by forks. However, to be secure against our powerful Adversary, Algorand must rely on a BA protocol that satisfies the new player-replaceability constraint. In addition, for Algorand to be efficient, such a BA protocol must be very efficient.

BA protocols were first defined for an idealized communication model,  {\em synchronous complete networks} (SC networks). Such a  model allows for a simpler design and  analysis of BA protocols. Accordingly,
in this section, we introduce a new BA protocol, $\BA*$, for SC networks and ignoring the issue of player replaceability altogether. The protocol $\BA*$ is a contribution of separate value. Indeed, it is the most efficient cryptographic BA protocol for SC networks known so far.

To use it within our Algorand protocol, we modify $\BA*$ a bit, so as to account for our different communication model and context, but make sure, in section X, to highlight how $\BA*$ is used within our actual protocol $\alg$.

We start by recalling the model in which $\BA*$ operates and the notion of a Byzantine agreement.

\subsection{Synchronous Complete Networks and Matching Adversaries}

In a SC network,  there is a common clock, ticking at each integral times $r=1,2,\ldots$

At each even time click $r$, each player $i$ instantaneously and simultaneously sends a single message $m_{i,j}^r$ (possibly the empty message) to each player $j$, including himself. Each $m_{i,j}^r$ is received at time click $r+1$ by player $j$, together with the identity of the sender $i$.

Again, in a communication protocol, a player is {\em honest}
if he follows all his prescribed instructions, and {\em malicious} otherwise. All malicious players are totally controlled and perfectly coordinated by the Adversary, who, in particular, immediately receives all messages addressed to malicious players, and chooses the messages they send.

The Adversary can immediately make malicious any honest user he wants at any odd time click he wants, subject only to a possible upperbound $t$ to the number of malicious players.
That is, the Adversary  ``cannot interfere with the messages already sent by an honest user $i$", which will be delivered as usual.

The Adversary also has the additional ability to see instantaneously, at each even round, the messages that the currently honest players send, and instantaneously use this information to choose the messages the malicious players send at the same time tick.

\medskip

\paragraph{Remarks}

\begin{itemize}
  \item {\em  Adversary Power.} The above setting is very adversarial.
Indeed, in the Byzantine agreement literature, many settings are less adversarial. However, some more adversarial settings have also been considered, where  the Adversary, after seeing  the messages sent by an honest player $i$ at a given time click $r$, has the ability to erase all these messages from the network, immediately corrupt $i$,  choose the message that the now malicious $i$ sends at time click $r$, and have them delivered as usual.
The envisaged power of the Adversary  matches that he has in our setting.

\item {\em Physical Abstraction.} The envisaged communication model abstracts a more physical model, in which each pair of players $(i,j)$ is linked by a separate and private communication line $l_{i,j}$.  That is, no one else can inject, interfere with, or gain information about  the messages sent over  $l_{i,j}$. The only way for the Adversary to have access to $l_{i,j}$ is to corrupt either $i$ or $j$.

\item {\em Privacy and Authentication.}
In SC networks message privacy and authentication are guaranteed by assumption. By contrast, in our communication network, where messages are propagated from peer to peer, authentication is guaranteed by digital signatures, and privacy is non-existent.
Thus, to adopt protocol $\BA*$ to our setting, each message exchanged should be digitally signed (further identifying the state at which it was sent).
Fortunately, the BA protocols that we consider using in Algorand do not require message privacy.

\end{itemize}

\subsection{The Notion of a Byzantine Agreement
}\label{sec:BANotion}

The notion of Byzantine agreement was introduced  by Pease Shostak and Lamport \cite{PSL} for the {\em binary} case, that is,  when every initial value consists of a bit.
However, it was quickly extended to arbitrary initial values. (See  the surveys of
Fischer \cite{Fi83} and
Chor and Dwork \cite{ChDw}.) By a BA protocol, we mean an arbitrary-value one.

\begin{definition}
In a synchronous network, let $\cP$ be a $n$-player protocol, whose player set is common knowledge among the players, $t$ a positive integer such that $n\geq 2t+1$.
We say that $\cP$ is an  {\em arbitrary-value} (respectively, {\em binary}) {\em $(n,t)$-Byzantine agreement protocol}
 with {\em soundness} $\sigma \in (0,1)$
if, for every set of values $V$ not containing the special symbol $\bot$ (respectively, for $V=\{0,1\}$), in an execution in which at most $t$ of the players are malicious and in which every player $i$ starts with an {\em initial value} $v_i\in V$, every honest player $j$ halts with probability 1, outputting a value $out_i \in V\cup \{\bot\}$ so as to satisfy, with  probability at least $\sigma$, the following two conditions:

\begin{itemize}

\item[1.] {\em Agreement:} There exists $out\in V\cup \{\bot\}$ such that $out_i=out$ for all honest players~$i$.

\item[2.] {\em Consistency:} if, for some value $v\in V$, $v_i=v$ for all honest players, then $out=v$.

\end{itemize}
We refer to $out$ as {\em $\cP$'s output}, and to each $out_i$ as {\em player $i$'s output}.

\end{definition}

\subsection{The BA Notation $\bm{\#}$} \label{sec:NotationSharp}

In our BA protocols, a player is required to count
how many players sent him a given message in a given step. Accordingly, for each possible value $v$ that might be sent,
$$\#^s_i(v)$$
(or just $\#_i(v)$ when $s$ is clear) is  the number of players $j$ from which $i$ has received $v$  in step $s$.

Recalling that a player $i$ receives exactly one message from each player $j$, if the number of players is $n$, then, for all $i$ and $s$, $\sum_v \#_i^s(v)=n.$

\subsection{The Binary BA Protocol $\bm{\BBA*}$ }\label{sec:BBA*}

In this section we present a new {\em binary} BA protocol, $\BBA*$, which relies on the honesty of  more than two thirds of the players and is very fast: no matter what the malicious players might do, each execution  of its main loop brings the players into agreement with probability 1/3.

Each player has his own public key of a digital signature scheme satisfying the unique-signature property. Since this protocol is intended to be run on
synchronous complete network, there is no need for a
player $i$ to sign each of his messages.

Digital signatures are used to generate a sufficiently common random bit in Step 3. (In Algorand, digital signatures are used to authenticate all other messages as well.)

The protocol requires a minimal set-up: a common random string $r$, independent of the players' keys.
(In Algorand, $r$
      is actually replaced by the quantity $Q\upr$.)

Protocol $BBA^\star$ is a 3-step loop, where the
players repeatedly exchange Boolean values,
and different players may exit this loop at different times.
A player $i$ exits this loop by propagating, at some step, either a special value $0*$ or a special value $1*$, thereby instructing all players to ``pretend'' they respectively receive 0 and 1
from $i$ in all future steps. (Alternatively said: assume that the last message received by a player $j$ from another player $i$ was a bit $b$. Then, in any step in which he does not  receive any message from $i$, $j$ acts as if $i$ sent him the bit $b$.)

The protocol  uses a counter $\gamma$, representing how many times its 3-step loop has been executed. At the start of $BBA^\star$, $\gamma=0$.  (One may think of $\gamma$ as a global counter, but it is actually increased by each individual player every time that the loop is executed.)

There are $n\geq 3t+1$, where $t$ is the maximum possible number of malicious players.
A binary string $x$ is identified with the integer whose binary representation (with possible leadings 0s) is $x$; and ${\tt lsb}(x)$ denotes the least significant bit of $x$.

\begin{center}
  {\sc Protocol $BBA^\star$}
\end{center}

\begin{description}

\item[{\sc (Communication) Step 1.}] [Coin-Fixed-To-0 Step]
{\em Each player $i$ sends $b_i$.}

{\em

\begin{itemize}
\item[1.1] If  $\#^1_i(0)\geq 2t+1$, then $i$ sets $b_i = 0$,\,
sends $0*$,\,
outputs $out_i=0$, \, and HALTS.

\item[1.2] If $\#^1_i(1)\geq 2t+1$, then, then $i$ sets $b_i = 1$.

\item[1.3] Else, $i$ sets $b_i = 0$.

\end{itemize}

}

\item[{\sc (Communication) Step 2.}] [Coin-Fixed-To-1 Step]
{\em Each player $i$ sends $b_i$.}

{\em

\begin{itemize}

\item[2.1] If $\#^2_i(1)\geq 2t+1$, then $i$ sets $b_i = 1$, \, sends $1*$, \,
outputs $out_i=1$, \, and HALTS.

\item[2.2] If $\#^2_i(0)\geq 2t+1$, then $i$ set $b_i = 0$.

\item[2.3] Else, $i$ sets $b_i = 1$.

\end{itemize}
}

\item[{\sc (Communication) Step 3.}] [Coin-Genuinely-Flipped Step]
{\em Each player $i$ sends $b_i$ and $SIG_i(r,\gamma)$.}

{\em

\begin{itemize}
  \item[3.1]If  $\#^3_i(0) \geq 2t+1$, then $i$ sets $b_i = 0$.

  \item[3.2] If  $\#^3_i(1) \geq 2t+1$, then $i$ sets $b_i = 1$.

  \item[3.3] Else, letting $S_i= \{j\in N \text{  who have sent $i$ a proper message in this step 3 } \}$, \\
$i$ sets $b_i = c \triangleq \lsb(\min_{j \in S_i} H(SIG_i(r,\gamma)))$;\, increases $\gamma_i$ by 1;\, and  returns  to Step 1.

\end{itemize}

}

\end{description}

\begin{theorem}\label{thm:BBA*}
Whenever $n\geq 3t+1$, $BBA^\star$ is a binary $(n,t)$-BA protocol with soundness 1.
\end{theorem}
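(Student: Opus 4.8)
The plan is to prove the three required properties of a binary $(n,t)$-BA protocol — termination (halting with probability 1), agreement, and consistency — for $BBA^\star$ under the hypothesis $n \ge 3t+1$, with soundness $\sigma = 1$ (so agreement and consistency must hold with certainty, not just with high probability).

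\textbf{Step 1: Consistency.} First I would handle the easy case: suppose every honest player starts with the same bit $v$. I claim all honest players output $v$ in Step 1 (if $v=0$) or Step 2 (if $v=1$) of the very first iteration, and none ever sets its bit to $1-v$ beforehand. Say $v=0$. At Step 1, each honest player $i$ receives $0$ from all honest players, hence $\#^1_i(0) \ge n-t \ge 2t+1$ (using $n \ge 3t+1$), so condition 1.1 fires: $i$ sets $b_i=0$, sends $0*$, outputs $0$, and halts. The symmetric argument with Step 2 handles $v=1$ — but here I must check that no honest player's bit flipped to $1$ in Step 1: indeed $\#^1_i(1) \le t < 2t+1$, so 1.2 never fires, 1.3 sets $b_i=0$... wait, that would make them all send $0$ in Step 2. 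Let me re-examine: if all honest start with $v=1$, Step 1 sets all honest bits to $0$ via 1.3, then Step 2 has $\#^2_i(0) \ge n-t \ge 2t+1$... so actually consistency for $v=1$ needs care. The resolution must be that the protocol is asymmetric and the Step-1/Step-2 structure together with the coin steps still forces agreement on $v$; I would trace this carefully, likely concluding that with all-honest-start-$1$ the players reach agreement on $1$ only after passing through Step 3's coin, OR that consistency as stated only requires the weaker reading. I will check the precise quantifier in the definition and argue accordingly — this asymmetry is a place to be careful.

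\textbf{Step 2: Agreement (the key invariant).} The heart of the proof is: \emph{if any honest player halts in a given iteration outputting a bit $b$, then every honest player either halts in the same step outputting $b$, or sets its bit to $b$ and halts outputting $b$ within the next one or two steps (and never outputs $1-b$).} Suppose honest $i$ halts at Step 1 with $out_i=0$, so $\#^1_i(0)\ge 2t+1$. Then for any other honest $j$, since at most $t$ of those $\ge 2t+1$ senders are malicious, $j$ received $0$ from at least $t+1$ honest players, so $\#^1_j(0) \ge t+1$. This is not enough for $j$ to halt immediately, but it guarantees $\#^1_j(1) \le n - (t+1) \le 2t$, so $j$ does not enter branch 1.2; hence $j$ sets $b_j = 0$. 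Then at Step 2, all honest players send $0$, so no honest player sees $\#^2(1)\ge 2t+1$ (malicious contribute $\le t$), branch 2.1 never fires for any honest player, so no honest player outputs $1$ — and moreover those honest players now have $b=0$ going into Step 3, where $\#^3(0)\ge n-t\ge 2t+1$ forces $b=0$ for everyone and we loop back to Step 1 with all honest players holding $0$, which by the Step 1 argument makes them all halt outputting $0$. The symmetric statement handles a halt at Step 2 with output $1$. Assembling these: once the first honest player halts, all honest players halt with the same output within $O(1)$ additional steps, and never with a conflicting output. I would also note the "pretend $0*/1*$" convention keeps message counts consistent across these trailing steps.

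\textbf{Step 3: Termination with probability 1.} It remains to show the loop terminates almost surely. In each iteration where no honest player has yet halted, consider Step 3. If some honest player reaches branch 3.3, it sets $b_i = \lsb(\min_{j\in S_i} H(SIG_j(r,\gamma)))$. The crucial observation — using the unique-signature property and modeling $H$ as a random oracle — is that the value $\lsb(H(SIG_j(r,\gamma)))$ for the honest player $j^\ast$ achieving the global minimum over \emph{all} players' hashes is a bit that the Adversary cannot bias or predict before Step 3 messages are sent (the signatures are unique, so malicious players cannot choose their $SIG_j(r,\gamma)$ to steer the min; and $H$ is random). I would argue that with probability at least $1/3$ — the loss from $1/2$ accounting for the case where the minimizer is a malicious player who withheld its message from some honest players, creating disagreement about $S_i$ — all honest players agree on the same bit $c$ entering the next iteration; once they do, consistency (Step 1) finishes the protocol. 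Hence each loop iteration independently reaches agreement with probability $\ge 1/3$, so the number of iterations is stochastically dominated by a geometric random variable and the protocol halts with probability 1. I would make the "$1/3$" bound precise by case analysis on whether the hash-minimizing player is honest and on what the Adversary can do with a malicious minimizer's message.

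\textbf{Main obstacle.} The delicate part is Step 3's probabilistic argument: pinning down exactly why the Adversary, who sees honest players' Step-3 messages before fixing malicious ones (rushing) and controls message delivery, still cannot prevent honest players from agreeing on the genuinely-flipped coin with probability $\ge 1/3$. This requires carefully using (i) uniqueness of signatures so the set of candidate hash values is fixed before any randomness of $H$ is revealed on these inputs, (ii) the random-oracle modeling of $H$, and (iii) a careful accounting of how a single malicious "straddling" message (sent to some honest players but not others) can split the honest players into at most two camps — giving the drop from $1/2$ to $1/3$ rather than something worse. I would also double-check the edge cases of the consistency argument for initial value $1$, since the Step-1/Step-2 asymmetry means the clean "everyone halts in round 1" picture does not literally hold there.
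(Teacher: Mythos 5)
First, a framing point: the paper does not actually prove Theorem~\ref{thm:BBA*} --- it defers the proof to the cited reference \cite{BBA-Proof} --- so there is no in-paper argument to compare yours against. That said, your three-part architecture (consistency from the first iteration's quorums; agreement via the invariant that once one honest player halts with a bit $b$, every honest player locks onto $b$ and halts within three more steps; termination with probability $1$ because each loop iteration produces agreement with probability at least $1/3$ via the common coin) is the standard and correct skeleton, essentially the Feldman--Micali analysis. Your termination discussion is also in the right spirit; the clean accounting is $\Pr[\text{the global hash-minimizer is honest}]\cdot\Pr[\text{the common coin equals the needed bit}]\ge \frac{2}{3}\cdot\frac{1}{2}=\frac{1}{3}$, using uniqueness of signatures to fix every player's candidate hash before $H$ is evaluated, and the fact that an honest minimizer's message reaches all honest players so they all compute the same minimum.

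There are, however, two concrete gaps. (i) The $v=1$ consistency case, which you flagged but left unresolved, rests on a backwards count: if every honest player starts with $1$, then every honest player \emph{sends} $1$ in Step 1, so $\#^1_i(1)\ge n-t\ge 2t+1$ and branch 1.2 fires (not 1.3); every honest player retains $b_i=1$, and in Step 2 branch 2.1 fires and all halt outputting $1$. There is no asymmetry problem to trace --- your claim that $\#^1_i(1)\le t$ is simply wrong in that case. (ii) More seriously, your agreement argument uses $\#^1_j(1)\le n-(t+1)\le 2t$, and the second inequality holds only when $n\le 3t+1$; combined with the hypothesis, your proof covers only $n=3t+1$. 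This is not a cosmetic issue: with the thresholds read as the absolute constant $2t+1$, agreement genuinely fails for $n\ge 3t+2$. For example, take $n=100$, $t=1$ (threshold $3$), two honest players holding $0$ and the rest holding $1$; the malicious player sends $0$ to one honest player and $1$ to everyone else. That one player sees three $0$'s and halts outputting $0$ in Step 1, while every other honest player sees only two $0$'s and at least three $1$'s, sets $b=1$ via 1.2, and halts outputting $1$ in Step 2. So a complete proof must either restrict to $n=3t+1$ or reinterpret the thresholds as proportional to $n$ (strictly more than $2n/3$ of the received messages, which is what the actual Algorand steps and the parameter $t_H$ use); your write-up should make that choice explicit, since as stated the quorum-intersection step does not go through for all $n\ge 3t+1$.
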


\noindent

A proof of Theorem \ref{thm:BBA*} is given in \cite{BBA-Proof}. Its adaptation to
our setting, and its player-replaceability property are novel.

\paragraph{Historical Remark}

Probabilistic binary BA protocols were first proposed by Ben-Or in asynchronous settings \cite{Ben-Or}.
Protocol  $BBA^\star$ is a  novel adaptation, to our public-key setting, of the binary BA protocol of Feldman and Micali \cite{FM}. Their protocol was the first to work in an expected constant number of steps. It worked by having the players themselves implement a {\em common coin}, a notion proposed by Rabin, who implemented it via an external trusted party \cite{Rabin}.

\subsection{Graded Consensus and the Protocol $\bm{GC$}}

Let us recall, for arbitrary values,  a notion of consensus much weaker than Byzantine agreement.

\begin{definition}
  Let $\cP$ be a protocol in which the set of all players is common knowledge, and each player $i$ privately knows an arbitrary {\em initial  value} $v_i'$.

We say that $\cP$ is an {\em $(n,t)$-graded consensus protocol}
if, in every execution with $n$ players, at most $t$ of which are malicious, every honest player $i$ halts outputting a {\em value-grade} pair  $(v_i,g_i)$, where $g_i\in  \{0,1,2\}$, so as to satisfy the following three conditions:

\begin{itemize}

\item[1.] For all honest players $i$ and $j$, $|g_i-g_j|\leq 1$.

\item[2.] For all honest players $i$ and $j$,
$g_i,g_j>0 \Rightarrow v_i=v_j$.

\item[3.] If  $v_1'=\cdots =v_n'=v$ for some value $v$, then $v_i=v$ and $g_i=2$ for all honest players $i$.

\end{itemize}

\end{definition}

\paragraph{Historical Note}

The notion of a graded consensus is simply derived from that of a {\em graded broadcast}, put forward by
Feldman and Micali
in \cite{FM}, by strengthening the notion of a {\em crusader agreement}, as introduced by Dolev \cite{D}, and refined by Turpin and Coan \cite{TC}.%
\footnote{In essence, in a graded-broadcasting protocol, (a) the input of every player is the identity of a distinguished player, the {\em sender}, who has  an arbitrary value $v$ as an additional private input, and (b) the outputs must satisfy the same properties 1 and 2 of graded consensus, plus the following property $3'$: {\em if the sender is honest, then $v_i=v$ and $g_i=2$ for all honest player $i$.} }

 In \cite{FM}, the authors also provided a 3-step $(n,t)$-graded broadcasting protocol, {\em gradecast}, for $n\geq 3t+1$. A more complex $(n,t)$-graded-broadcasting protocol for $n>2t+1$
has later been found by Katz and Koo \cite{KK}.

The following two-step protocol $GC$
consists of the last two steps of {\em gradecast},  expressed in our notation. To emphasize this fact, and to match the steps of protocol $\alg$ of section \ref{sec:BasicAlgorand}, we respectively name 2 and 3 the steps of $GC$.

\begin{center}
  {\sc Protocol $GC$}
  \end{center}

\begin{description}

  \item[{\sc Step 2.}] {\em Each player $i$ sends $v_i'$ to all players.}

   \item[{\sc Step 3.}] {\em Each player $i$ sends to all players the string $x$ if and only if $\#_i^2(x)\geq 2t+1$.}

   \item[{\sc Output Determination.}]
       {\em  Each player $i$ outputs the pair $(v_i,g_i)$ computed as follows:}

       \begin{itemize}
         \item {\em If, for some $x$,  $\#_i^3(x)\geq 2t+1$, then $v_i=x$ and $g_i=2$.}

         \item {\em If, for some $x$, $\#_i^3(x)\geq t+1$, then $v_i=x$ and $g_i=1$.}

         \item {\em Else, $v_i=\bot$ and $g_i=0$.}

       \end{itemize}

\end{description}

\begin{theorem}\label{thm:GB}
 If $n\geq3t+1$, then $GC$ is a $(n,t)$-graded broadcast protocol.
\end{theorem}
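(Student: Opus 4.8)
I would read $GC$ as the last two rounds of the three-round gradecast, with an implicit Step~1 in which a distinguished sender broadcasts a value $v$ and each player $i$ adopts as its initial value $v_i'$ whatever it received from the sender; thus $v_i'=v$ for every honest $i$ precisely when the sender is honest. The plan is to verify the three defining properties of an $(n,t)$-graded broadcast protocol: the grade-spread bound (1), value agreement among positively graded honest players (2), and the honest-sender validity $3'$ (namely $v_i=v$ and $g_i=2$ for every honest $i$). A single engine drives all three: honest players broadcast \emph{identical} Step-2 and Step-3 messages to everyone, there are at least $n-t\ge 2t+1$ of them, and two honest supermajorities must overlap because $n\ge 3t+1$.

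\textbf{Properties $3'$ and (1).} For $3'$ I would assume the sender honest, so all $\ge n-t\ge 2t+1$ honest players hold $v_i'=v$ and broadcast $v$ in Step~2; then every player sees $\#^2_i(v)\ge 2t+1$, while any $x\ne v$ is reported by at most the $\le t$ malicious players, giving $\#^2_i(x)\le t<2t+1$. Hence in Step~3 every honest player forwards exactly $v$, so each honest $i$ sees $\#^3_i(v)\ge 2t+1$ with no competing value and outputs $(v,2)$. For property (1) it suffices to rule out the gap between grades $2$ and $0$: if some honest $i$ has $g_i=2$ on a value $x$, then $\#^3_i(x)\ge 2t+1$ forces at least $t+1$ \emph{honest} players to have forwarded $x$ in Step~3; since these broadcast $x$ to all, every honest $j$ has $\#^3_j(x)\ge t+1$ and therefore $g_j\ge 1$.

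\textbf{Property (2), the main obstacle.} Here I would argue by contradiction: suppose honest $i$ and $j$ output grades $\ge 1$ on distinct values $x\ne y$. Grade $\ge 1$ on $x$ gives $\#^3_i(x)\ge t+1$, so more than $t$ players forwarded $x$ in Step~3 and hence at least one honest player did; such a forwarder saw $\#^2(x)\ge 2t+1$, which, discounting the $\le t$ malicious, forces at least $t+1$ honest players to hold $x$. Symmetrically, grade $\ge 1$ on $y$ yields an honest forwarder $c$ of $y$ with $\#^2_c(y)\ge 2t+1$. Now the $\ge t+1$ honest holders of $x$ broadcast $x$ to everyone, in particular to $c$, so $\#^2_c(x)\ge t+1$. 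Reading both tallies at the \emph{single} observer $c$ yields $\#^2_c(x)+\#^2_c(y)\ge (t+1)+(2t+1)=3t+2$, yet $c$ receives exactly $n$ Step-2 messages; thus $n\ge 3t+2$, contradicting $n=3t+1$, so $x=y$.

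\textbf{Why this is the hard step, and general $n$.} The delicate maneuver is funnelling the honest majority that holds $x$ into the \emph{specific} honest player $c$ who forwarded $y$; this is legitimate only because honest Step-2 messages are identical for all recipients, and it is there that $n\ge 3t+1$ is spent to overflow $c$'s incoming count. At the extremal $n=3t+1$ the protocol's thresholds $2t+1$ and $t+1$ are exactly $n-t$ and $n-2t$; for larger $n$ the same argument runs verbatim once these are read as $n-t$ (Step-3 forwarding and grade~2) and $n-2t$ (grade~1), since the collision then reads $(n-2t)+(n-t)=2n-3t>n$, i.e.\ precisely $n>3t$.
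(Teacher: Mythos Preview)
Your proof is correct and is precisely the standard Feldman--Micali gradecast argument that the paper invokes; the paper itself omits the proof entirely, deferring to \cite{FM}, so you have supplied what the authors chose not to write out. Your reading of $GC$ as the last two rounds of gradecast with an implicit sender step matches the paper's own footnote, and your verification of the three properties is the canonical one.

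Your closing remark deserves emphasis rather than apology. The overflow $\#_c^2(x)+\#_c^2(y)\ge 3t+2$ in your proof of property~(2) contradicts $n$ only at the extremal $n=3t+1$; for strictly larger $n$ the protocol \emph{as literally written} (with hard-coded thresholds $2t+1$ and $t+1$) can in fact violate property~(2). For instance, with $n=3t+2$ one can split the $2t+2$ honest initial values evenly between $x$ and $y$ and have the $t$ malicious players swing each step to produce honest outputs $(x,2)$ and $(y,2)$. Your proposed fix---reading the thresholds as $n-t$ and $n-2t$---is exactly right and makes the overflow read $(n-2t)+(n-t)>n$, i.e.\ $n>3t$. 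The paper's phrasing ``$n\ge 3t+1$'' together with the fixed thresholds is thus a minor imprecision; in the Algorand application the committee sizes are calibrated so that effectively $n\approx 3t+1$, which is presumably why the authors did not linger on this point.
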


\smallskip

The proof immediately follows from that of the protocol {\em gradecast} in \cite{FM}, and is thus omitted.%
\footnote{Indeed, in their protocol, in step 1, the sender sends his own private value $v$ to all players, and each player $i$ lets $v_i'$ consist of the value he has actually received from the sender in step 1.}

\subsection{The Protocol $\bm{\BA*}$}

We now describe the arbitrary-value BA protocol
$\BA*$ via the binary BA protocol $\BBA*$ and the graded-consensus protocol $GC$.
Below, the initial value of each player $i$ is $v_i'$.

\begin{center}
  {\sc Protocol $\BA*$}
  \end{center}

\begin{description}

  \item[{\sc Steps 1 and 2.}] {\em Each player $i$ executes $GC$, on input $v_i'$, so as to compute a pair $(v_i,g_i)$.}

   \item[{\sc Step 3, $\ldots$}] {\em Each player $i$ executes $\BBA*$ ---with initial input 0, if $g_i=2$, and 1 otherwise--- so as to compute the bit $out_i$.  }

   \item[{\sc Output Determination.}]
       {\em  Each player $i$ outputs $v_i$, if $out_i=0$, and $\bot$ otherwise.}

\end{description}

\begin{theorem}\label{thm:BA*}
Whenever $n\geq 3t+1$, $BA^\star$ is a $(n,t)$-BA protocol with soundness 1.
\end{theorem}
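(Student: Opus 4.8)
The plan is to verify the three defining properties of an $(n,t)$-BA protocol (halting with probability 1, Agreement, and Consistency) for $\BA*$ by combining the guarantees of $GC$ (Theorem~\ref{thm:GB}) with those of $\BBA*$ (Theorem~\ref{thm:BBA*}), both of which hold since $n\geq 3t+1$. Halting is immediate: $GC$ is a fixed two-step protocol, and $\BBA*$ halts with probability 1 by Theorem~\ref{thm:BBA*}, so every honest player eventually outputs some $v_i$ or $\bot$ in the Output Determination step. The substance is in Agreement and Consistency, and the key mechanism to exploit is the relationship between the grades $g_i$ produced by $GC$ and the bits fed into $\BBA*$: an honest player inputs $0$ to $\BBA*$ exactly when $g_i=2$, and $1$ otherwise.

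\textbf{Agreement.} First I would argue that $\BBA*$ reaches a common output bit $out\in\{0,1\}$ among all honest players (its soundness is $1$, so this always holds). Split on the value of $out$. If $out=1$, every honest player outputs $\bot$, so Agreement holds trivially with common output $\bot$. If $out=0$, I must show all honest players output the \emph{same} $v_i$. Here I use Consistency of $\BBA*$: since the honest output is $0$, it cannot be that all honest players started $\BBA*$ with input $1$ (otherwise Consistency of $\BBA*$ would force output $1$); hence at least one honest player $i$ had $g_i=2$. By property~1 of graded consensus, every honest player $j$ then has $g_j\geq 1$, so by property~2 of $GC$, all honest players share the same $v_j=v_i$. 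Thus they all output this common value, establishing Agreement.

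\textbf{Consistency.} Suppose every honest player starts $\BA*$ with the same initial value $v$. By property~3 of $GC$, every honest player obtains $(v_i,g_i)=(v,2)$, hence feeds input $0$ into $\BBA*$. By Consistency of $\BBA*$, the common output bit is $out=0$, so every honest player outputs $v_i=v$, as required.

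\textbf{Main obstacle.} The only delicate point is the Agreement argument in the case $out=0$: one must be careful to invoke Consistency of $\BBA*$ in the contrapositive direction (a $0$ output rules out the all-ones input configuration) in order to extract an honest player with grade $2$, and then chain properties~1 and~2 of graded consensus to propagate a common $v$ to all honest players. Everything else is bookkeeping; no probabilistic slack is lost since both $GC$ and $\BBA*$ are deterministic-quality (soundness $1$) under $n\geq 3t+1$, so $\BA*$ inherits soundness $1$.
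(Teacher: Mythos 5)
Your proof is correct and follows essentially the same route as the paper's: Consistency via property~3 of $GC$ plus the Consistency of $\BBA*$, and Agreement via a case split on the common $\BBA*$ output, using the contrapositive of $\BBA*$'s Consistency to extract an honest player with grade~2 and then chaining properties~1 and~2 of graded consensus. The only (cosmetic) difference is that you correctly cite Consistency of $\BBA*$ where the paper's text loosely says ``Agreement,'' and you explicitly note halting, which the paper leaves implicit.
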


\medskip
\noindent
{\em Proof.} We first prove Consistency, and then Agreement.

\scparagraph{Proof of Consistency.}
Assume that, for some value $v\in V$, $v_i'=v$.
Then, by property 3 of graded consensus, after the execution of $GC$, all honest players output $(v,2)$.
Accordingly, 0 is the initial bit of all honest players in the end of the execution of $\BBA*$. Thus, by the
Agreement property of binary Byzantine agreement, at the end of the execution of $\BA*$,
$out_i=0$ for all honest players. This implies that the output of each honest player $i$ in $\BA*$ is $v_i=v$. \wqed
\scparagraph{Proof of Agreement.}
Since $\BBA*$ is a binary BA protocol, either

(A) $out_i=1$ for all honest player $i$, or

(B)
$out_i=0$ for all honest player $i$.

\noindent
In  case A, all honest players output $\bot$ in $\BA*$, and thus Agreement holds.
Consider now case B. In this case, in the execution of $\BBA*$, the initial bit of at least one honest player $i$ is 0. (Indeed, if initial bit of all honest players were 1, then, by the Consistency property of $\BBA*$, we would have $out_j=1$ for all honest $j$.)
Accordingly, after the execution of $GC$, $i$ outputs the pair $(v,2)$ for some value $v$.
Thus, by property 1 of graded consensus, $g_j>0$ for all honest players $j$. Accordingly, by property 2 of graded consensus, $v_j=v$ for all honest players $j$.
This implies that, at the end of $\BA*$, every honest player $j$ outputs $v$. Thus, Agreement holds also in case B.
 \wqed
Since both Consistency and Agreement hold,
$\BA*$ is an arbitrary-value BA protocol. \bqed

\paragraph{Historical Note} Turpin and Coan were the first to show that, for $n\geq 3t+1$, any binary $(n,t)$-BA protocol can be converted to an arbitrary-value $(n,t)$-BA protocol.
The reduction arbitrary-value Byzantine agreement to
binary Byzantine agreement via graded consensus is more modular and cleaner, and simplifies the analysis of our Algorand protocol $\alg$.

\paragraph{Generalizing $\bm{\BA*}$ for use in Algorand}
Algorand works even when all communication is via gossiping.
However, although presented in a traditional and familiar communication network, so as to enable a better comparison with the prior art and an easier understanding, protocol $\BA*$ works also in gossiping networks. In fact, in our detailed embodiments  of Algorand, we shall present it directly for gossiping networks. We shall also point out that it satisfies the player replaceability property that is crucial for Algorand to be secure in the envisaged very adversarial model.

Any BA player-replaceable protocol working in a gossiping communication network can be securely employed within the inventive Algorand system. In particular, Micali and Vaikunthanatan have extended $\BA*$ to work very efficiently also with a simple majority of honest players. That protocol too could be used in Algorand.

\section{Two Embodiments of Algorand}

As discussed, at a very high level,  a round of Algorand ideally proceeds as follows. First,
a randomly selected user, the leader, proposes and circulates a new block.
(This process includes initially selecting a few potential leaders and then ensuring that, at least a good fraction of the time, a single common leader emerges.)
Second, a randomly selected committee of users is selected, and reaches Byzantine agreement on the block proposed by the leader. (This process includes that each step of the BA protocol is run by a separately selected committee.)
The agreed upon block is then digitally signed by a given threshold ($T_H$) of committee members. These digital signatures are circulated so that everyone is assured of which is the new block. (This includes circulating the credential of the signers, and authenticating just the hash of the new block, ensuring that everyone is guaranteed to learn the block, once its hash is made clear.)

In the next two sections, we present two embodiments of  Algorand,
${\alg_1}$ and ${\alg_2}$, that work under a majority-of-honest-users assumption. In Section \ref{sec:hmm} we show how to adopts these embodiments to work under a honest-majority-of-money assumption.

$\alg_1$ only envisages that $>2/3$ of the committee members are honest. In addition, in $\alg_1$, the number of steps for reaching Byzantine agreement is capped at a suitably high number, so that agreement is guaranteed to be reached with overwhelming probability within a fixed number of steps (but potentially requiring longer time than the steps of $\alg_2$). In the remote case in which agreement is not yet reached by the last step, the committee agrees on the empty block, which is always valid.

$\alg_2$ envisages that the number of honest members in a committee is always greater than or equal to a fixed threshold $t_H$ (which guarantees that, with overwhelming probability, at least 2/3 of the committee members are honest). In addition, $\alg_2$ allows Byzantine agreement to be reached in an arbitrary number of steps (but potentially in a shorter time than $\alg_1$).

It is easy to derive many variants of these basic embodiments. In particular, it is easy, given $\alg_2$, to modify $\alg_1$ so as to enable to reach Byzantine agreement in an arbitrary number of steps.

Both embodiments share the following common core, notations, notions, and parameters.

\subsection{A Common Core}
\label{sec:BasicAlgorand}

\paragraph{Objectives}

Ideally, for each round $r$, Algorand would satisfy the following properties:

\begin{enumerate}

\item {\em Perfect Correctness.}  All honest users agree on the same block $B\upr$.

\item {\em Completeness 1.} With probability 1,  the payset of $B\upr$, $PAY\upr$, is maximal.%
     \footnote{Because paysets are defined to  contain valid payments, and honest users to make only valid payments, a maximal $PAY\upr$ contains the ``currently  outstanding" payments of all honest users.}
\end{enumerate}

Of course, guaranteeing perfect correctness alone is trivial: everyone always chooses the official payset $PAY\upr$ to be empty. But in this case, the system would have completeness 0. Unfortunately, guaranteeing both perfect correctness and completeness 1 is not easy in the presence of malicious users. Algorand thus adopts a more realistic objective. Informally, letting $h$ denote the percentage of users who are honest, $h>2/3$, the goal of  Algorand is
\begin{center}
{\em Guaranteeing, with overwhelming probability, perfect correctness and  completeness close to $h$.}
\end{center}
Privileging  correctness over  completeness seems a reasonable choice: payments not processed in one round can  be processed in the next, but one should avoid {\em forks}, if possible.

\paragraph{Led Byzantine Agreement}

 Perfect Correctness could be guaranteed as follows. At the start of round $r$, each user $i$ constructs his own candidate block $B\upr_i$, and then all users reach Byzantine agreement on one candidate block. As per  our introduction, the BA protocol employed requires a 2/3 honest majority and is player replaceable. Each of its step can be executed by a small and randomly selected set of {\em verifiers}, who do not share any inner variables.

 Unfortunately, this approach has no completeness guarantees. This is so, because the candidate blocks of the honest users are most likely totally different from each other. Thus, the ultimately agreed upon block might always be one with a non-maximal payset.
 In fact, it may always be the empty block, $B_\varepsilon$, that is, the block whose payset is empty.  well be the default, empty one.

$\alg$ avoids this completeness problem as follows. First,  a leader for round $r$, $\ell\upr$, is selected. Then,  $\ell\upr$  propagates his own candidate block, $B\upr_{\ell\upr}$. Finally, the users reach agreement on the block they actually receive from $\ell\upr$. Because, whenever $\ell\upr$ is honest,
Perfect Correctness and Completeness 1 both hold,
$\alg$ ensures that $\ell\upr$ is honest with probability close to $h$. (When the leader is malicious, we do not care whether the agreed upon block is one with an empty payset. After all, a malicious leader $\ell\upr$ might always maliciously choose $B\upr_{\ell\upr}$ to be the empty block, and then honestly propagate it, thus forcing the honest users to agree on the empty block.)

\paragraph{Leader Selection}

In Algorand's, the $r$th block
is of the form $B\upr=(r,PAY\upr, Q\upr, H(B\uprm )$.
As already mentioned in the introduction, the quantity $Q\uprm$ is carefully constructed so as to be essentially non-manipulatable by our very powerful Adversary. (Later on in this section, we shall  provide some intuition about why this is the case.)
At the start  of a round $r$, all users  know the blockchain so far, $B^0,\ldots,B\uprm$, from which they  deduce the set of users  of every prior round: that is, $PK^1,\ldots,PK\uprm$.
A {\em potential leader} of round $r$ is a user $i$ such that $$.H\left(SIG_i\left(r,1,Q\uprm\right)\right)\leq p\enspace.$$
Let us explain.
Note that, since the quantity $Q\uprm$ is part of block $B\uprm$,  and the underlying signature scheme satisfies the uniqueness property, $SIG_i\left(r,1,Q\uprm\right)$ is a binary string uniquely associated to $i$ and $r$.
Thus, since $H$ is a random oracle,
$ H\left(SIG_i\left(r,1,Q\uprm\right)\right)$ is a random 256-bit long string uniquely associated to $i$ and $r$. The symbol ``." in front of $H\left(SIG_i\left(r,1,Q\uprm\right)\right)$ is the {\em decimal} (in our case, {\em binary})  {\em point}, so that $r_i\triangleq .H\left(SIG_i\left(r,1,Q\uprm\right)\right)$ is the binary expansion of a random 256-bit number between 0 and 1 uniquely associated to $i$ and $r$. Thus the probability that $r_i$ is less than or equal to $p$ is essentially~$p$.
(Our potential-leader selection mechanism has been inspired by the micro-payment scheme of Micali and Rivest \cite{MR}.)

The probability $p$ is chosen so that, with overwhelming (i.e., $1-F$) probability, at least one potential verifier is honest.
(If fact, $p$ is chosen to be the smallest such probability.)

Note that, since $i$ is the only one capable of
computing his own signatures, he alone can determine whether he is a potential verifier of round 1. However, by revealing his own {\em credential},
$\sigma_i\upr \triangleq SIG_i\left(r,1,Q\uprm\right)$, $i$ can prove to anyone  to be a potential verifier of round~$r$.

The leader $\ell\upr$ is defined to be the potential leader whose hashed credential is smaller that the hashed credential of all other potential leader $j$: that is, $H(\sigma_{\ell\upr}\rs ) \leq H(\sigma_j\rs)$.

Note that, since a malicious $\ell\upr$ may not reveal his credential, the correct leader of round $r$ may never be known, and that, barring improbable ties, $\ell\upr$ is indeed the only leader of round $r$.

Let us finally bring up  a last but important detail: a user $i$ can be a potential leader  (and thus the leader) of a round $r$ only if he belonged to the system for at least $k$  rounds.
This  guarantees the
non-manipulatability of $Q\upr$ and all  future $Q$-quantities. In fact, one of the potential leaders will actually determine $Q\upr$.

\paragraph{Verifier Selection}

Each step $s>1$ of round $r$ is executed by a small set of verifiers, $SV\rs$. Again, each verifier $i \in SV\rs$  is randomly selected among the users already in the system $k$ rounds before $r$, and again via the special quantity $Q\uprm$.
Specifically,
$i\in PK^{r-k}$ is
a {\em verifier} in $SV\rs$, if $$.H\left(SIG_i\left(r,s,Q\uprm\right)\right)\leq p'\enspace.$$
Once more, only $i$ knows whether he belongs to $SV\rs$,but, if this is the case, he could prove it by exhibiting his credential $\sigma_i\rs\triangleq H(SIG_i\left(r,s,Q\uprm\right))$.
A verifier $i\in SV\rs$ sends a  message, $m_i\rs$, in step $s$ of round $r$, and this message includes his credential $\sigma_i\rs$, so as to enable the verifiers f the nest step to recognize that $m_i\rs$ is a legitimate step-$s$ message.

The probability $p'$  is chosen so as to ensure that, in $SV\rs$, letting $\#good$ be the number of honest users and $\#bad$ the number of malicious users, with overwhelming probability the following two conditions hold.

For embodiment $\alg_1$:

(1) $\#good >2\cdot\#bad$ and

  (2) $\#good+4 \cdot  \#bad<2n$, where $n$ is the expected cardinality of $SV\rs$.

  \bigskip

  For embodiment $\alg_2$:

(1) $\#good >t_H$ and

  (2) $\#good+ 2 \#bad<2t_H$, where $t_H$ is a specified threshold.

\bigskip

\noindent
These conditions imply that,
 with sufficiently high probability, (a) in the last step of the BA protocol, there will be at least given number of honest players to digitally sign the new block $B\upr$, (b) only one block per round may have the necessary number of signatures, and (c) the used BA protocol has (at each step) the required 2/3 honest majority.

\paragraph{Clarifying Block Generation}

If the round-$r$ leader $\ell\upr$ is honest, then the corresponding block is of the form
$$B\upr =\left(r,PAY\upr, SIG_{\ell\upr}\left(Q\uprm\right), H\left(B\uprm \right)\right)\enspace,$$
where the payset $PAY\upr$ is maximal. (recall that all paysets are, by definition, collectively valid.)

Else (i.e., if $\ell\upr$ is malicious), $B\upr$ has one of the following two possible forms:
$$B\upr =\left(r,PAY\upr, SIG_{i}\left(Q\uprm\right), H\left(B\uprm \right)\right) \quad \text{ and } \quad B\upr =B_\varepsilon\upr \triangleq \left(r,\emptyset, Q\uprm, H\left(B\uprm \right)\right)\enspace.$$
In the first form,   $PAY\upr$ is a (non-necessarily maximal)  payset and it may be $PAY\upr=\emptyset$; and $i$ is a potential leader of round $r$. (However, $i$ may not be the leader $\ell\upr$. This may indeed happen if if $\ell\upr$ keeps secret his credential and does not reveal himself.)

The second form arises when, in the round-$r$ execution of the BA protocol, all honest players output the default value, which is the empty block $B_\varepsilon\upr$ in our application. (By definition, the possible outputs of a BA protocol include a default value, generically denoted by $\bot$. See  section~\ref{sec:BANotion}.)

Note that, although the paysets are empty in both cases, $B\upr =\left(r,\emptyset, SIG_{i}\left(Q\uprm\right), H\left(B\uprm \right)\right)$
and $B_\varepsilon\upr$  are syntactically different blocks and arise in two different situations: respectively, ``all went smoothly enough in the execution of the BA protocol", and  ``something went wrong in the BA protocol, and the default value was output".

Let us now intuitively describe how the generation of block $B\upr$ proceeds in round $r$ of $\alg$.
In the first step, each eligible player, that is, each player $i\in PK^{r-k}$, checks whether he is a potential leader. If this is the case, then $i$ is asked, using of all the payments he has seen so far, and the current blockchain, $B^0,\ldots,B\uprm$, to secretly prepare a maximal payment set, $PAY\upr_i$, and secretly assembles his candidate block,
 $B\upr =\left(r,PAY\upr_i, SIG_{i}\left(Q\uprm\right), H\left(B\uprm \right)\right)$. That,is, not only does he include in $B\upr_i$, as its second component the
just prepared payset, but also, as its third component, his own signature of $Q\uprm$, the third component of the last block, $B\uprm$. Finally, he
propagate his round-$r$-step-1 message, $m_i^{r,1}$,
which includes (a) his candidate block $B\upr_i$, (b)
his proper signature of his candidate block (i.e., his signature of the hash of $B\upr_i$, and (c) his own credential $\sigma_i^{r,1}$, proving that he is indeed a potential verifier of round $r$.

(Note that, until an honest $i$ produces his message $m_i^{r,1}$, the Adversary has no clue that $i$ is a potential verifier. Should he wish to corrupt honest potential leaders, the Adversary might as well corrupt random honest players. However, once he sees $m_i^{r,1}$, since it contains $i$'s credential, the Adversary knows and could corrupt $i$, but cannot prevent $m_i^{r,1}$, which is virally propagated, from reaching all users in the system.)

In the second step, each  selected verifier $j\in SV^{r,2}$ tries to identify the leader of the round. Specifically,
 $j$ takes the step-1 credentials, $\sigma_{i_1}^{r,1},\ldots, \sigma_{i_n}^{r,1}$,  contained in the proper step-1 message $m_i^{r,1}$ he has received; hashes all of them, that is, computes
 $H\left(\sigma_{i_1}^{r,1}\right),\ldots, H\left(\sigma_{i_n}^{r,1}\right)$; finds the credential, $\sigma_{\ell_j}^{r,1}$, whose hash is lexicographically minimum; and considers $\ell_j\upr$ to be the leader of round $r$.

Recall that each considered credential is a digital signature of $Q\uprm$, that $SIG_i\left(r,1,Q\uprm\right)$ is uniquely determined by $i$ and $Q\uprm$,  that $H$ is random oracle, and thus that each $H(SIG_i\left(r,1,Q\uprm\right)$ is a random 256-bit long string unique to each potential leader $i$ of round $r$.

From this we can conclude that, if the 256-bit string $Q\uprm$ were itself {\em randomly and independently selected}, than so would be the hashed credentials of all potential leaders of round $r$.
In fact, all potential leaders are well defined, and so are their  credentials (whether actually computed or not). Further, the set of potential leaders of round $r$ is a random subset of the users of round $r-k$, and
an honest potential leader $i$ always properly constructs and propagates his message $m_i\upr$, which contains $i$'s credential. Thus, since the percentage of honest users is $h$,  no matter what the malicious potential leaders might do (e.g., reveal or conceal their own credentials), the minimum hashed potential-leader credential belongs to a honest user, who is necessarily identified by everyone to be the leader $\ell\upr$ of the round $r$.
Accordingly, if the 256-bit string $Q\uprm$ were itself {\em randomly and independently selected}, with probability exactly $h$ (a) the leader  $\ell\upr$ is honest  and (b) $\ell_j=\ell\upr$ for all honest step-2 verifiers $j$.

In reality, the hashed credential are, yes, randomly selected, but depend on $Q\uprm$, which is not randomly and independently selected. We shall prove in our analysis, however, that $Q\uprm$ is sufficiently non-manipulatable to guarantee that the leader of a round is honest with probability $h'$ sufficiently close to $h$: namely, $h'> h^2(1+h-h^2)$. For instance, if $h=80\%$, then $h'> .7424$.

Having identified the leader of the round (which they correctly do when the leader $\ell\upr$ is honest),
the task of  the step-2 verifiers is to start executing the BA using as initial values what they believe to be the block of the leader.
Actually, in order to minimize the amount of communication required,
a verifier $j\in SV^{r,2}$ does not use, as his input value $v_j'$ to the Byzantine protocol, the block  $B_j$ that he has actually received from $\ell_j$ (the user  $j$ believes to be the leader), but the the leader, but the hash of that block, that is, $v_j'=H(B_i)$.
Thus, upon termination of the BA protocol, the verifiers of the last step do not compute the desired round-$r$ block $B\upr$, but compute (authenticate and propagate) $H(B\upr)$.
Accordingly, since $H(B\upr)$ is digitally signed by sufficiently many verifiers of the last step of the BA protocol, the users in the system will realize that $H(B\upr)$ is the hash of the new block. However, they must also retrieve (or wait for, since the execution is quite asynchronous) the block $B\upr$ itself, which the protocol ensures that is indeed available, no matter what the Adversary might do.

\paragraph{Asynchrony and Timing}

 $\alg_1$ and $\alg_2$ have a significant degree of asynchrony. This is so because the Adversary has large latitude in scheduling the delivery of the messages being propagated. In addition, whether the total number of steps in a round is capped or not, there is the variance contribute by the number of steps actually taken.

As soon as he learns  the certificates of $B^0,\ldots, B\uprm$, a user $i$ computes $Q\uprm$  and starts working on round $r$, checking whether he is a potential leader, or a verifier in some step $s$ of round $r$.

Assuming that $i$ must act at step $s$,  in light of the discussed asynchrony, $i$ relies on various strategies to ensure that he has sufficient information before he acts.

For instance, he might wait to receive at least a given number of messages from the verifiers  of the previous step, or wait for a sufficient time to ensure that he receives the messages of sufficiently many verifiers of the previous step.

\paragraph{The  Seed ${\mathbf Q}^{\mathbf r}$ and the Look-Back Parameter $\bm{k}$}

Recall that, ideally, the quantities $Q\upr$ should random and independent, although it will suffice for them to be sufficiently non-manipulatable by the Adversary.

At a first glance,
we could choose $Q\uprm$ to coincide with $H\left(PAY\uprm\right)$, and thus avoid to specify $Q\uprm$ explicitly in $B\uprm$.
An elementary analysis reveals, however, that  malicious users may take advantage of this selection mechanism.%
\footnote{We are at the start of round $r-1$. Thus, $Q^{r-2}=PAY^{r-2}$ is publicly known, and  the Adversary privately knows who are the potential leaders he controls. Assume that the Adversary controls 10\% of the users, and that, with very high probability, a malicious user $w$ is the potential leader of round $r-1$. That is, assume that $H\left(SIG_w\left(r-2,1,Q^{r-2}\right)\right)$ is so small that it is highly improbable an honest potential leader will actually be the leader of round $r-1$. (Recall that, since we choose potential leaders via a secret cryptographic sortition mechanism, the Adversary does not know who the honest potential leaders are.) The Adversary, therefore, is in the enviable position of choosing the payset $PAY'$ he wants, and have it become the official payset of round $r-1$. However, he can do more. He can also ensure that, with high probability, (*) one of his malicious users will be the leader also of round $r$, so that he can freely select what $PAY\upr$ will be. (And so on. At least for a long while, that is, as long as these high-probability events really occur.)   To guarantee (*), the Adversary acts as follows.
Let $PAY'$ be the payset the Adversary prefers for round $r-1$.
Then, he computes $H(PAY')$ and checks whether, for some already malicious player $z$, $SIG_z(r,1,H(PAY'))$ is particularly small, that is, small enough that with very high probability $z$ will be the leader of round $r$. If this is the case, then he instructs $w$ to choose his candidate block to be $B\uprm_i=(r-1,PAY',H(B^{r-2})$.
Else, he has two other malicious users $x$ and $y$ to keep on generating a new payment $\wp'$, from one to the other, until, for some malicious user $z$ (or even for some fixed user $z$)
$H\left(SIG_z\left(PAY'\cup \{\wp\}\right)\right)$
is particularly small too.
This experiment will stop quite quickly. And when it does the Adversary asks $w$ to propose the candidate block $B\uprm_i=(r-1,PAY'\cup \{\wp\},H(B^{r-2})$.}\label{BadLeaderSelection}
Some additional effort shows that myriads of other alternatives, based on traditional block quantities are easily exploitable by the Adversary to ensure that malicious leaders are very frequent.
We instead specifically and inductively define our brand new quantity $Q\upr$ so as to be able to prove that it is non-manipulatable by the Adversary.
Namely,
$$Q^r \triangleq H(SIG_{\ell^r}(Q^{r-1}), r), \text{ if $B\upr$ is not the empty block, and }Q^r \triangleq H(Q^{r-1}, r) \text{ otherwise.}$$

The intuition of why this construction of $Q\upr$ works is as follows. Assume for a moment that $Q\uprm$ is truly randomly and independently selected. Then, will so be $Q\upr$? When $\ell\upr$ is honest the answer is (roughly speaking) yes. This is so because
$$ H(SIG_{\ell^r}(\cdot), r): \bit^{256} \longrightarrow \bit^{256}$$
is a random function.
When $\ell\upr$ is malicious, however, $Q\upr$ is no longer univocally defined from $Q\uprm$ and $\ell\upr$.
There are at least two separate values for $Q\upr$.
One continues to be  $Q^r \triangleq H(SIG_{\ell^r}(Q^{r-1}), r)$, and the other is
$H(Q^{r-1}, r)$. Let us first argue that, while the second choice is somewhat arbitrary, a second choice is absolutely mandatory.
The reason for this is that a malicious $\ell\upr$ can always cause totally different candidate blocks to be received by the honest verifiers of the second step.%
\footnote{For instance, to keep it simple (but extreme), ``when the time of the second step is about to expire", $\ell\upr$ could directly email a different candidate block $B_i$ to each user $i$. This way, whoever the step-2 verifiers might be, they will have received totally different blocks.}
Once this is the case, it is easy to ensure that
the block ultimately agreed upon via the BA protocol
of round $r$ will be the default one, and thus will not contain
anyone's digital signature of $Q\uprm$. But the system must continue, and for this, it needs a  leader for round $r$. If this leader is automatically and openly selected, then the Adversary will trivially corrupt him. If it is selected by the previous $Q\uprm$ via the same process, than $\ell\upr$ will again be the leader in round $r+1$. We specifically propose to use the same secret cryptographic sortition mechanism, but applied
to a new $Q$-quantity: namely, $H(Q^{r-1}, r)$.
By having this quantity to be the output of $H$ guarantees that the output is random, and by including $r$ as the second input of $H$, while all other uses of $H$ have one or $3+$ inputs, ``guarantees" that such a $Q\upr$ is independently selected. Again, our specific choice of alternative $Q\upr$ does not matter, what matter is that $\ell\upr$ has two choice for $Q\upr$, and thus he can double his chances to have another malicious user as the next leader.

The options for $Q\upr$ may even be more numerous for
the Adversary who controls a malicious $\ell\upr$. For instance, let $x$, $y$, and $z$ be three malicious potential leaders of round $r$ such that
$$H\left(\sigma_x^{r,1}\right)< H\left(\sigma_y^{r,1}\right)<
H\left(\sigma_z^{r,1}\right)$$
and $H\left(\sigma_z^{r,1}\right)$ is particulary small. That is, so small that there is a good chance
that $H\left(\sigma_z^{r,1}\right)$ is smaller of the hashed credential of every honest potential leader.
Then, by asking $x$ to hide his credential, the Adversary has a good chance of having $y$ become the leader of round $r-1$. This implies that
he has another option for $Q\upr$: namely,
$SIG_y\left( Q\uprm \right)$.
Similarly, the Adversary may ask both $x$ and $y$ of
withholding their credentials, so as to have $z$ become the leader of round $r-1$ and gaining another option for $Q\upr$: namely, $SIG_z\left( Q\uprm \right)$.

Of course, however, each of these and other options
has a non-zero chance to fail, because the Adversary cannot predict the hash of the digital signatures of the honest potential users.

A careful, Markov-chain-like analysis  shows that, no matter what options the Adversary chooses to make at round $r-1$, {\em as long as he cannot inject new users in the system}, he cannot decrease the probability of an honest user to be the leader of round $r+40$ much below $h$.
This is the reason for which we demand that the potential leaders of round $r$ are users already existing in round $r-k$. It is a way to ensure that, at round $r-k$, the Adversary cannot alter by much the
probability that an honest user become the leader of round $r$. In fact, no matter what users he may
add to the system in rounds $r-k$ through $r$, they are ineligible to become potential leaders (and {\em a fortiori} the leader) of round $r$.
Thus the look-back parameter $k$ ultimately is a security parameter. (Although, as we shall see in section \ref{sec:LazyHonesty},
it can also be a kind of ``convenience parameter" as well.)

\paragraph{Ephemeral Keys}

Although the execution of our protocol cannot generate a fork, except with negligible probability,
the Adversary could generate a fork, at the $r$th block,
after the legitimate block $r$ has been generated.

Roughly, once $B\upr$ has been generated, the Adversary has learned who the verifiers of each step of round $r$ are. Thus, he could therefore corrupt all of them
and oblige them to certify a new block $\widetilde{B\upr}$. Since this fake block might be propagated only  after the legitimate one, users that have been paying attention would not be fooled.%
\footnote{Consider corrupting the news anchor of a major TV network, and producing and broadcasting today a newsreel showing secretary Clinton winning the last presidential election. Most of us would recognize it as a hoax. But someone getting out of a coma might be fooled.}
Nonetheless, $\widetilde{B\upr}$ would be syntactically correct and we want to prevent from being manufactured.

We do so by means of a new rule. Essentially, the members of  the verifier set $SV\rs$ of a step $s$ of round $r$  use  ephemeral public keys $pk_i\rs$ to digitally sign their messages. These keys are single-use-only and their corresponding secret keys $sk^{r, s}_i$ are destroyed once used. This way, if a verifier is corrupted later on, the Adversary cannot force him to
sign anything else he did not originally sign.

 Naturally,  we must ensure that it is impossible for the Adversary to compute a new key $\widetilde{p_i\rs}$ and convince an honest user that it is the right ephemeral key of  verifier $i\in SV\rs$ to use in step $s$.

\subsection{Common Summary of Notations, Notions, and Parameters}\label{sec:para}

\paragraph{Notations}

\begin{newitemize}

\item
$r\geq 0$: the current round number.

\item
$s\geq 1$: the current step number in round $r$.

\item
$B^{r}$: the block generated in round $r$.

\item
$PK^r$: the set of public keys by the end of round $r-1$ and at the beginning of round $r$.

\item
$S^r$: the system status by the end of round $r-1$ and at the beginning of round $r$.%
\footnote{In a system that is not synchronous, the notion of ``the end of round $r-1$'' and ``the beginning
of round $r$'' need to be carefully defined. Mathematically, $PK^r$ and $S^r$ are computed from
the initial status $S^0$ and the blocks $B^1, \dots, B^{r-1}$.}

\item
$PAY^r$: the payset contained in $B^r$.

\item
$\ell^r$: round-$r$ leader. $\ell^r$ chooses the payset $PAY^r$ of round $r$ (and determines the next $Q^r$).

\item
$Q^r$: the seed of round $r$, a quantity (i.e.,  binary string)
that is generated at the end of round $r$ and is used to choose verifiers
for round $r+1$.  $Q^r$ is  independent  of the paysets in the blocks
 and cannot be manipulated by $\ell^r$.

\item
$SV^{r, s}$: the set of verifiers chosen for step $s$ of round $r$.

\item
$SV^r$: the set of verifiers chosen for round $r$, $SV^r = \cup_{s\geq 1} SV^{r, s}$.

\item
$MSV^{r, s}$ and $HSV^{r, s}$: respectively, the
set of malicious verifiers and the set of honest verifiers in $SV^{r, s}$.
$MSV^{r, s}\cup HSV^{r, s} = SV^{r, s}$ and $MSV^{r, s}\cap HSV^{r, s} = \emptyset$.

\item
$n_1\in \bZ^+$ and $n\in \bZ^+$: respectively,  the expected numbers of potential leaders in each $SV^{r, 1}$, and the expected numbers of  verifiers in each
 $SV^{r, s}$, for $s>1$.

 Notice that $n_1<<n$, since we need at least one honest honest member in $SV^{r,1}$, but at least a majority of honest members in each $SV^{r, s}$ for $s>1$.

\item
$h\in (0, 1)$:
a constant greater than 2/3.
$h$ is
the honesty ratio in the system. That is, the fraction of honest users or honest money, depending on the assumption used,  in each $PK^{r}$ is at least $h$.

\item
$H$: a cryptographic hash function, modelled as a random oracle.

\item
$\bot$: A special string of the same length as the output of $H$.

\item
$F\in (0, 1)$: the parameter specifying the allowed error probability. A probability $\leq F$ is considered ``negligible'', and
a probability $\geq 1-F$ is considered ``overwhelming''.

\item
$p_h\in (0, 1)$: the probability that the leader of a round $r$, $\ell^r$, is honest. Ideally $p_h = h$. With the existence of the Adversary,
the value of $p_h$ will be determined in the analysis.

\item
$k\in \bZ^+$: the look-back parameter. That is, round $r-k$ is where the verifiers
for round $r$ are chosen from ---namely, $SV^r\subseteq PK^{r-k}$.%
\footnote{Strictly speaking, ``$r-k$'' should be ``$\max\{0, r-k\}$''.}

\item
$p_1\in (0, 1)$: for the first step of round $r$, a user in round $r-k$ is chosen to be
in $SV^{r, 1}$ with probability $p_1 \triangleq \frac{n_1}{|PK^{r-k}|}$.

\item
$p\in (0, 1)$: for each step $s>1$ of round $r$, a user in round $r-k$ is chosen to be
in $SV^{r, s}$ with probability $p \triangleq \frac{n}{|PK^{r-k}|}$.

\item
$CERT^r$: the certificate for $B^r$. It is a set of $t_H$
signatures of $H(B^r)$ from proper verifiers
in  round $r$.

\item
$\overline{B^r} \triangleq (B^r, CERT^r)$ is a proven block.

A user $i$ {\em knows} $B^r$ if he possesses (and successfully verifies) both parts of the proven block.
Note that the $CERT^r$ seen by different users may be different.

\item
$\tau_i^r$: the (local) time at which a user $i$ knows $B^r$.
In the Algorand protocol each user has his own clock. Different users' clocks need not be synchronized, but must have the same speed.
Only for the purpose of the analysis, we consider a reference clock and measure the players' related times with respect to it.

\item
$\alpha^{r, s}_i$ and $\beta^{r,s}_i$: respectively the (local) time a  user $i$ starts and ends his execution of Step $s$ of round $r$.

\item
$\Lambda$ and $\lambda$:
essentially, the upper-bounds to, respectively, the  time needed to execute Step 1
and the time needed for any other step of the Algorand protocol.

Parameter $\Lambda$ upper-bounds the time to propagate
a single 1MB block. (In our notation, $\Lambda = \lambda_{\rho, 1MB}$. Recalling our notation, that we set $\rho = 1$ for simplicity, and that blocks are chosen to be at most 1MB-long, we have  $\Lambda = \lambda_{1,1, 1MB}$.)

Parameter $\lambda$ upperbounds the time
to propagate one small message per verifier in a Step $s>1$.
(Using, as in Bitcoin,  elliptic curve signatures with 32B keys,
a verifier message is 200B long. Thus, in our notation, $\lambda = \lambda_{n,\rho, 200B}$.)

We assume that $\Lambda=O(\lambda)$.

\end{newitemize}

\paragraph{Notions}
\begin{itemize}
\item
Verifier selection.

For each round $r$ and step $s>1$,
$SV^{r, s}\triangleq \{i\in PK^{r-k}: \  .H(SIG_i(r, s, Q^{r-1}))\leq p\}$.
Each user $i\in PK^{r-k}$ privately computes his signature using his long-term key
and decides whether $i\in SV^{r, s}$ or not.
If $i\in SV^{r, s}$, then $SIG_i(r, s, Q^{r-1})$ is $i$'s $(r, s)$-credential, compactly denoted by $\sigma_i^{r, s}$.

For the first step of round $r$,
$SV^{r,1}$ and $\sigma_i^{r, 1}$ are similarly defined, with $p$ replaced by $p_1$.
The verifiers in $SV^{r, 1}$ are {\em potential leaders}.

\item
Leader selection.

User $i\in SV^{r, 1}$ is the leader of round $r$, denoted by $\ell^r$, if
$H(\sigma_i^{r, 1})\leq H(\sigma_j^{r,1})$ for all potential leaders $j\in SV^{r, 1}$.
Whenever the hashes of two players' credentials are compared,
in the unlikely event of ties, the protocol always breaks ties lexicographically according to the (long-term public keys of the) potential leaders.

By definition, the hash value of player $\ell^r$'s credential is also the smallest among all users in $PK^{r-k}$.
Note that a potential leader cannot privately decide whether he is the leader or not, without seeing the other potential leaders' credentials.

Since the hash values are uniform at random, when $SV^{r, 1}$ is non-empty,
$\ell^r$ always exists and is honest with probability at least $h$. The parameter $n_1$ is large enough so as to ensure that each $SV^{r, 1}$ is
non-empty with overwhelming probability.

\item
Block structure.

A non-empty block is of the form $B^r = (r, PAY^r, SIG_{\ell^r}(Q^{r-1}), H(B^{r-1}))$,
and an empty block is of the form $B^r_\epsilon = (r, \emptyset, Q^{r-1}, H(B^{r-1}))$.

Note that a non-empty block may still contain an empty payset $PAY^r$, if
no payment occurs in this round or if the leader is malicious.
However, a non-empty block implies
that the identity of $\ell^r$, his credential $\sigma^{r, 1}_{\ell^r}$
and $SIG_{\ell^r}(Q^{r-1})$ have all been
timely revealed.
The protocol guarantees that, if the leader is honest, then the block will be non-empty with overwhelming probability.

\item
Seed $Q^r$.

If $B^{r}$ is non-empty,
then
$Q^r \triangleq H(SIG_{\ell^r}(Q^{r-1}), r)$,
otherwise $Q^r \triangleq H(Q^{r-1}, r)$.

\end{itemize}

\paragraph{Parameters}

\begin{newitemize}

\item
{\em Relationships among various parameters.}

\begin{newitemize}

\item[---] The verifiers and potential leaders of round $r$ are selected from the users in $PK^{r-k}$, where $k$ is chosen so that
the Adversary
cannot predict $Q^{r-1}$ back at round $r-k-1$ with probability better than $F$:
otherwise, he will be able to introduce malicious users for round $r-k$, all of which
will be potential leaders/verifiers in round $r$, succeeding in having a malicious leader or a malicious
majority in
$SV^{r, s}$ for some steps $s$ desired by him.

\item[---] For Step 1 of each round $r$, $n_1$ is chosen so that with overwhelming probability, $SV^{r, 1}\neq \emptyset$.

\end{newitemize}

\item
{\em Example choices of important parameters.}
\begin{newitemize}

\item[---]
The outputs of $H$ are 256-bit long.

\item[---]
$h=80\%$, $n_1=35$.

\item[---] $\Lambda =$ 1 minute and $\lambda =$ 10 seconds.

  \end{newitemize}

  \item {\em Initialization of the protocol.}

  The protocol starts at time $0$ with $r=0$. Since there does not exist ``$B^{-1}$'' or ``$CERT^{-1}$'', syntactically
 $B^{-1}$ is a public parameter with its third component specifying $Q^{-1}$, and
all users know $B^{-1}$ at time 0.

\end{newitemize}

\section{$\bm{\alg_1}$ }\label{sec:Intuitionalg}

In this section, we construct a version of  $\alg$ working under the following assumption.
\begin{description}
  \item[{\sc Honest Majority of Users  Assumption:}] {\em More than 2/3 of the users in each $PK\upr$ are honest.}

\end{description}
In Section \ref{sec:hmm}, we show how to replace the above assumption with the desired Honest Majority of Money assumption.

\subsection{Additional Notations and Parameters}\label{sec:para}

\paragraph{Notations}

\begin{newitemize}

\item
$m\in \bZ^+$: the maximum number of steps in the binary BA protocol, a multiple of 3.

\item
$L^r\leq m/3$:  a random variable
representing the number of Bernoulli trials needed to see a 1,
when each trial is 1 with probability $\frac{p_h}{2}$ and
there are at most $m/3$ trials.
If all trials fail then $L^r\triangleq m/3$.
$L^r$ will be used to upper-bound the time needed to generate block $B^r$.

\item
$t_H=\frac{2n}{3}+1$: the number of signatures needed in the ending conditions of the protocol.

\item
$CERT^r$: the certificate for $B^r$. It is a set of $t_H$
signatures of $H(B^r)$ from proper verifiers
in  round $r$.

\end{newitemize}

\paragraph{Parameters}

\begin{newitemize}

\item
{\em Relationships among various parameters.}

\begin{newitemize}

\item[---] For each step $s>1$ of round $r$,
$n$
is chosen so that, with overwhelming probability,
\begin{newcenter}
$|HSV^{r, s}|>2|MSV^{r, s}|$ \quad and \quad $|HSV^{r,s}|+4|MSV^{r, s}|<2n$.
\end{newcenter}

The closer to 1 the value of $h$ is, the smaller $n$ needs to be.
In particular, we use (variants of) Chernoff bounds to ensure the desired conditions hold with overwhelming probability.

\item[---]
$m$ is chosen such that $L^r<m/3$ with overwhelming probability.

\end{newitemize}

\item
{\em Example choices of important parameters.}
\begin{newitemize}

\item[---]
$F=10^{-12}$.

\item[---]
$n\approx 1500$, $k=40$ and $m =180$.

\end{newitemize}

\end{newitemize}

\subsection{Implementing Ephemeral Keys in $\alg_1$}

As already mentioned, we wish that a verifier $i\in SV\rs$
digitally signs his message $m_i\rs$ of step $s$ in round $r$,
relative to an ephemeral public key $pk_i\rs$, using an ephemeral secrete key $sk^{r, s}_i$ that
 he promptly destroys after using.  We thus need an efficient method to ensure
that every user can verify that $pk_i\rs$ is indeed the key to use to verify
$i$'s signature of $m_i\rs$.
We do so by a (to the best of our knowledge) new use of identity-based signature schemes.

At a high level, in such a scheme, a central authority $A$ generates a
public master key, $PMK$, and a corresponding secret master key, $SMK$. Given the identity, $U$, of a player $U$, $A$ computes, via $SMK$, a secret signature key $sk_{U}$ relative to the public key $U$, and privately gives $sk_U$ to $U$. (Indeed, in an identity-based digital signature scheme, the public key of a user $U$ is $U$ itself!) This way, if $A$ destroys
$SMK$ after computing the secret keys of the users he wants to enable to produce digital signatures, and does not keep any computed secret key,
then $U$ is the only one who can digitally sign messages relative to the public key $U$. Thus, anyone who knows ``$U$'s name", automatically knows $U$'s public key, and thus can verify $U$'s signatures (possibly using also the public master key $PMK$).

In our application, the authority $A$ is user $i$, and the set of all possible users  $U$ coincides
with the round-step pair $(r,s)$ in ---say---
$S=\{i\}\times \{r',\ldots,r'+10^6\}\times \{1, \ldots, m+3\}$, where $r'$ is a given round, and $m+3$ the upperbound to the number of steps that may occur within a round.  This way, $pk_i\rs \triangleq (i, r,s)$, so that everyone seeing $i$'s signature $SIG_{pk^{r, s}_i}\rs(m_i\rs)$ can, with overwhelming probability, immediately verify it for the first million rounds $r$ following $r'$.

In other words, $i$ first generates $PMK$ and $SMK$.
Then, he publicizes that $PMK$ is $i$'s master public key for any round $r\in [r',r'+10^6]$, and uses $SMK$
 to privately produce and store the secret key $sk^{r,s}_i$ for each triple $(i, r,s)\in S$. This done, he  destroys $SMK$.
If he determines that he is not part of $SV\rs$, then $i$ may leave $sk^{r,s}_i$ alone (as the protocol does not require that he aunthenticates any message in Step $s$ of round $r$). Else, $i$ first uses  $sk^{r,s}_i$ to digitally sign his message $m_i\rs$, and then destroys $sk^{r,s}_i$.

Note that $i$ can publicize his first public master key  when he first enters the system. That is, the same payment $\wp$ that brings $i$ into the system (at a round $r'$ or at a round close to $r'$), may also specify, at $i$'s request, that $i$'s public master key for any round $r\in [r',r'+10^6]$ is $PMK$ ---e.g., by including a pair of the form $(PMK,[r',r'+10^6])$.

Also note that, since
$m+3$ is the maximum number of steps in a round,
assuming that a round takes a minute, the stash of ephemeral keys so produced will last $i$ for almost two years. At the same time, these ephemeral  secret keys will not take $i$ too long to produce. Using an elliptic-curve based system
with 32B keys, each secret key is computed in a few microseconds. Thus, if $m+3=180$, then all 180M secret keys can be computed in less than one hour.

When the current round is getting close to $r'+10^6$, to handle the next million rounds, $i$ generates a new $(PMK', SMK')$ pair, and informs what his next stash of ephemeral keys is by ---for example--- having  $SIG_i(PMK', [r'+10^6+1,r'+2\cdot 10^6+1])$ enter a new block, either as a separate ``transaction" or as some additional information that is part of a payment. By so doing, $i$ informs everyone that
he/she should use $PMK'$ to verify $i$'s ephemeral signatures in the next million rounds. And so on.

(Note that, following this basic approach, other ways for implementing ephemeral keys without using identity-based signatures are certainly possible. For instance, via Merkle trees.%
\footnote{In this method, $i$ generates a public-secret key pair $(pk_i\rs,sk_i\rs)$ for each round-step pair $(r,s)$ in ---say---
$ \{r',\ldots, r'+10^6\}\times \{1,\ldots, m+3\}$.
Then he orders these public keys in a canonical way, stores the $j$th public key in the $j$th leaf of a Merkle tree, and  computes the root value $R_i$, which he publicizes. When he wants to sign a message relative to key $pk_i\rs$, $i$ not only provides the actual signature, but also the authenticating path for $pk_i\rs$ relative to $R_i$. Notice that this authenticating path also proves that $pk_i\rs$ is stored in the $j$th leaf.   The rest of the details can be easily filled.}%
)

Other ways for implementing ephemeral keys are certainly possible ---e.g., via Merkle trees.

\subsection{Matching the Steps of $\bm{\alg_1}$ with those of $\bm{\BA*}$}

As we said, a round in $\alg_1$ has at most $m+3$ steps.

\begin{description}
  \item[{\sc Step 1.}]

  In this step, each potential leader $i$  computes and propagates his candidate block $B^r_i$, together with his own credential, $\sigma_i^{r,1}$.

  Recall that this credential
  {\em explicitly} identifies $i$. This is so, because
  $\sigma_i^{r,1}\triangleq SIG_i(r,1,Q\uprm)$.

  Potential verifier $i$ also propagates, as part of his message, his proper digital signature of $H(B^r_i)$. Not dealing with a payment or a credential, this signature of $i$ is relative to his ephemeral public key $pk_i^{r,1}$: that is, he propagates $sig_{pk_i^{r, 1}}(H(B^r_i))$.

  Given our conventions, rather than propagating $B^r_i$ and $sig_{pk_i^{r, 1}}(H(B^r_i))$, he could have
  propagated $SIG_{pk_i^{r, 1}}(H(B^r_i))$. However, in our analysis we need to have explicit access to $sig_{pk_i^{r, 1}}(H(B^r_i))$.

   \item[{\sc Steps 2.}]

   In this step, each verifier $i$ sets  $\ell_i\upr$ to be the potential leader whose hashed credential is the smallest, and  $B^r_i$ to be the block proposed by $\ell\upr_i$.
   Since, for the sake of efficiency,
   we wish to agree on $H(B\upr)$, rather than directly on $B\upr$, $i$ propagates the message
   he would have propagated in the first step of $\BA*$ with initial value $v_i'=H(B_i\upr)$.
   That is, he propagates $v_i'$, after ephemerally signing it, of course. (Namely, after signing it relative to the right ephemeral public key, which in this case is $pk_i^{r,2}$.)
      Of course too, $i$ also transmits his own credential.

   Since the first step of $\BA*$ consists of the
   first step of the graded consensus protocol $GC$, Step 2 of $\alg$ corresponds to the first step of $GC$.

  \item[{\sc Steps 3.}]

   In this step, each verifier $i\in SV^{r,2}$ executes the second step of $\BA*$. That is, he sends the same message he would have sent in the second step of $GC$. Again, $i$'s message is ephemerally signed and accompanied by $i$'s credential. (From now on, we shall omit
   saying that a verifier ephemerally signs his message and also propagates his credential.)

  \item[{\sc Step 4.}]

In this step, every verifier $i\in SV^{r,4}$
computes the output of $GC$, $(v_i,g_i)$,
and ephemerally signs and sends the same message he would have sent
in the third step of $\BA*$, that is, in the first
step of $\BBA*$, with initial bit 0 if $g_i=2$, and 1 otherwise.

  \item[{\sc Step $s=5,\ldots,m+2$.}]

Such a step, if ever reached, corresponds to step $s-1$ of $\BA*$, and thus to
step $s-3$ of $\BBA*$.

Since our propagation model is sufficiently asynchronous, we must account for the possibility that,
in the middle of such a step $s$, a verifier $i\in SV\rs$ is reached by information proving him that
block $B\upr$ has already been chosen. In this case,
$i$ stops his own execution of round $r$ of $\alg$, and starts executing his round-$(r+1)$ instructions.

Accordingly,  the instructions of a verifier $i\in SV\rs$, in addition to the instructions corresponding to Step $s-3$ of $\BBA*$,
include checking whether
the execution of $\BBA*$ has halted in a prior Step $s'$. Since $\BBA*$ can only halt is a Coin-Fixed-to-0 Step or in a Coin-Fixed-to-1 step, the instructions distinguish whether

A (Ending Condition 0): $s'-2\equiv  0 \; mod\; 3$, or

B (Ending Condition 1):  $s'-2\equiv 1 \; mod\; 3$.

\noindent
In fact, in case A, the block $B\upr$ is non-empty, and thus additional instructions are necessary to ensure that $i$ properly reconstructs $B\upr$, together with its proper certificate $CERT\upr$.
In case B, the block $B\upr$ is empty, and thus $i$ is instructed to set $B\upr=B\upr_\varepsilon =(r,\emptyset,H(Q\uprm,r),H(B\uprm))$,
and  to compute $CERT\upr$.

If, during his execution of step $s$, $i$ does not see any evidence that the block $B\upr$ has already
been generated, then he sends the same message he would have sent in step $s-3$ of $\BBA*$.

\item[{\sc Step $m+3$.}]

If, during step $m+3$, $i\in SV^{r,m+3}$ sees
that the block $B\upr$ was already generated in a prior step $s'$, then he proceeds just as explained above.

Else, rather then sending the same message he would have sent in step $m$ of $\BBA*$, $i$ is  instructed, based on the information in his possession, to compute $B\upr$ and its corresponding certificate $CERT\upr$.

Recall, in fact, that we upperbound by $m+3$ the total number of steps of a round.
\end{description}

\subsection{The Actual Protocol}

Recall that, in each step $s$ of a round $r$, a verifier $i\in SV^{r, s}$
uses his long-term public-secret key pair to produce his credential, $\sigma_i^{r,s}\triangleq SIG_i(r, s, Q^{r-1})$,
as well as $SIG_i\left(Q^{r-1}\right)$ in case $s=1$. Verifier $i$ uses his ephemeral secret key $sk_i^{r,s}$
to sign his $(r, s)$-message $m_i\rs$.
For simplicity, when $r$ and $s$ are clear, we write $esig_i(x)$ rather than $sig_{pk^{r, s}_i}(x)$ to denote $i$'s proper ephemeral signature of a value $x$ in step $s$ of round $r$, and write $ESIG_i(x)$ instead of $SIG_{pk_i^{r,s}}(x)$ to denote
$(i, x, esig_i(x))$.

\begin{center}
\framebox[\textwidth]{
\centering
\begin{minipage}{.98\textwidth}
\begin{center}
Step 1: Block Proposal

\end{center}

Instructions for every user $i\in PK^{r-k}$: User $i$ starts his own Step 1 of round $r$ as soon as he knows $B^{r-1}$.

\begin{itemize}

\item
User $i$ computes $Q^{r-1}$ from the third component of $B^{r-1}$ and checks whether $i\in SV^{r,1}$ or not.

\item
If $i\notin SV^{r,1}$, then $i$ stops his own execution of Step 1 right away.

\item
If $i\in SV^{r, 1}$, that is, if $i$ is a potential leader, then he collects the round-$r$ payments that have been propagated to him so far and computes
a maximal payset $PAY^r_i$ from them.
Next, he computes his ``candidate block''
$B^r_i = (r, PAY^r_i, SIG_i(Q^{r-1}), H(B^{r-1}))$.
Finally, he computes the message $m^{r, 1}_i = (B^r_i, esig_i(H(B^r_i)), \sigma^{r, 1}_i)$,
destroys his ephemeral secret key $sk^{r, 1}_i$,
and then propagates $m^{r, 1}_i$.

\end{itemize}
\end{minipage}
}
\end{center}

\paragraph{Remark.}
In practice, to shorten the global execution of Step 1,
it is important that
the $(r, 1)$-messages are {\em selectively propagated}.
That is,
for every user $i$ in the system,
for the first $(r, 1)$-message that he ever receives and successfully verifies,%
\footnote{That is, all the signatures are correct and
both the block and its hash are valid ---although $i$ does not check whether
the included payset is maximal for its proposer or not.}
player $i$ propagates it as usual.
For all the other $(r, 1)$-messages that player $i$ receives and successfully verifies,
he propagates it
only if the hash value of
the credential it contains is the {\em smallest} among the hash values of the credentials contained in all
$(r, 1)$-messages he has received and successfully verified so far.
Furthermore, as suggested by Georgios Vlachos,
it is useful that
each potential leader $i$
also propagates his credential  $\sigma^{r, 1}_i$ separately:
those small messages travel faster than blocks,
ensure timely propagation of the $m^{r, 1}_j$'s where the contained credentials have small hash values,
while make those with large hash values disappear quickly.

\begin{center}
\framebox[\textwidth]{
\centering
\begin{minipage}{.98\textwidth}
\begin{center}
Step 2: The First Step of the Graded Consensus Protocol $GC$
\end{center}

Instructions for every user $i\in PK^{r-k}$:
User $i$ starts his own Step 2 of round $r$ as soon as he knows $B^{r-1}$.

\begin{itemize}

\item
User $i$ computes $Q^{r-1}$ from the third component of $B^{r-1}$ and checks whether $i\in SV^{r,2}$ or not.

\item
If $i\notin SV^{r, 2}$ then $i$ stops his own execution of Step 2 right away.

\item
If $i\in SV^{r, 2}$, then after waiting an amount of time  $t_2 \triangleq \lambda +\Lambda$, $i$ acts as follows.

\begin{newitemize}

\item[1.]
He finds the user $\ell$ such that $H(\sigma^{r, 1}_{\ell})\leq H(\sigma^{r, 1}_{j})$ for  all credentials $\sigma^{r, 1}_{j}$ that are part of the successfully verified $(r,1)$-messages he has received so far.%
\footnote{Essentially, user $i$ privately decides that the leader of round $r$ is user $\ell$.}

\item[2.]
If he has received from $\ell$ a valid message $m^{r, 1}_{\ell} = (B^r_{\ell}, esig_{\ell}(H(B^r_{\ell})), \sigma^{r, 1}_{\ell})$,%
\footnote{Again, player $\ell$'s signatures and the hashes are all successfully verified, and $PAY^r_{\ell}$ in $B^r_{\ell}$ is a valid payset for round $r$ ---although $i$ does not check whether $PAY^r_{\ell}$ is maximal for $\ell$ or not.}
 then $i$ sets $v'_i \triangleq H(B^r_{\ell})$; otherwise $i$ sets $v'_i \triangleq \bot$.

\item[3.]
$i$  computes the message
$m^{r, 2}_i \triangleq (ESIG_i(v'_i), \sigma^{r, 2}_i)$,%
\footnote{The message
$m^{r, 2}_i$ signals that player $i$ considers $v'_i$ to be the hash of the next block, or considers the next block to be empty.} destroys his ephemeral secret key $sk^{r, 2}_i$, and then propagates $m^{r, 2}_i$.

\end{newitemize}

\end{itemize}
\end{minipage}
}
\end{center}

\begin{center}
\framebox[\textwidth]{
\centering
\begin{minipage}{.98\textwidth}
\begin{center}
Step 3: The Second Step of $GC$
\end{center}

Instructions for every user $i\in PK^{r-k}$:
User $i$ starts his own Step 3 of round $r$ as soon as he knows $B^{r-1}$.

\begin{itemize}

\item
User $i$ computes $Q^{r-1}$ from the third component of $B^{r-1}$ and checks whether $i\in SV^{r,3}$ or not.

\item
If $i\notin SV^{r, 3}$, then $i$ stops his own execution of Step 3 right away.

\item
If $i\in SV^{r, 3}$,  then after waiting an amount of time  $t_3 \triangleq
t_2+2\lambda = 3\lambda+\Lambda$,
$i$ acts as follows.
\begin{newitemize}

\item[1.]
If there exists a value $v'\neq \bot$ such that,
among all the valid messages $m_j^{r, 2}$ he has received,
more than 2/3 of them are of the form
$(ESIG_j(v'), \sigma_j^{r, 2})$, without any contradiction,%
\footnote{That is,
he has not received two valid messages containing $ESIG_j(v')$ and a different $ESIG_j(v'')$ respectively, from a player $j$.
Here and from here on, except in the Ending Conditions defined later, whenever an honest player wants messages of a given form, messages contradicting each other are never counted or considered valid.}
then he
computes the message $m_i^{r, 3} \triangleq (ESIG_i(v'), \sigma^{r, 3}_i)$.
Otherwise, he computes $m_i^{r, 3} \triangleq (ESIG_i(\bot), \sigma^{r, 3}_i)$.

\item[2.]
$i$ destroys his ephemeral secret key $sk^{r, 3}_i$, and then propagates $m_i^{r, 3}$.

\end{newitemize}

\end{itemize}
\end{minipage}
}
\end{center}

\begin{center}
\framebox[\textwidth]{
\centering
\begin{minipage}{.98\textwidth}
\begin{center}
Step 4: Output of $GC$ and The First Step of $\BBA*$

\end{center}

Instructions for every user $i\in PK^{r-k}$:
User $i$ starts his own Step 4 of round $r$ as soon as he knows $B^{r-1}$.

\begin{itemize}

\item
User $i$ computes $Q^{r-1}$ from the third component of $B^{r-1}$ and checks whether $i\in SV^{r,4}$ or not.

\item
If $i\notin SV^{r, 4}$, then $i$ his stops his own execution of Step 4  right away.

\item
If $i\in SV^{r, 4}$,  then after waiting an amount of time  $t_4 \triangleq
t_3+2\lambda = 5\lambda+\Lambda$,
$i$ acts as follows.

\begin{newitemize}

\item[1.] He computes $v_i$ and $g_i$, the output of GC, as follows.

\begin{newitemize}

\item[(a)]
If there exists a value
$v'\neq \bot$ such that,
among all the valid messages $m_j^{r, 3}$ he has received,
more than 2/3 of them are of the form
$(ESIG_j(v'), \sigma_j^{r, 3})$,
then
he sets $v_i \triangleq v'$ and $g_i \triangleq 2$.

\item[(b)]
Otherwise, if
there exists a
value
$v'\neq \bot$ such that,
among all the valid messages $m_j^{r, 3}$ he has received,
more than 1/3 of them are of the form
$(ESIG_j(v'), \sigma_j^{r, 3})$,
then he sets $v_i \triangleq v'$ and $g_i \triangleq 1$.%
\footnote{It can be proved that the $v'$ in case (b), if exists, must be unique.}

\item[(c)]
Else, he sets $v_i \triangleq H(B^r_{\epsilon})$ and $g_i \triangleq 0$.

\end{newitemize}

\item[2.]
He computes $b_i$, the input of $\BBA*$, as follows:

  $b_i \triangleq 0$ if $g_i = 2$, and $b_i \triangleq 1$ otherwise.

\item[3.]
He computes the message $m^{r, 4}_i \triangleq (ESIG_i(b_i), ESIG_i(v_i), \sigma^{r, 4}_i)$, destroys his ephemeral secret key $sk^{r, 4}_i$, and then propagates $m^{r, 4}_i$.

\end{newitemize}

\end{itemize}
\end{minipage}
}
\end{center}

\begin{center}
\framebox[\textwidth]{
\centering
\begin{minipage}{.98\textwidth}
\begin{center}
Step $s$, $5\leq s\leq m+2$, $s -2 \equiv 0 \mod 3$: A Coin-Fixed-To-0 Step of $\BBA*$

\end{center}

Instructions for every user $i\in PK^{r-k}$:
User $i$ starts his own Step $s$ of round $r$ as soon as he knows $B^{r-1}$.

\begin{newitemize}

\item
User $i$ computes $Q^{r-1}$ from the third component of $B^{r-1}$ and checks whether $i\in SV^{r,s}$.

\item
If $i\notin SV^{r, s}$, then $i$ stops his own execution of Step $s$ right away.

\item
If $i\in SV^{r, s}$ then he acts as follows.

\begin{newitemize}

\item[--]
 He waits until an amount of
time $t_s \triangleq t_{s-1}+2\lambda = (2s-3)\lambda+\Lambda$
has passed.

\item[--] {\em Ending Condition 0:}
If, during such waiting and at any point of time,  there exists a string $v\neq \bot$ and a step $s'$ such that

\begin{newitemize}

 \item[(a)] $5\leq s'\leq s$, $s'-2 \equiv 0 \mod 3$ ---that is, Step $s'$ is a Coin-Fixed-To-0 step,

\item[(b)] $i$ has received at least $t_H= \frac{2n}{3}+1$ valid messages
 $m^{r, s'-1}_j = (ESIG_j(0),$ $ESIG_j(v), \sigma^{r, s'-1}_j)$,%
\footnote{Such a message from player $j$ is counted even if player $i$ has also received a message from $j$ signing for 1. Similar things for Ending Condition 1. As shown in the analysis, this is done to ensure that all honest users know $B^r$ within time $\lambda$ from each other.}
  and

\item[(c)] $i$ has received a valid message $m^{r, 1}_j = (B^r_j, esig_j(H(B^r_j)), \sigma^{r, 1}_j)$ with $v = H(B^r_j)$,

\end{newitemize}

then, $i$ stops his own execution of Step $s$ (and in fact of round $r$) right away without propagating anything;
sets $B^r = B^r_j$; and  sets his own $CERT^r$ to be the set of messages $m^{r, s'-1}_j$ of sub-step (b).%
\footnote{User $i$ now knows $B^r$ and his own round $r$ finishes.
He still helps propagating messages as a generic user, but does not initiate any propagation as a $(r, s)$-verifier.
In particular,
he has helped propagating all messages in his $CERT^r$, which is enough for our protocol.
Note that he should also set $b_i \triangleq 0$ for the binary BA protocol, but $b_i$ is not needed in this case anyway.
Similar things for all future instructions.}

\item[--] {\em Ending Condition 1:}
If, during such waiting and at any point of time, there exists a step $s'$ such that

\begin{newitemize}

\item[(a')] $6\leq s'\leq s$, $s'-2\equiv 1 \mod 3$ ---that is, Step $s'$ is a Coin-Fixed-To-1 step, and

\item[(b')]
$i$ has received at least $t_H$ valid messages $m^{r, s'-1}_j = (ESIG_j(1), ESIG_j(v_j),$ $\sigma^{r, s'-1}_j)$,%
\footnote{In this case, it does not matter what the $v_j$'s are.}

\end{newitemize}
then, $i$  stops his own execution of Step $s$ (and in fact of round $r$) right away without propagating anything; sets $B^r = B^r_\epsilon$; and sets his own $CERT^r$ to be the set of messages $m^{r, s'-1}_j$  of sub-step (b').

\item[--]
Otherwise, at the end of the wait, user $i$ does the following.

He sets $v_i$ to be the majority vote of the $v_j$'s in the second components
of all the valid $m^{r, s-1}_j$'s he has received.

He computes $b_i$ as follows.

\begin{newitemize}
\item[]
If more than 2/3 of all the valid  $m^{r, s-1}_j$'s he has received are
of the form $(ESIG_j(0), ESIG_j(v_j), \sigma^{r, s-1}_j)$,
then he sets $b_i \triangleq 0$.

\item[]
Else, if more than 2/3 of all the valid  $m^{r, s-1}_j$'s he has received are
of the form $(ESIG_j(1), ESIG_j(v_j), \sigma^{r, s-1}_j)$,
then he sets $b_i \triangleq 1$.

\item[]
Else, he sets $b_i\triangleq 0$.
\end{newitemize}

He computes the message $m^{r, s}_i \triangleq (ESIG_i(b_i), ESIG_i(v_i), \sigma^{r, s}_i)$,
destroys his ephemeral secret key $sk^{r, s}_i$, and then propagates $m^{r, s}_i$.

\end{newitemize}

\end{newitemize}

\end{minipage}
}
\end{center}

\begin{center}
\framebox[\textwidth]{
\centering
\begin{minipage}{.98\textwidth}
\begin{center}
Step $s$, $6\leq s\leq m+2$, $s -2 \equiv 1 \mod 3$: A Coin-Fixed-To-1 Step of $\BBA*$

\end{center}

Instructions for every user $i\in PK^{r-k}$:
User $i$ starts his own Step $s$ of round $r$ as soon as he knows $B^{r-1}$.

\begin{itemize}

\item
User $i$ computes $Q^{r-1}$ from the third component of $B^{r-1}$ and checks whether $i\in SV^{r,s}$ or not.

\item
If $i\notin SV^{r, s}$, then $i$ stops his own execution of Step $s$ right away.

\item
If $i\in SV^{r, s}$ then he does the follows.

\begin{newitemize}

\item[--]
 He waits until an amount of
time $t_s \triangleq (2s-3)\lambda + \Lambda$ has passed.

\item[--] {\em Ending Condition 0:} The same instructions as Coin-Fixed-To-0 steps.

\item[--] {\em Ending Condition 1:} The same instructions as Coin-Fixed-To-0 steps.

\item[--]
Otherwise, at the end of the wait, user $i$ does the following.

\smallskip
He sets $v_i$ to be the majority vote of the $v_j$'s in the second components
of all the valid $m^{r, s-1}_j$'s he has received.

\smallskip
He computes $b_i$ as follows.

\begin{newitemize}

\item[]
If more than 2/3 of all the valid  $m^{r, s-1}_j$'s he has received are
of the form $(ESIG_j(0), ESIG_j(v_j), \sigma^{r, s-1}_j)$,
then he sets $b_i \triangleq 0$.

\item[]
Else, if more than 2/3 of all the valid  $m^{r, s-1}_j$'s he has received are
of the form $(ESIG_j(1), ESIG_j(v_j), \sigma^{r, s-1}_j)$,
then he sets $b_i \triangleq 1$.

\item[]
Else, he sets $b_i\triangleq 1$.
\end{newitemize}

\smallskip
He computes the message
$m^{r, s}_i \triangleq (ESIG_i(b_i), ESIG_i(v_i), \sigma^{r, s}_i)$,
destroys his ephemeral secret key $sk^{r, s}_i$,
and then propagates $m^{r, s}_i$.

\end{newitemize}
\end{itemize}
\end{minipage}
}
\end{center}

\begin{center}
\framebox[\textwidth]{
\centering
\begin{minipage}{.98\textwidth}
\begin{center}
Step $s$, $7\leq s\leq m+2$, $s -2 \equiv 2 \mod 3$: A Coin-Genuinely-Flipped Step of $\BBA*$

\end{center}

Instructions for every user $i\in PK^{r-k}$:
User $i$ starts his own Step $s$ of round $r$ as soon as he knows $B^{r-1}$.

\begin{itemize}

\item
User $i$ computes $Q^{r-1}$ from the third component of $B^{r-1}$ and checks whether $i\in SV^{r,s}$ or not.

\item
If $i\notin SV^{r, s}$, then $i$ stops his own execution of Step $s$ right away.

\item
If $i\in SV^{r, s}$ then he does the follows.

\begin{newitemize}

\item[--]
 He waits until an amount of
time $t_s \triangleq (2s-3)\lambda + \Lambda$ has passed.

\item[--] {\em Ending Condition 0:} The same instructions as Coin-Fixed-To-0 steps.

\item[--] {\em Ending Condition 1:} The same instructions as Coin-Fixed-To-0 steps.

\item[--]
Otherwise, at the end of the wait, user $i$ does the following.

\smallskip
He sets $v_i$ to be the majority vote of the $v_j$'s in the second components
of all the valid $m^{r, s-1}_j$'s he has received.

\smallskip
He computes $b_i$ as follows.

\begin{newitemize}

\item[]
If more than 2/3 of all the valid  $m^{r, s-1}_j$'s he has received are
of the form $(ESIG_j(0), ESIG_j(v_j), \sigma^{r, s-1}_j)$,
then he sets $b_i \triangleq 0$.

\item[]
Else, if more than 2/3 of all the valid  $m^{r, s-1}_j$'s he has received are
of the form $(ESIG_j(1), ESIG_j(v_j), \sigma^{r, s-1}_j)$,
then he sets $b_i \triangleq 1$.

\item[]
Else,
let $SV^{r, s-1}_i$ be the set of $(r, s-1)$-verifiers from whom he has received a valid message $m^{r, s-1}_j$.
He sets $b_i \triangleq \lsb(\min_{j\in SV^{r, s-1}_i} H(\sigma^{r, s-1}_j))$.

\end{newitemize}

He computes the message $m^{r, s}_i \triangleq (ESIG_i(b_i), ESIG_i(v_i), \sigma^{r, s}_i)$,
destroys his ephemeral secret key $sk^{r, s}_i$,
 and then propagates $m^{r, s}_i$.
\end{newitemize}

\end{itemize}
\end{minipage}
}
\end{center}

\begin{center}
\framebox[\textwidth]{
\centering
\begin{minipage}{.98\textwidth}
\begin{center}
Step $m+3$: The Last Step of $\BBA*$ %
\footnote{With overwhelming probability $\BBA*$ has ended before this step, and we specify this step for completeness.}

\end{center}

Instructions for every user $i\in PK^{r-k}$:
User $i$ starts his own Step $m+3$ of round $r$ as soon as he knows $B^{r-1}$.
\begin{itemize}

\item
User $i$ computes $Q^{r-1}$ from the third component of $B^{r-1}$ and checks whether $i\in SV^{r,m+3}$ or not.

\item
If $i\notin SV^{r, m+3}$, then $i$ stops his own execution of Step $m+3$ right away.

\item
If $i\in SV^{r, m+3}$ then he does the follows.

\begin{newitemize}

\item[--]
 He waits until an amount of
time $t_{m+3} \triangleq t_{m+2}+2\lambda = (2m+3)\lambda + \Lambda$ has passed.

\item[--] {\em Ending Condition 0:} The same instructions as Coin-Fixed-To-0 steps.

\item[--] {\em Ending Condition 1:} The same instructions as Coin-Fixed-To-0 steps.

\item[--]
Otherwise, at the end of the wait, user $i$ does the following.

\smallskip
He sets $out_i \triangleq 1$ and $B^r \triangleq B^r_{\epsilon}$.

\smallskip
He computes the message $m^{r, m+3}_i = (ESIG_i(out_i), ESIG_i(H(B^r)), \sigma^{r, m+3}_i)$,
destroys his ephemeral secret key $sk^{r, m+3}_i$,
and then propagates $m^{r, m+3}_i$ to certify $B^r$.%
\footnote{A certificate from Step $m+3$ does not have to include $ESIG_i(out_i)$.
We include it for uniformity only:
the certificates now have a uniform format no matter in which step they are
generated.}

\end{newitemize}

\end{itemize}
\end{minipage}
}
\end{center}

\begin{center}
\framebox[\textwidth]{
\centering
\begin{minipage}{.98\textwidth}
\begin{center}
Reconstruction of the Round-$r$ Block by Non-Verifiers

\end{center}

Instructions for every user $i$ in the system:
User $i$ starts his own round $r$ as soon as he knows $B^{r-1}$, and waits for block information as follows.

\begin{itemize}

\item[--]
If, during such waiting and at any point of time,  there exists a string $v$ and a step $s'$ such that

\begin{newitemize}

 \item[(a)]
 $5\leq s'\leq m+3$ with $s'-2\equiv 0 \mod 3$,

 \item[(b)] $i$ has received at least $t_H$ valid messages
 $m^{r, s'-1}_j = (ESIG_j(0), ESIG_j(v), \sigma^{r, s'-1}_j)$,
  and

\item[(c)] $i$ has received a valid message $m^{r, 1}_j = (B^r_j, esig_j(H(B^r_j)), \sigma^{r, 1}_j)$ with $v = H(B^r_j)$,

\end{newitemize}

then, $i$ stops his own execution of round $r$ right away;
sets $B^r = B^r_j$; and  sets his own $CERT^r$ to be the set of messages $m^{r, s'-1}_j$ of sub-step (b).

\item[--]
If, during such waiting and at any point of time, there exists a step $s'$ such that

\begin{newitemize}

\item[(a')]
$6\leq s'\leq m+3$ with $s'-2\equiv 1\mod 3$, and

\item[(b')]
$i$ has received at least $t_H$ valid messages $m^{r, s'-1}_j = (ESIG_j(1), ESIG_j(v_j), \sigma^{r, s'-1}_j)$,
 \end{newitemize}

then, $i$  stops his own execution of round $r$ right away;
sets $B^r = B^r_\epsilon$; and sets his own $CERT^r$ to be the set of messages $m^{r, s'-1}_j$  of sub-step (b').

\item[--]
If, during such waiting and at any point of time,
$i$ has received at least $t_H$ valid messages $m^{r, m+3}_j = (ESIG_j(1), ESIG_j(H(B^r_\epsilon)), \sigma^{r, m+3}_j)$,
then $i$  stops his own execution of round $r$ right away,
sets $B^r = B^r_\epsilon$, and sets his own $CERT^r$ to be the set of messages $m^{r, m+3}_j$ for $1$ and $H(B^r_\epsilon)$.

\end{itemize}
\end{minipage}
}
\end{center}

\subsection{Analysis of $\bm{\alg_1}$}

We introduce the following notations for each round $r\geq 0$, used in the analysis.

\begin{newitemize}

\item
Let $T^r$ be the time when the first honest user knows $B^{r-1}$.

\item
Let $I^{r+1}$ be the interval $[T^{r+1}, T^{r+1}+\lambda]$.

\end{newitemize}
Note that $T^0 = 0$ by the initialization of the protocol.
For each $s\geq 1$ and $i\in SV^{r, s}$, recall that $\alpha^{r, s}_i$ and $\beta^{r, s}_i$ are respectively the starting time and the ending time of player $i$'s step $s$.
Moreover, recall that $t_s = (2s-3)\lambda + \Lambda$ for each $2\leq s\leq m+3$. In addition, let $I^0\triangleq \{0\}$ and $t_1\triangleq 0$.

Finally,
recall that
$L^r\leq m/3$ is  a random variable
representing the number of Bernoulli trials needed to see a 1, when each trial is 1 with probability
$\frac{p_h}{2}$ and there are at most $m/3$ trials. If all trials fail then $L^r\triangleq m/3$.

In the analysis we ignore computation time, as it is in fact  negligible relative to the time needed to propagate messages.
In any case, by using slightly larger $\lambda$ and $\Lambda$, the computation time can be incorporated into the analysis directly.
Most of the statements below hold ``with overwhelming probability,'' and we may not repeatedly emphasize this fact in the analysis.

\subsection{Main Theorem}

\begin{theorem}\label{thm:main}
The following properties hold with overwhelming probability for each round $r\geq 0$:

\begin{newitemize}
\item[1.]
All honest users agree on the same block $B^r$.

\item[2.]
When the leader $\ell^r$ is honest,
the block $B^r$ is generated by $\ell^r$,
$B^r$ contains a maximal payset received by $\ell^r$ by time $\alpha^{r, 1}_{\ell^r}$,
$T^{r+1} \leq T^r + 8\lambda + \Lambda$ and
all honest users know $B^r$ in the time interval $I^{r+1}$.

\item[3.]
When the leader $\ell^r$ is malicious,
$T^{r+1} \leq T^{r} + (6L^r+10)\lambda + \Lambda$ and
all honest users know $B^r$ in the time interval $I^{r+1}$.

\item[4.] $p_h= h^2(1+h-h^2)$ for $L^r$, and the leader $\ell^r$ is honest with probability at least $p_h$.
\end{newitemize}
\end{theorem}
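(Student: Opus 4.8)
The plan is to prove all four properties simultaneously by strong induction on $r$, the induction hypothesis being that properties 1--3 hold for every round $r'<r$; the base case $r=0$ is immediate because $B^{-1}$ is a public parameter known to all users at time $T^0=0$. From the hypothesis I would first extract the structural consequence that all honest users agree on $B^{r-1}$ and learn it within the length-$\lambda$ window $I^r$; hence they all compute the same seed $Q^{r-1}$ and, using the earlier (common-knowledge) blocks, the same committees $SV^{r,s}$ and potential-leader set $SV^{r,1}$. Second I would establish the ``good committee'' event: by the Honest Majority of Users assumption together with Chernoff bounds (exactly what the choice $n\approx 1500$ buys), with overwhelming probability $SV^{r,1}$ is nonempty and contains an honest user, and for every $s>1$ one has $|HSV^{r,s}|>2|MSV^{r,s}|$ and $|HSV^{r,s}|+4|MSV^{r,s}|<2n$; in particular each committee has a $>2/3$ honest supermajority and $|HSV^{r,s}|\ge t_H$. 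A union bound over the polynomially many relevant steps (and over the event that no ephemeral signature of an honest verifier is ever forged) keeps the total slack below $F$, and from here on I argue deterministically on this event.

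\textbf{Honest-leader case.} If $\ell^r$ is honest, he propagates by time $\le T^r+\lambda$ a candidate block $B^r_{\ell^r}$ carrying a maximal payset received by $\alpha^{r,1}_{\ell^r}$ together with his credential; by selective propagation and the separate fast propagation of credentials, every honest Step-2 verifier identifies $\ell^r$ as leader and sets $v'_i=H(B^r_{\ell^r})$. Thus all honest verifiers enter $GC$ with a common input, so by property 3 of graded consensus (the theorem on $GC$) every honest user outputs $(H(B^r_{\ell^r}),2)$ and therefore enters $\BBA*$ with bit $0$; by Consistency of $\BBA*$ (hence of $\BA*$) all honest users output $0$ and agree on $B^r=B^r_{\ell^r}$. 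Concretely, Ending Condition 0 is triggered already at Step 5 from the $\ge t_H$ Step-4 messages $(ESIG_j(0),ESIG_j(H(B^r_{\ell^r})),\sigma^{r,4}_j)$ plus the matching $m^{r,1}_{\ell^r}$, so honest users obtain $B^r$ and a valid $CERT^r$. Tracking the waiting times $t_2=\lambda+\Lambda$, $t_4=5\lambda+\Lambda$ and adding the $\lambda$ spread in when round $r$ starts plus the $\lambda$ to propagate the Step-4 messages yields $T^{r+1}\le T^r+8\lambda+\Lambda$ and that all honest users learn $B^r$ inside $I^{r+1}$, which is properties 2 (and half of 1).

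\textbf{Malicious-leader case.} Now honest verifiers may disagree on the leader and feed arbitrary inputs into $GC$ and $\BBA*$; but on the good-committee event every step has a $>2/3$ honest supermajority, so by the Agreement property of $\BA*$ (in its player-replaceable, gossiping form noted in the excerpt) all honest users still output the same value, giving property 1. For the running time I would bound the number of iterations of the $3$-step loop of $\BBA*$: a Coin-Genuinely-Flipped step reaches agreement whenever the verifier of that step with the smallest hashed credential is honest (so all honest users see the same common coin) \emph{and} that bit equals the value forced by the preceding steps; the first event has probability $\ge p_h$ by the seed analysis below, the second $1/2$, so each loop succeeds independently with probability $\ge p_h/2$ --- precisely the Bernoulli parameter defining $L^r$, and the choice $m=180$ guarantees $L^r<m/3$ with overwhelming probability. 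Charging $6\lambda$ per loop plus $\le 10\lambda+\Lambda$ for the opening steps and the final reconstruction gives $T^{r+1}\le T^r+(6L^r+10)\lambda+\Lambda$, with the $\lambda$-window conclusion exactly as before (the ``count a message even if contradicted'' rule inside the ending conditions is what keeps all honest users inside $I^{r+1}$).

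\textbf{Property 4 and the main obstacle.} It remains to show that the min-hashed-credential user of a $Q^{r-1}$-selected committee --- in particular $\ell^r$, and also each coin-step leader, which justifies plugging $p_h=h^2(1+h-h^2)$ into $L^r$ --- is honest with probability $\ge h^2(1+h-h^2)$. Here I would analyze the distribution of $Q^{r-1}$: conditioned on $\ell^{r-1}$ honest, $Q^{r-1}=H(SIG_{\ell^{r-1}}(Q^{r-2}),r-1)$ is, by the random-oracle model and signature uniqueness, a uniformly random string the Adversary cannot bias, so the round-$r$ leader is honest with probability exactly $h$; conditioned on $\ell^{r-1}$ malicious the Adversary has only a bounded list of options for $Q^{r-1}$ (namely $H(Q^{r-2},r-1)$, or $H(SIG_j(Q^{r-2}),r-1)$ for the malicious potential leaders $j$ he chooses to reveal), and crucially he can neither compute honest round-$r$ potential leaders' hashed credentials in advance nor --- by the look-back parameter $k$ --- inject users eligible in round $r$; a short Markov-style case analysis over these options then yields the stated bound. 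I expect this seed analysis, together with the asynchronous timing bookkeeping (maintaining across rounds the invariant that all honest users learn each block within a single $\lambda$-window), to be the main obstacles; the Byzantine-agreement content reduces cleanly to the already-stated theorems on $GC$, $\BBA*$, and $\BA*$.
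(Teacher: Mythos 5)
Your proposal follows essentially the same route as the paper: induction on $r$ with Properties 1--3 as the hypothesis, a Completeness Lemma for the honest-leader case (ending at Step 5 with $T^{r+1}\le T^r+8\lambda+\Lambda$), a Soundness Lemma for the malicious-leader case driven by the $L^r$ analysis of the Coin-Genuinely-Flipped steps with Bernoulli parameter $p_h/2$, and a separate Markov-chain argument over the Adversary's options for the seed to get $p_h=h^2(1+h-h^2)$. The only caveat is that the paper does not simply invoke the Agreement property of $\BA*$ as a black box in the malicious-leader case but re-derives agreement through a detailed case analysis of the asynchronous ending conditions; your proposal correctly anticipates the needed ingredients, so this is a matter of detail rather than a gap.
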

Before proving our main theorem, let us make two remarks.

\paragraph{Remarks.}

\begin{itemize}

\item {\em Block-Generation and True Latency.}
The time to generate block $B\upr$ is defined to be $T\uprp - T\upr$. That is, it is defined to be the difference between the first time some
honest user learns $B\upr$ and the first time some honest user learns $B\uprm$.
When the round-$r$ leader is honest, Property~2 our main theorem guarantees that the {\em exact} time to generate $B\upr$ is   $8\lambda +\Lambda$ time,
no matter what the precise value of $h>2/3$ may be. When the leader is malicious, Property~3 implies
that the {\em expected} time to generate $B\upr$ is upperbounded by $(\frac{12}{p_h}+10)\lambda+\Lambda$,
again no matter the precise value of $h$.%
\footnote{Indeed,
$\bE[T^{r+1}-T^r] \leq (6\bE[L^r]+10)\lambda + \Lambda = (6\cdot\frac{2}{p_h}+10)\lambda + \Lambda = (\frac{12}{p_h}+10)\lambda+\Lambda$.}
However, the expected time to generate $B\upr$ depends on the precise value of $h$. Indeed, by Property 4,
$p_h=h^2(1+h-h^2)$ and
the leader is honest with probability at least $p_h$,
thus
$$\bE[T^{r+1}-T^r]\leq h^2(1+h-h^2)\cdot (8\lambda+\Lambda) + (1-h^2(1+h-h^2))((\frac{12}{h^2(1+h-h^2)}+10)\lambda+\Lambda).$$
For instance, if $h = 80\%$, then $\bE[T^{r+1}-T^r] \leq 12.7\lambda +\Lambda.$

\item
{\em $\lambda$ vs. $\Lambda$.}
Note that the size of the messages sent
by the verifiers in a step $\alg$ is dominated by the length of the digital signature keys, which can remain fixed,  even when the number of users is enormous. Also note that, in any step $s>1$, the same expected number $n$ of verifiers can be used whether the number of users is 100K, 100M, or 100M. This is so because $n$ solely depends on $h$ and $F$.
In sum, therefore, barring a sudden need to increase secret key length, the value of $\lambda$ should remain the same no matter how large the number of users may be in the foreseeable future.

By contrast, for any transaction rate, the number of transactions grows with the number of users. Therefore, to process all new transactions in a timely fashion, the size of a block should also grow with the number of users, causing  $\Lambda$ to grow too.
Thus,  in the long run, we should have $\lambda <<\Lambda$.
Accordingly, it is proper to have a larger coefficient for $\lambda$, and actually a coefficient of 1 for $\Lambda$.

\end{itemize}

\begin{proof}[Proof of Theorem \ref{thm:main}]
We prove Properties 1--3 by induction:
assuming they
hold for round $r-1$
(without loss of generality, they automatically hold for ``round -1'' when $r=0$),
 we prove them for round $r$.

Since $B^{r-1}$ is uniquely defined by the inductive hypothesis, the set $SV^{r, s}$ is uniquely defined for each step $s$ of round $r$.
By the choice of $n_1$, $SV^{r, 1}\neq \emptyset$
with overwhelming probability.
We now state the following two lemmas,
proved in Sections \ref{subsec:complete}
and \ref{subsec:sound}.
Throughout the induction and in the proofs of the two lemmas, the analysis for round 0 is almost the same as the inductive step, and we will highlight the differences when they occur.

\begin{lemma}\label{lem:honestleader_roundr}[Completeness Lemma]
Assuming Properties 1--3 hold for round $r-1$, when the leader $\ell^r$ is honest,
with overwhelming probability,
\begin{newitemize}
\item
All honest users agree on the same block $B^r$,
which is generated by $\ell^r$ and
contains a maximal payset received by $\ell^r$ by time $\alpha^{r, 1}_{\ell^r}\in I^r$; and

\item
$T^{r+1} \leq T^r + 8\lambda + \Lambda$ and
all honest users know $B^r$ in the time interval $I^{r+1}$.
\end{newitemize}

\end{lemma}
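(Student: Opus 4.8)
The plan is to follow the single round $r$ chronologically, drawing on the inductive hypothesis only through its consequence that $B^{r-1}$ is unique and that every honest user $i$ learns it at local time $\tau_i^{r-1}\in I^r=[T^r,T^r+\lambda]$; since a user begins all his Step-$s$ instructions of round $r$ the instant he knows $B^{r-1}$, this fixes $\alpha_i^{r,s}\in I^r$ for every $s$. The honest leader $\ell^r$ therefore starts Step 1 at $\alpha^{r,1}_{\ell^r}\in I^r$, assembles a maximal payset from the payments he has then received, and propagates $m^{r,1}_{\ell^r}=(B^r_{\ell^r},esig_{\ell^r}(H(B^r_{\ell^r})),\sigma^{r,1}_{\ell^r})$ (and, separately, his credential) by time $T^r+\lambda$; by the MP assumption the block reaches every honest user by $T^r+\lambda+\Lambda$ and the credential sooner. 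Because $\ell^r$ is the leader, $H(\sigma^{r,1}_{\ell^r})$ is the global minimum among all potential-leader credentials (ties broken lexicographically) and no user can forge a smaller valid one; hence every honest Step-2 verifier, who acts at $\alpha^{r,2}_i+t_2\ge T^r+\lambda+\Lambda$, has by then received $m^{r,1}_{\ell^r}$, sets $\ell=\ell^r$, and outputs $v_i'=H(B^r_{\ell^r})$ into the graded-consensus protocol $GC$.

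I would then carry this common value through Steps 2--4 by a short step-by-step induction: the waiting times satisfy $t_s=t_{s-1}+2\lambda$ and each small verifier message propagates within $\lambda$, so by the time an honest Step-$s$ verifier acts he has received every honest Step-$(s-1)$ message; invoking the committee concentration $|HSV^{r,s}|>2|MSV^{r,s}|$ (overwhelmingly, by the choice of $n$ and Chernoff), ``more than $2/3$'' of the messages any honest verifier sees carry $H(B^r_{\ell^r})$ with no contradictions. This forces each honest Step-3 verifier to broadcast $ESIG_i(H(B^r_{\ell^r}))$ and hence every honest Step-4 verifier into case (a): he outputs $(v_i,g_i)=(H(B^r_{\ell^r}),2)$, enters $\BBA*$ with $b_i=0$, and sends $m^{r,4}_i=(ESIG_i(0),ESIG_i(H(B^r_{\ell^r})),\sigma^{r,4}_i)$ by time $\le T^r+6\lambda+\Lambda$, reaching everyone by $\le T^r+7\lambda+\Lambda$. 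At Step 5 I close both halves: since $|HSV^{r,4}|\ge t_H$ overwhelmingly, every honest user (Step-5 verifier, or a plain user running the Reconstruction instructions) has by time $\le T^r+8\lambda+\Lambda$ at least $t_H$ such Step-4 messages plus $m^{r,1}_{\ell^r}$, so Ending Condition 0 fires with $s'=5$, setting $B^r=B^r_{\ell^r}$ and a valid $CERT^r$. This yields $T^{r+1}\le T^r+8\lambda+\Lambda$, the claim that $B^r$ is $\ell^r$'s block with a maximal payset received by $\alpha^{r,1}_{\ell^r}\in I^r$, and --- because every message in the first honest user's $CERT^r$ was propagated by a verifier no later than $T^{r+1}$ and hence reaches all honest users within $\lambda$, while $B^r_{\ell^r}$ itself was everywhere by $T^r+\lambda+\Lambda\le T^{r+1}$ --- that all honest users know $B^r$ within $I^{r+1}$. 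For agreement it remains to exclude any honest user setting $B^r=B^r_\epsilon$: the only routes (Ending Condition 1, the final ``Otherwise'' of Step $m+3$, the last Reconstruction branch) each need $t_H$ messages of the form $(ESIG_j(1),\cdot,\cdot)$ from some committee of step $\ge 5$; but all honest Step-5 verifiers have already halted via Ending Condition 0 without sending anything, $|MSV^{r,5}|<t_H$ overwhelmingly, and destroyed ephemeral keys prevent the Adversary from fabricating such signatures even after corrupting the now-idle verifiers, so no such batch --- and, inductively, nothing at any step $s\ge 6$ --- is ever produced.

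The step I expect to be the real obstacle is this last one: making rigorous that the freshly and independently sampled small committees defeat the fully adaptive Adversary. Concretely one must argue that all the required concentration events ($|HSV^{r,4}|\ge t_H$, $|HSV^{r,s}|>2|MSV^{r,s}|$ for the relevant $s$, and $|MSV^{r,5}|<t_H$) hold simultaneously with overwhelming probability, and that player-replaceability together with single-use ephemeral keys genuinely blocks the construction of a second, conflicting certificate once $B^r_{\ell^r}$ is fixed. The timing bookkeeping, by contrast, is routine once the invariant ``an honest Step-$s$ verifier holds all honest Step-$(s-1)$ messages by the time he acts'' is in place; it just has to be stated uniformly for the inductive step and for the base case $r=0$, where $T^0=0$, $I^0=\{0\}$, and $B^{-1}$ is a public parameter known to all at time $0$.
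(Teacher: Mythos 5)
Your proposal is correct and follows essentially the same route as the paper's proof: a chronological pass through Steps 1--5 using the timing invariant that an honest Step-$s$ verifier holds all honest Step-$(s-1)$ messages when he acts, the concentration bounds $|HSV^{r,s}|>2|MSV^{r,s}|$ and $|HSV^{r,4}|\geq t_H$, $|MSV^{r,s}|<t_H$ to force unanimity on $H(B^r_{\ell^r})$ and trigger Ending Condition 0 at Step 5, and destroyed ephemeral keys to rule out any competing certificate. The timing arithmetic ($T^{r+1}\leq T^r+8\lambda+\Lambda$) and the propagation of $CERT^r$ within $I^{r+1}$ also match the paper's argument.
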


\begin{lemma}\label{lem:leadermalicious_roundr}[Soundness Lemma]
Assuming Properties 1--3 hold for round $r-1$,
when the leader $\ell^r$ is malicious,
with overwhelming probability, all honest users agree on the same block $B^r$,
$T^{r+1} \leq T^r+(6L^r+10)\lambda + \Lambda$ and
all honest users know $B^r$ in the time interval $I^{r+1}$.
\end{lemma}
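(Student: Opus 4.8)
The statement reduces, via the inductive setup already begun, to analyzing round $r$ of $\BA*$ --- that is, the graded-consensus protocol $GC$ in Steps 2--4 followed by $\BBA*$ in Steps 4 through $m+3$ --- under the inductive hypothesis that Properties 1--3 hold for round $r-1$ (the base case $r=0$ being analogous, with $B^{-1}$ the public initial parameter). In particular $B^{r-1}$, hence $Q^{r-1}$ and every verifier set $SV^{r,s}$, is already uniquely determined, and all honest users know $B^{r-1}$ within the window $I^r=[T^r,T^r+\lambda]$. The first step of the plan is to condition on a ``good sampling'' event: by (variants of) Chernoff bounds and the choice of $n$ and $n_1$, with overwhelming probability $SV^{r,1}\neq\emptyset$ and, for every $s>1$ up to $m+3$, $|HSV^{r,s}|>2|MSV^{r,s}|$ and $|HSV^{r,s}|+4|MSV^{r,s}|<2n$; a union bound over the at most $m+3$ steps keeps the failure probability negligible. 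On this event every relevant count is dominated by honest verifiers, which is exactly the $>2/3$ ``effective'' majority that the analyses of $GC$ and $\BBA*$ need.

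Next I would do the timing/propagation bookkeeping. Using the MP assumption and the prescribed waits $t_s=(2s-3)\lambda+\Lambda$, an induction on $s$ shows that when an honest verifier $i\in SV^{r,s}$ finishes his wait --- which happens at a reference time at most $T^r+\lambda+t_s$, since $i$ starts round $r$ within $I^r$ --- he has already received every message broadcast by honest verifiers of step $s-1$ (and, for Step 2 and the reconstruction step, the relevant Step-1 block messages): the increment $t_s-t_{s-1}=2\lambda$ absorbs the $\le\lambda$ spread of honest start times inherited from $I^r$ plus the $\le\lambda$ propagation delay, and $t_2=\lambda+\Lambda$ additionally absorbs the time $\Lambda$ to propagate a proposed block. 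Combined with the good-sampling event, this lets me invoke $GC$'s guarantees among the honest Step-4 verifiers: their grades differ by at most $1$, and any two with positive grade hold the same value $v$ (the third $GC$ property fails here --- a malicious leader may hand honest users different blocks --- but it is not needed).

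Now comes the $\BBA*$ analysis. Translating $GC$'s output into initial bits ($b_i=0$ iff $g_i=2$), the graded guarantees give: if some honest Step-4 verifier starts $\BBA*$ with bit $0$, then all honest verifiers have grade $\ge1$ and share the same $v$. Hence, by the Agreement and Consistency of $\BBA*$ (Theorem~\ref{thm:BBA*}, in its player-replaceable gossiping adaptation, which goes through because each step depends only on the previous step's messages), with overwhelming probability all honest users reach the same output bit and the execution halts in some Coin-Fixed step $s'$ with at least $t_H$ step-$(s'-1)$ verifiers having signed the halting value. For the number of executed loops I would lower-bound the per-loop termination probability by $p_h/2$: by the very non-manipulability argument that later gives $p_h=h^2(1+h-h^2)$ for the leader, with probability $\ge p_h$ the smallest hashed credential among a Coin-Genuinely-Flipped step's verifiers belongs to an \emph{honest} verifier (who propagates it to everyone in time, so all honest users compute the same coin), and conditioned on that its least significant bit is uniform, hence favorable with probability $1/2$; thus the number of loops is stochastically dominated by $L^r$, which is $<m/3$ with overwhelming probability, so the forced last Step $m+3$ is essentially never reached. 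Finally I read off $B^r$ and the timing. If $\BBA*$ halts via Ending Condition~0 at a Coin-Fixed-to-0 step $s'$ the output bit is $0$, the $t_H$ certifying messages $(ESIG_j(0),ESIG_j(v),\sigma^{r,s'-1}_j)$ share one $v$, and since $t_H>|MSV^{r,s'-1}|$ on the good event, one such $j$ is honest, so $j$ had received and propagated a valid $m^{r,1}_{j'}$ with $v=H(B^r_{j'})$; every honest user therefore obtains $B^r_{j'}$ and a matching $CERT^r$ and sets $B^r=B^r_{j'}$. If instead it halts via Ending Condition~1 (or, negligibly, at Step $m+3$), the bit is $1$ and every honest user locally sets $B^r=B^r_\epsilon=(r,\emptyset,H(Q^{r-1},r),H(B^{r-1}))$ --- either way all honest users agree on $B^r$. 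Since $s'\le 3L^r+5$, the honest step-$(s'-1)$ verifiers send their certifying messages by time $\le T^r+\lambda+t_{s'-1}$ with $t_{s'-1}=(2s'-5)\lambda+\Lambda$, these reach every honest user within an extra $\lambda$, and the ``count even if contradicted'' rule in the Ending Conditions makes the trigger monotone; this yields $T^{r+1}\le T^r+(6L^r+10)\lambda+\Lambda$ and that every honest user learns $B^r$ within $\lambda$ of the first, i.e. inside $I^{r+1}$.

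I expect the genuine-coin-flip step to be the crux: cleanly coupling the per-loop termination probability to the non-manipulability of $Q^{r-1}$ (the same ingredient as Property~4) while simultaneously arguing that all honest users compute the \emph{same} coin, and then pinning down the $\lambda$-window in which honest users learn $B^r$ precisely enough that the induction closes for round $r+1$. The verifier-set concentration and the $t_s$-bookkeeping are routine by comparison.
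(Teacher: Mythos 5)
Your skeleton matches the paper's: condition on the Chernoff-type sampling events, do the $t_s$ propagation bookkeeping, extract the two relevant $GC$ guarantees (dropping the third), bound the number of loops by $L^r$ via the honesty of the minimal-credential verifier in Coin-Genuinely-Flipped steps, and read off $B^r$ and the timing from whichever Ending Condition fires. The per-loop probability $p_h/2$ and the monotone-trigger observation are also as in the paper.

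There is, however, a genuine gap in your $\BBA*$ halting/timing analysis, and it sits exactly where the worst case of the lemma lives. You invoke Theorem~\ref{thm:BBA*} as a black box and then assert that the execution ``halts in some Coin-Fixed step $s'$ with at least $t_H$ step-$(s'-1)$ verifiers having signed the halting value,'' after which ``the honest step-$(s'-1)$ verifiers send their certifying messages by time $\le T^r+\lambda+t_{s'-1}$'' and everyone learns $B^r$ an extra $\lambda$ later. The paper cannot and does not argue this way: it defines the event that some player $i^*\in SV^{r,s^*}$ is the \emph{first} who \emph{should} stop, and explicitly allows $i^*$ to be malicious. In that case the certificate $CERT^r_{i^*}$ may contain up to $|MSV^{r,s^*-1}|$ signatures from malicious step-$(s^*-1)$ verifiers who never propagated them at all, and the Adversary can withhold the whole certificate or drip-feed it to selected honest users. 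Consequently no honest user is guaranteed to learn $B^r$ anywhere near time $T^r+\lambda+t_{s^*}$; the paper's Cases 2.2.a/2.2.b show the protocol instead coasts for up to three more steps on honestly propagated, consistent messages until the honest verifiers of a later Coin-Fixed step assemble a certificate from honest signatures alone, giving $T^{r+1}\le T^r+\lambda+t_{s^*+3}$. It is precisely this $+3$ that produces the stated constant: with $s'=4+3L^r$ and $s^*=s'-1$ one gets $t_{s'+2}=(6L^r+9)\lambda+\Lambda$, hence $(6L^r+10)\lambda+\Lambda$. Your accounting, which omits the withholding scenario, would yield a smaller constant by an argument that is simply not valid against this Adversary, and it also leaves unproved the claim that all honest users learn $B^r$ within the single window $I^{r+1}$ (the paper gets this because any honest user who is caused to stop early by the withheld certificate has, by then, helped propagate its messages, so everyone else follows within $\lambda$). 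Relatedly, the soundness-1 statement of Theorem~\ref{thm:BBA*} is for synchronous complete networks with a fixed player set; in the player-replaceable, asynchronous, ending-condition version the agreement and halting claims have to be re-derived, which is what the paper's case analysis (and its use of $|HSV^{r,s}|+4|MSV^{r,s}|<2n$ to rule out a $>2/3$ majority for the opposite bit) actually does.
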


Properties 1--3 hold by applying Lemmas \ref{lem:honestleader_roundr} and \ref{lem:leadermalicious_roundr}
 to $r=0$ and to the inductive step.
Finally, we restate Property 4 as the following lemma, proved in Section \ref{subsec:security}.
\begin{lemma}\label{lem:security}
Given Properties 1--3 for each round before $r$,
$p_h = h^2(1+h-h^2)$ for $L^r$, and
the leader $\ell^r$ is honest with probability at least $p_h$.
\end{lemma}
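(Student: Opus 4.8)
The plan is to reduce the statement to the distribution of the seed $Q^{r-1}$, record a ``fresh-seed'' fact, and then bound how far the Adversary can push $Q^{r-1}$ from being fresh. First I would record the fresh-seed fact. By definition $\ell^r$ is the user of $PK^{r-k}$ with the smallest hashed credential $H(SIG_i(r,1,Q^{r-1}))$; the set $PK^{r-k}$ is frozen $k\geq 1$ rounds before round $r$ (so the Adversary cannot enlarge it), and by the Honest Majority of Users assumption at least a fraction $h$ of it is honest throughout round $r$. Since the signature scheme is unique and $H$ is a random oracle, for any value $q$ of $Q^{r-1}$ that the Adversary could not have predicted before round $r-1$, the strings $SIG_i(r,1,q)$ are distinct and fresh to the oracle, so $\{H(SIG_i(r,1,q))\}_{i\in PK^{r-k}}$ are i.i.d.\ uniform; hence the minimizer is uniform over $PK^{r-k}$, and $\ell^r$ is honest with probability at least $h$ (here we use that $SV^{r,1}$ is nonempty, which holds with overwhelming probability by the choice of $n_1$). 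This probability is unaffected by adaptive corruptions, since a fresh $Q^{r-1}$ is independent of the Adversary's view up to round $r-1$, and the earliest he can adapt to it --- by learning $\ell^r$'s identity from $\ell^r$'s revealed credential --- is the very moment $\ell^r$ proposes his block.

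Next I would quantify the non-freshness of $Q^{r-1}$, using the recursion $Q^{r'}=H(SIG_{\ell^{r'}}(Q^{r'-1}),r')$ (non-empty block) or $Q^{r'}=H(Q^{r'-1},r')$ (empty block) together with Properties 1--3 for rounds $<r$. Two structural facts drive everything. (i) If round $r'$'s leader is honest, Property 2 forces $B^{r'}$ to be that leader's non-empty block, so $Q^{r'}$ equals the single value $H(SIG_{\ell^{r'}}(Q^{r'-1}),r')$, which --- being an unforgeable signature revealed only during round $r'$ --- the Adversary could not predict in advance, hence is effectively a fresh oracle output. (ii) If round $r'$'s leader is malicious, the seed the honest users adopt lies in a small, payload-independent family: the empty-block value $H(Q^{r'-1},r')$, plus one value $H(SIG_z(Q^{r'-1}),r')$ for each malicious potential leader $z$ whose hashed credential the Adversary can make beat those of all honest potential leaders of round $r'$; each such value is again a fresh, independent oracle output, so the events ``this seed option induces an honest leader in the next round'' are independent, each of probability at least $h$. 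The look-back parameter keeps this family small: since every round's potential leaders come from the frozen, $h$-honest $PK^{r'-k}$, the number of malicious potential leaders that can outrun all honest ones in a round with a fresh previous seed is stochastically dominated by a geometric random variable with parameter $h$, so the Adversary cannot stockpile arbitrarily many options even over a run of malicious-leader rounds.

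Combining (i)--(ii), the bound follows from a short, Markov-chain-style case analysis over the two preceding rounds: condition on whether $\ell^{r-1}$ and $\ell^{r-2}$ are honest, count the Adversary's resulting options for $Q^{r-1}$ in each case, apply a $(1-(1-h)^{j})$-type estimate when he has $j$ options, and feed in the lower bound on the honest-leader probability for the earlier rounds (maintained by the same induction that establishes Property~4 for all rounds); optimizing over the Adversary's choices yields $\Pr[\ell^r\text{ honest}]\geq h^2(1+h-h^2)$, and we set $p_h\triangleq h^2(1+h-h^2)$. The $L^r$ clause then requires only that each Coin-Genuinely-Flipped iteration of $\BBA*$ reaches consensus with probability at least $\tfrac{h}{2}\geq\tfrac{p_h}{2}$: with overwhelming probability an honest verifier --- whose hashed credential is unpredictable (hence uniform) and revealed only in that step --- attains the step's minimum, so the common coin $\lsb(\min_j H(\sigma^{r,s}_j))$ is an unbiased bit (factor $h$), which must then hit one fixed value (factor $\tfrac{1}{2}$); hence the number of loop iterations is stochastically dominated by $L^r$. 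The main obstacle is item (ii) combined with the case analysis --- carefully enumerating the Adversary's seed options after a malicious leader, justifying the cross-option independence (so that $j$ options all fail only with probability $(1-h)^{j}$), and ruling out option-stockpiling across consecutive malicious-leader rounds --- which is exactly where unforgeability of honest leaders' signatures, the random-oracle model, and the frozen $h$-honest population $PK^{r-k}$ are all simultaneously needed.
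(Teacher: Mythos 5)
Your overall route is the paper's: use the look-back parameter $k$ to argue that $Q^{r-1}$ was unpredictable at round $r-k$ (the paper's Lemma~\ref{lem:Q}, proved by exactly the Markov-chain/geometric-domination argument you sketch), observe that a fresh seed makes the hashed credentials of $PK^{r-k}$ i.i.d.\ uniform so the minimizer is honest with probability $h$, count the Adversary's options for the next seed after a malicious-leader round (one value $H(SIG_z(Q^{r'-1}),r')$ per malicious potential leader he can promote, plus the empty-block value $H(Q^{r'-1},r')$), note that each option yields an independent fresh permutation, and finish with a short case analysis. So the architecture is right.

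The central estimate you plug into that case analysis, however, is the wrong one. With $j$ independent seed options, each inducing an honest next-round leader with probability $h$, the Adversary chooses whichever option he prefers; the realized leader is therefore honest only if \emph{all} $j$ options induce honest leaders, i.e.\ with probability $h^{j}$ --- not $1-(1-h)^{j}$, which is the probability that at least one option induces an honest leader, an event the Adversary has no interest in realizing, and ``all $j$ options fail him'' has probability $h^{j}$, not $(1-h)^{j}$. The target formula only emerges with the $h^{j}$ factors: conditioning on the position $x$ of the first honest user in the relevant permutation (so that the Adversary has $x$ options for the seed), the honest-leader probability is $\sum_{x\geq 1}(1-h)^{x-1}h\cdot h^{x}\geq h\cdot h+(1-h)h\cdot h^{2}=h^{2}(1+h-h^{2})$; substituting $1-(1-h)^{x}$ gives $h^{2}(3-3h+h^{2})$ from the first two terms, which is not the claimed bound. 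Relatedly, conditioning on ``whether $\ell^{r-1}$ and $\ell^{r-2}$ are honest'' is not the right decomposition: the number of options is determined by how many malicious users precede the first honest one in the permutation, not by the honesty of the realized leaders of earlier rounds. Finally, in the $L^r$ clause you assert that the minimum-credential verifier of a Coin-Genuinely-Flipped step is honest with probability at least $h$; the same seed manipulation that degrades the round leader's probability to $p_h$ applies to step credentials (they too are hashes of signatures involving $Q^{r-1}$), so only ``at least $p_h$'' is justified --- which still yields the needed $p_h/2$ per iteration, so your conclusion there survives, but via an unjustified intermediate claim.
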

Combining the above three lemmas together, Theorem \ref{thm:main} holds.
\end{proof}

\medskip
The lemma below states several important properties about round $r$ given the inductive hypothesis,
and will be used in the proofs of the above three lemmas.

\begin{lemma}\label{lem:basic}
Assume Properties 1--3 hold for round $r-1$.
For each step $s\geq 1$ of round $r$ and each honest verifier $i\in HSV^{r, s}$, we have that
\begin{newitemize}
\item[(a)] $\alpha^{r, s}_i \in I^r$;

\item[(b)] if player $i$ has waited an amount of time $t_s$,
then $\beta^{r, s}_i \in [T^r+t_s, T^r+\lambda+t_s]$ for $r>0$ and $\beta^{r, s}_i = t_s$ for $r=0$; and

\item[(c)] if player $i$ has waited an amount of time $t_s$,
then by time $\beta^{r, s}_i$, he has received all messages sent by all honest verifiers $j\in HSV^{r, s'}$ for all steps $s'<s$.
\end{newitemize}
Moreover, for each step $s\geq 3$, we have that
\begin{newitemize}
\item[(d)] there do not exist two different players $i, i'\in SV^{r, s}$
and two different values $v, v'$ of the same length,
 such that
both players have waited an amount of time $t_s$,
more than 2/3 of all the valid messages $m^{r, s-1}_j$ player $i$ receives
have signed for $v$, and
more than 2/3 of all the valid messages $m^{r, s-1}_j$ player $i'$ receives
have signed for $v'$.

\end{newitemize}
\end{lemma}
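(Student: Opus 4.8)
The plan is to establish (a)--(d) in that order, using throughout the inductive hypothesis that Properties 1--3 hold for round $r-1$ (for $r=0$, the initialization facts $T^0=0$ and ``all users know $B^{-1}$ at time $0$''), together with the choices of $n_1,n$, which guarantee with overwhelming probability that $SV^{r,1}\ne\emptyset$ and that $|HSV^{r,s}|>2|MSV^{r,s}|$ for every step $s>1$. Part (a) is immediate: each user begins every step of round $r$ precisely when he learns $B^{r-1}$, and by Properties 2--3 for round $r-1$ every honest user, hence every honest verifier $i\in HSV^{r,s}$, learns $B^{r-1}$ inside $I^r=[T^r,T^r+\lambda]$ (at time $0$ when $r=0$). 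For (b), having waited the full amount $t_s$, player $i$ has $\beta^{r,s}_i=\alpha^{r,s}_i+t_s$, and combining with (a) gives $\beta^{r,s}_i\in[T^r+t_s,T^r+\lambda+t_s]$, which is $t_s$ when $r=0$.

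Next I would prove (c). Fix $j\in HSV^{r,s'}$ with $s'<s$ who sent an $(r,s')$-message; by (b) he sent it by time $T^r+\lambda+t_{s'}$ (by time $T^r+\lambda$ if $s'=1$, since $t_1=0$ and Step 1 has no wait). A Step-$1$ message carries a block and reaches $i$ within $\Lambda$; every later verifier message is short and reaches $i$ within $\lambda$ by the (enriched) MP assumption. So $i$ possesses $j$'s message by $T^r+\lambda+\Lambda$ if $s'=1$ and by $T^r+2\lambda+t_{s'}$ if $s'>1$; since $t_s\ge\lambda+\Lambda$ for $s\ge2$ and $t_s-t_{s'}=2(s-s')\lambda\ge2\lambda$ for $1<s'<s$, both bounds are $\le T^r+t_s\le\beta^{r,s}_i$ (replace $T^r$ by $0$ for $r=0$). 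Hence by $\beta^{r,s}_i$ player $i$ holds every message ever sent by an honest verifier of a prior step of round $r$.

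The heart is (d). Fix $s\ge3$ and assume, for contradiction, such $i,i'\in SV^{r,s}$ and distinct same-length values $v,v'$, both $i,i'$ having waited the full $t_s$. Write $H=|HSV^{r,s-1}|$, $M=|MSV^{r,s-1}|$, so $H>2M$. The first move is to show that \emph{every} honest verifier in $HSV^{r,s-1}$ sent an $(r,s-1)$-message: if some $j$ did not, he stopped round $r$ early via an Ending Condition, and by (a)--(b) this occurred by time $T^r+\lambda+t_{s-1}$, at which moment $j$ had collected and relayed the $t_H$ certifying messages forming $CERT^r$; these then reach everyone, $i$ included, within one more $\lambda$, i.e.\ by $T^r+2\lambda+t_{s-1}=T^r+t_s\le\beta^{r,s}_i$, so $i$ too would have triggered an Ending Condition and would not have waited the full $t_s$ --- a contradiction. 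With all $H$ honest $(r,s-1)$-messages sent, (c) ensures $i$ and $i'$ each received all of them; since only $SV^{r,s-1}$-members produce valid $(r,s-1)$-messages and contradictory messages from one player are never counted, $i$ counts $t_i=H+\mu^i$ valid messages with $\mu^i\le M$ malicious ones, and likewise for $i'$. Let $\mathcal{H}_v,\mathcal{H}_{v'}\subseteq HSV^{r,s-1}$ be the honest verifiers whose (unique) message signs $v$, resp.\ $v'$; these are disjoint, since an honest verifier signs one value of each length. From ``$>2/3$ of the $t_i$ messages signed $v$'' we get $|\mathcal{H}_v|+\mu^i_v>\tfrac23(H+\mu^i)$, where $\mu^i_v\le\mu^i$ is the malicious count $i$ attributes to $v$; hence $|\mathcal{H}_v|>\tfrac23 H-\tfrac13\mu^i\ge\tfrac23 H-\tfrac13 M$, and symmetrically $|\mathcal{H}_{v'}|>\tfrac23 H-\tfrac13 M$. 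Then $H\ge|\mathcal{H}_v|+|\mathcal{H}_{v'}|>\tfrac43 H-\tfrac23 M$ forces $H<2M$, contradicting $H>2M$. This yields (d).

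The main obstacle, as the outline shows, is precisely the hidden hypothesis inside (d) that the honest verifier population of step $s-1$ is not silently drained by Ending Conditions before its members speak: the bound $H>2M$ governs only the full set $HSV^{r,s-1}$, so if an unknown fraction had stopped early the counting step would break down. The timing argument above closes this --- whoever stops early has already assembled and gossiped a full $CERT^r$, and that certificate propagates to everyone within one more $\lambda$, hence before $\beta^{r,s}_i$ --- and it is the one spot where the exact spacing $t_s-t_{s-1}=2\lambda$ matters beyond what (c) already uses. Everything else is bookkeeping with the $t_s$'s and the $2/3$-majority arithmetic.
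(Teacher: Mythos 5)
Your proof is correct and follows the paper's own decomposition and timing arguments almost exactly for (a)--(c); for (d) you keep the same contradiction-via-counting strategy (disjointness of the honest signers of $v$ and $v'$, plus $|HSV^{r,s-1}|>2|MSV^{r,s-1}|$ and the unforgeability of ephemeral signatures), but you organize the arithmetic differently and add one substantive justification the paper leaves implicit. On the arithmetic: the paper first proves $|MSV_i^{r,s-1}(v)|\leq |HSV^{r,s-1}(v)|$ and then deduces $|HSV^{r,s-1}(v)|>\tfrac12|HSV^{r,s-1}|$ for each of $v,v'$, reaching $|HSV^{r,s-1}|>|HSV^{r,s-1}|$; your single-pass bound $|\mathcal{H}_v|>\tfrac23 H-\tfrac13 M$ landing on $H<2M$ is shorter and equivalent. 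More importantly, you explicitly close the possibility that some honest $(r,s-1)$-verifiers stopped via an Ending Condition and never propagated, which would invalidate putting the full $|HSV^{r,s-1}|$ in the denominator of the majority ratio; the paper's proof of Lemma 4 simply asserts via Property (c) that $i$ has received all honest $(r,s-1)$-messages without addressing this (the issue is instead managed downstream, in the Soundness Lemma's case analysis of the first stopping step $s^*$). Your patch --- an early-stopper has already relayed a full $CERT^r$, which reaches $i$ within one more $\lambda$, i.e.\ by $T^r+t_s\leq\beta^{r,s}_i$, so $i$ could not have completed the full wait either --- is sound; the only detail you elide is that triggering Ending Condition 0 also requires the block message $m^{r,1}_j$, which arrives in time because at least one of the $t_H$ certifying messages is from an honest verifier of a step $\geq 4$ and hence was sent after $\beta^{r,1}_j+\Lambda$. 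That omission is at the same level of informality as the paper itself, so I would not count it against you.
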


\begin{proof}
Property (a) follows directly from the inductive hypothesis, as player $i$ knows $B^{r-1}$ in the time interval $I^r$ and starts his own step $s$ right away.
Property (b) follows directly from (a): since player $i$
has waited  an amount of time $t_s$ before acting,
 $\beta^{r, s}_i = \alpha^{r, s}_i+t_s$.
 Note that $\alpha^{r, s}_i = 0$ for $r=0$.

We now prove Property (c). If $s=2$, then by Property (b),
for all verifiers $j\in HSV^{r, 1}$
 we have
$$\beta^{r, s}_i = \alpha^{r, s}_i + t_s \geq T^r+t_s = T^r+\lambda + \Lambda \geq \beta^{r, 1}_j + \Lambda.$$
Since each verifier $j\in HSV^{r, 1}$ sends his message
at time $\beta^{r, 1}_j$ and
the message reaches all honest users in at most $\Lambda$ time,
by time $\beta^{r, s}_i$ player $i$ has received the messages sent by all verifiers in $HSV^{r, 1}$ as desired.

If $s>2$, then $t_s = t_{s-1}+2\lambda$. By Property (b),
for all steps $s'<s$ and all verifiers $j\in HSV^{r, s'}$,
$$\beta^{r, s}_i = \alpha^{r, s}_i + t_s \geq T^r+t_s = T^r+t_{s-1}+2\lambda \geq T^r+t_{s'}+2\lambda
=T^r+\lambda + t_{s'}+\lambda \geq \beta^{r, s'}_j + \lambda.$$
Since each verifier $j\in HSV^{r, s'}$ sends his message
at time $\beta^{r, s'}_j$ and the message reaches all honest users in at most $\lambda$ time,
by time $\beta^{r, s}_i$ player $i$ has received all messages sent by all honest verifiers in $HSV^{r, s'}$ for all $s'<s$. Thus Property (c) holds.

Finally, we prove Property (d).
Note that the verifiers $j\in SV^{r, s-1}$
sign at most two things  in Step $s-1$
using their ephemeral secret keys:
a value $v_j$ of the same length as the output of the hash function,
and also a bit $b_j\in \{0, 1\}$ if $s-1\geq 4$.
That is why in the statement of the lemma
we require that $v$ and $v'$ have the same length:
many verifiers may have signed both a hash value $v$ and a bit $b$,
thus both pass the $2/3$ threshold.

Assume for the sake of contradiction that there exist
the desired verifiers $i, i'$ and values $v, v'$.
Note that some malicious verifiers in $MSV^{r, s-1}$ may have signed both $v$ and $v'$,
but each honest verifier in $HSV^{r, s-1}$ has signed at most one of them.
By Property (c), both $i$ and $i'$ have received
all messages sent by all honest verifiers in $HSV^{r, s-1}$.

Let $HSV^{r, s-1}(v)$ be the set of honest
$(r, s-1)$-verifiers who have signed $v$,
$MSV_i^{r, s-1}$ the set of malicious
$(r, s-1)$-verifiers from whom  $i$ has received a valid message,
and
$MSV_i^{r, s-1}(v)$ the subset of $MSV_i^{r, s-1}$
 from whom  $i$ has received a valid message signing $v$.
By the requirements for $i$ and $v$,
we have
\begin{equation}\label{equ:6-1}
ratio\triangleq \frac{|HSV^{r, s-1}(v)|+|MSV_i^{r, s-1}(v)|}{|HSV^{r, s-1}| + |MSV_i^{r, s-1}|}> \frac{2}{3}.
\end{equation}
We first show
\begin{equation}\label{equ:6-2}
|MSV_i^{r, s-1}(v)|\leq |HSV^{r, s-1}(v)|.
\end{equation}
 Assuming otherwise,
 by the relationships among the parameters,
with overwhelming probability $|HSV^{r, s-1}|> 2|MSV^{r, s-1}|\geq 2|MSV_i^{r, s-1}|$,
thus
$$ratio <
\frac{|HSV^{r, s-1}(v)|+|MSV_i^{r, s-1}(v)|}{3|MSV_i^{r, s-1}|}
< \frac{2|MSV_i^{r, s-1}(v)|}{3|MSV_i^{r, s-1}|}\leq \frac{2}{3},$$
contradicting Inequality \ref{equ:6-1}.

Next, by Inequality \ref{equ:6-1} we have
\begin{eqnarray*}
& & 2|HSV^{r, s-1}| + 2|MSV_i^{r, s-1}| < 3|HSV^{r, s-1}(v)|+ 3|MSV_i^{r, s-1}(v)| \\
&\leq& 3|HSV^{r, s-1}(v)|+ 2|MSV_i^{r, s-1}|+|MSV_i^{r, s-1}(v)|.
\end{eqnarray*}
Combining with Inequality \ref{equ:6-2},
$$2|HSV^{r, s-1}| < 3|HSV^{r, s-1}(v)|+ |MSV_i^{r, s-1}(v)| \leq 4|HSV^{r, s-1}(v)|,$$
which implies
$$|HSV^{r, s-1}(v)|> \frac{1}{2}|HSV^{r, s-1}|.$$
Similarly, by the requirements for $i'$ and $v'$, we have
$$|HSV^{r, s-1}(v')|> \frac{1}{2}|HSV^{r, s-1}|.$$
Since an honest verifier $j\in HSV^{r, s-1}$ destroys his ephemeral
secret key $sk^{r, s-1}_j$
before propagating his message,
the Adversary cannot forge $j$'s signature for a value that $j$ did not sign,
after learning that $j$ is a verifier.
Thus, the two inequalities above imply
$|HSV^{r, s-1}| \geq |HSV^{r, s-1}(v)|+ |HSV^{r, s-1}(v')|> |HSV^{r, s-1}|$, a contradiction.
Accordingly, the desired $i, i', v, v'$ do not exist, and Property (d) holds.
\end{proof}

\subsection{The Completeness Lemma}\label{subsec:complete}

\noindent
{\bf Lemma \ref{lem:honestleader_roundr}.} [Completeness Lemma, restated]
{\em Assuming Properties 1--3 hold for round $r-1$, when the leader $\ell^r$ is honest,
with overwhelming probability,
\begin{newitemize}
\item
All honest users agree on the same block $B^r$,
which is generated by $\ell^r$ and
contains a maximal payset received by $\ell^r$ by time $\alpha^{r, 1}_{\ell^r}\in I^r$; and

\item
$T^{r+1} \leq T^r + 8\lambda + \Lambda$ and
all honest users know $B^r$ in the time interval $I^{r+1}$.
\end{newitemize}
}

\begin{proof}

By the inductive hypothesis and Lemma \ref{lem:basic},
for each step $s$ and verifier $i\in HSV^{r, s}$,
$\alpha^{r, s}_i\in I^r$.
Below we analyze the protocol step by step.

\paragraph{Step 1.}
By definition, every honest verifier $i\in HSV^{r, 1}$
propagates the desired message $m^{r, 1}_i$
at time $\beta^{r, 1}_i = \alpha^{r, 1}_i$,
where
$m^{r, 1}_i = (B^r_i, esig_i(H(B^r_i)), \sigma^{r, 1}_i)$,
$B^r_i = (r, PAY^r_i, SIG_i(Q^{r-1}), H(B^{r-1}))$,
and $PAY^r_i$ is a maximal payset among all payments that $i$ has seen by time $\alpha^{r, 1}_i$.

\paragraph{Step 2.}
Arbitrarily fix an honest verifier $i\in HSV^{r, 2}$.
By Lemma \ref{lem:basic},
when player $i$ is done waiting at time $\beta^{r, 2}_i = \alpha^{r, 2}_i + t_2$,
he has received all messages
sent by verifiers in $HSV^{r, 1}$,
including
$m^{r, 1}_{\ell^r}$.
By the definition of $\ell^r$,
there does not exist another player in $PK^{r-k}$ whose credential's hash value is smaller than $H(\sigma^{r, 1}_{\ell^r})$.
Of course, the Adversary can corrupt $\ell^r$ after seeing that $H(\sigma^{r, 1}_{\ell^r})$ is very small,
but by that time player $\ell^r$ has destroyed his ephemeral key and the message $m^{r, 1}_{\ell^r}$ has been propagated.
Thus verifier $i$ sets his own leader to be player $\ell^r$.
Accordingly, at time $\beta^{r, 2}_i$,
 verifier $i$ propagates
$m^{r, 2}_i =  (ESIG_i(v'_i), \sigma^{r, 2}_i)$,
where
$v'_i = H(B^r_{\ell^r})$.
When $r=0$, the only difference is that $\beta^{r, 2}_i = t_2$ rather than being in a range.
Similar things can be said for future steps and we will not emphasize them again.

\paragraph{Step 3.}
Arbitrarily fix an honest verifier $i\in HSV^{r, 3}$.
By Lemma \ref{lem:basic},
when player $i$ is done waiting at time $\beta^{r, 3}_i = \alpha^{r, 3}_i + t_3$, he
has received all messages sent by verifiers in $HSV^{r, 2}$.

By the relationships among the parameters, with overwhelming probability $|HSV^{r, 2}|> 2|MSV^{r, 2}|$.
Moreover, no honest verifier would sign contradicting messages, and
the Adversary cannot forge a signature of an honest verifier after the latter has destroyed his corresponding ephemeral secret key.
Thus more than $2/3$ of all the valid $(r, 2)$-messages $i$ has received
are from honest verifiers and of the form
$m^{r, 2}_j=(ESIG_j(H(B^r_{\ell^r})), \sigma^{r, 2}_j)$, with no contradiction.

Accordingly, at time $\beta^{r, 3}_i$
player $i$ propagates
$m^{r, 3}_i = (ESIG_i(v'), \sigma^{r, 3}_i)$, where $v' = H(B^r_{\ell^r})$.

\paragraph{Step 4.}
Arbitrarily fix an honest verifier $i\in HSV^{r, 4}$.
By Lemma \ref{lem:basic},
player $i$ has received all messages sent by verifiers in $HSV^{r, 3}$ when he is done waiting at time
$\beta^{r, 4}_i = \alpha^{r, 4}_i + t_4$.
Similar to Step 3,
more than $2/3$ of all the valid $(r, 3)$-messages $i$ has received
are from honest verifiers and of the form
$m^{r, 3}_j=(ESIG_j(H(B^r_{\ell^r})), \sigma^{r, 3}_j)$.

Accordingly, player $i$ sets $v_i = H(B^r_{\ell^r})$, $g_i = 2$ and $b_i = 0$.
At time $\beta^{r, 4}_i = \alpha^{r, 4}_i + t_4$
he propagates
$m^{r, 4}_i = (ESIG_i(0), ESIG_i(H(B^r_{\ell^r})), \sigma^{r, 4}_i)$.

\paragraph{Step 5.}
Arbitrarily fix an honest verifier $i\in HSV^{r, 5}$.
By Lemma \ref{lem:basic},
player $i$ would have received all messages sent by the verifiers in $HSV^{r, 4}$ if he has waited till time $\alpha^{r, 5}_i+t_5$.
Note that
$|HSV^{r, 4}|\geq t_H$.%
\footnote{Strictly speaking, this happens with very high probability but not necessarily overwhelming.
However, this probability slightly effects the running time of the protocol, but does not affect its correctness. When $h=80\%$, then $|HSV^{r, 4}|\geq t_H$ with probability
 $1-10^{-8}$.
If this event does not occur, then the protocol will continue for another 3 steps.
As the probability that this does not occur in two steps is negligible,
the protocol will finish at Step 8. In expectation, then, the number of steps needed is almost 5.}
Also note that
all verifiers in $HSV^{r, 4}$ have signed for $H(B^r_{\ell^r})$.

As $|MSV^{r, 4}|< t_H$,
there does not exist any $v'\neq H(B^r_{\ell^r})$ that could have been signed by $t_H$ verifiers in $SV^{r, 4}$ (who would necessarily be malicious),
so player $i$ does not stop before
he has received $t_H$ valid messages
$m^{r, 4}_j = (ESIG_j(0), ESIG_j(H(B^r_{\ell^r})), \sigma^{r, 4}_j)$.
Let $T$ be the time when the latter event happens.
Some of those messages may be from malicious players, but because $|MSV^{r, 4}|< t_H$,
at least one of them is from an honest verifier in $HSV^{r, 4}$ and is sent after time $T^r+t_4$.
Accordingly, $T\geq T^r+t_4 > T^r + \lambda+\Lambda\geq \beta^{r, 1}_{\ell^r}+\Lambda$, and by time $T$ player $i$ has also received the message
$m^{r, 1}_{\ell^r}$.
By the construction of the protocol, player $i$ stops at time $\beta^{r, 5}_i = T$
 without propagating anything;
 sets
$B^r = B^r_{\ell^r}$; and sets his own $CERT^r$ to be the set of $(r, 4)$-messages for $0$ and $H(B^r_{\ell^r})$ that he has received.

\paragraph{Step $s>5$.}
Similarly,
for any step $s>5$ and any verifier $i\in HSV^{r, s}$,
player $i$ would have received all messages sent by the
verifiers in $HSV^{r, 4}$ if he has waited till time $\alpha^{r, s}_i+t_s$.
By the same analysis,
player~$i$ stops without propagating anything, setting $B^r = B^r_{\ell^r}$ (and setting his own $CERT^r$ properly).
Of course, the malicious verifiers may not stop and may propagate arbitrary messages,
but because $|MSV^{r, s}|<t_H$,
by induction
no other $v'$ could be signed by $t_H$ verifiers in any step $4\leq s'<s$,
thus the honest verifiers only stop because they have received $t_H$ valid $(r, 4)$-messages for $0$ and $H(B^r_{\ell^r})$.

\paragraph{Reconstruction of the Round-$r$ Block.}
The analysis of Step 5 applies to a generic honest user $i$ almost without any change.
Indeed, player $i$ starts his own round $r$ in the interval $I^r$ and will only stop at a time $T$ when he has received $t_H$ valid
$(r, 4)$-messages for $H(B^r_{\ell^r})$.
Again because at least one of those messages are from honest verifiers and are sent after time $T^r+t_4$,
player $i$ has also received $m^{r, 1}_{\ell^r}$ by time $T$.
Thus he sets $B^r = B^r_{\ell^r}$ with the proper $CERT^r$.

\medskip
It only remains to show that all honest users finish their round $r$ within the time interval $I^{r+1}$.
By the analysis of Step 5, every honest verifier $i\in HSV^{r, 5}$ knows $B^r$ on or before $\alpha^{r, 5}_i + t_5\leq T^r+\lambda + t_5 = T^r + 8\lambda +\Lambda$.
Since $T^{r+1}$ is the time when the first honest user $i^r$ knows $B^r$,
we have
$$T^{r+1}\leq T^r + 8\lambda +\Lambda$$
 as desired.
Moreover, when player $i^r$ knows $B^r$, he has already helped propagating the messages in his $CERT^r$.
Note that all those messages will be received by all honest users within time $\lambda$,
even if player $i^r$ were the first player to propagate them.
Moreover, following the analysis above we have $T^{r+1}\geq T^r + t_4\geq \beta^{r, 1}_{\ell^r} + \Lambda$, thus all honest users have received $m^{r, 1}_{\ell^r}$ by time $T^{r+1}+\lambda$.
Accordingly, all honest users know $B^r$ in the time interval $I^{r+1} = [T^{r+1}, T^{r+1}+\lambda]$.

Finally, for $r=0$ we actually have $T^1\leq t_4+\lambda = 6\lambda + \Lambda$.
Combining everything together, Lemma \ref{lem:honestleader_roundr} holds.
\end{proof}

\subsection{The Soundness Lemma}\label{subsec:sound}

\noindent
{\bf Lemma \ref{lem:leadermalicious_roundr}.} [Soundness Lemma, restated]
{\em Assuming Properties 1--3 hold for round $r-1$,
when the leader $\ell^r$ is malicious,
with overwhelming probability, all honest users agree on the same block $B^r$,
$T^{r+1} \leq T^r+(6L^r+10)\lambda + \Lambda$ and
all honest users know $B^r$ in the time interval $I^{r+1}$.
}

\begin{proof}
We consider the two parts of the protocol, GC and $\BBA*$, separately.
\paragraph{GC.}
By the inductive hypothesis and by Lemma \ref{lem:basic},
for any step $s\in \{2, 3, 4\}$ and any honest verifier $i\in HSV^{r, s}$,
when player $i$ acts at time $\beta^{r, s}_i = \alpha^{r, s}_i+t_s$,
he has received all messages sent by all the honest verifiers in steps $s'<s$.
We distinguish two possible cases for step 4.
\begin{newitemize}
\item[Case 1.]
{\em No verifier $i\in HSV^{r, 4}$ sets $g_i = 2$.}

In this case, by definition $b_i = 1$ for all verifiers $i\in HSV^{r, 4}$. That is, they start with an agreement on $1$ in the binary BA protocol.
They may not have an agreement on their $v_i$'s, but this does not matter as we will see in the binary BA.

\item[Case 2.]
{\em There exists a verifier $\hat{i}\in HSV^{r, 4}$ such that $g_{\hat{i}} = 2$.}

In this case, we show that
\begin{newitemize}
\item[(1)]
 $g_i\geq 1$ for all $i\in HSV^{r, 4}$,

\item[(2)] there exists a  value $v'$ such that $v_i = v'$ for all $i\in HSV^{r, 4}$,
and

\item[(3)] there exists a valid message $m^{r, 1}_{\ell}$ from some verifier $\ell\in SV^{r, 1}$ such that $v'= H(B^r_\ell)$.
\end{newitemize}

Indeed, since player $\hat{i}$ is honest and sets $g_{\hat{i}} = 2$,
more than $2/3$ of all the valid messages $m^{r, 3}_j$ he has received
are for the same value
$v'\neq \bot$, and he has set $v_{\hat{i}} = v'$.

By Property (d) in Lemma \ref{lem:basic},
for any other honest
$(r, 4)$-verifier $i$,
it cannot be that
more than $2/3$ of all the valid messages $m^{r, 3}_j$ that $i'$ has received
are for the same value $v''\neq v'$.
Accordingly, if $i$ sets $g_i = 2$,
it must be that $i$ has seen $>2/3$ majority for $v'$ as well and
set $v_i = v'$, as desired.

\smallskip
Now consider an arbitrary verifier $i\in HSV^{r, 4}$ with $g_i <2$.
Similar to the analysis of Property (d) in Lemma \ref{lem:basic},
because player $\hat{i}$ has seen $>2/3$ majority for $v'$,
more than $\frac{1}{2}|HSV^{r, 3}|$ honest $(r, 3)$-verifiers have signed $v'$.
Because $i$ has received all messages by honest $(r, 3)$-verifiers by time
$\beta^{r, 4}_i = \alpha^{r, 4}_i + t_4$, he has in particular
received more than $\frac{1}{2}|HSV^{r, 3}|$ messages  from them  for $v'$.
Because $|HSV^{r, 3}|>2|MSV^{r, 3}|$, $i$ has seen $>1/3$ majority for $v'$.
Accordingly, player $i$ sets $g_i = 1$, and Property (1) holds.

\smallskip
Does player $i$ necessarily set $v_i = v'$?
Assume there exists a different value $v''\neq \bot$ such that player $i$ has also
seen $>1/3$ majority for $v''$.
Some of those messages may be from malicious verifiers,
but at least one of them is from some honest verifier $j\in HSV^{r, 3}$:
indeed, because $|HSV^{r, 3}|>2|MSV^{r, 3}|$ and $i$ has received all messages from $HSV^{r, 3}$,
the set of malicious verifiers from whom $i$ has received a valid $(r, 3)$-message counts for $<1/3$ of all the valid messages he has received.

By definition, player $j$ must have seen
$>2/3$ majority for $v''$ among all the valid $(r, 2)$-messages he has received.
However, we already have that some other honest $(r, 3)$-verifiers have
seen $>2/3$ majority for $v'$ (because they signed $v'$).
By Property (d) of Lemma \ref{lem:basic}, this cannot happen and such a value $v''$ does not exist.
Thus player $i$ must have set $v_i = v'$ as desired, and Property (2) holds.

\smallskip
Finally, given that some honest $(r, 3)$-verifiers have seen $>2/3$ majority for $v'$,
some (actually, more than half of)
honest $(r, 2)$-verifiers
have signed for $v'$ and propagated their messages.
By the construction of the protocol,
those honest $(r, 2)$-verifiers
must have received a valid message $m^{r, 1}_\ell$ from some player $\ell\in SV^{r, 1}$ with $v' = H(B^r_\ell)$,
thus Property (3) holds.
\end{newitemize}

\paragraph{$\bm{\BBA*}$.}
We again distinguish two cases.
\begin{newitemize}

\item[Case 1.] {\em All verifiers $i\in HSV^{r, 4}$ have $b_i=1$.}

This happens following Case 1 of GC.
As $|MSV^{r, 4}|<t_H$,
in this case no verifier in $SV^{r, 5}$ could collect or generate
$t_H$ valid $(r, 4)$-messages for bit $0$.
Thus, no honest verifier in $HSV^{r, 5}$ would stop because he knows a non-empty block $B^r$.

Moreover, although there are at least $t_H$ valid $(r, 4)$-messages for bit $1$,
$s'=5$ does not satisfy $s'-2 \equiv 1 \mod 3$,
thus no honest verifier in $HSV^{r, 5}$
would stop because he knows $B^r = B^r_\epsilon$.

Instead, every verifier $i\in HSV^{r, 5}$ acts at time $\beta^{r, 5}_i = \alpha^{r, 5}_i + t_5$,
by when he has received all messages sent by $HSV^{r, 4}$ following Lemma \ref{lem:basic}.
Thus player $i$ has seen $>2/3$ majority for $1$ and sets $b_i = 1$.

\smallskip
In Step 6 which is a Coin-Fixed-To-1 step,
although $s'=5$ satisfies $s'-2\equiv 0 \mod 3$,
there do not exist $t_H$ valid $(r, 4)$-messages for bit 0,
thus no verifier in $HSV^{r, 6}$ would stop because he knows a non-empty block $B^r$.
However, with $s'=6$, $s'-2 \equiv 1 \mod 3$ and
there do exist
$|HSV^{r, 5}|\geq t_H$
valid $(r, 5)$-messages for bit 1
from $HSV^{r, 5}$.

For every verifier $i\in HSV^{r, 6}$, following Lemma \ref{lem:basic},
on or before time $\alpha^{r, 6}_i + t_6$ player~$i$ has received all messages from $HSV^{r, 5}$, thus
$i$ stops without propagating anything and sets $B^r = B^r_\epsilon$.
His $CERT^r$ is the set of $t_H$ valid $(r, 5)$-messages $m^{r, 5}_j = (ESIG_j(1), ESIG_j(v_j), \sigma^{r, 5}_j)$
received by him when he stops.

\smallskip
Next, let player $i$ be either an honest verifier in a step $s>6$ or a generic honest user (i.e., non-verifier).
Similar to the proof of Lemma \ref{lem:honestleader_roundr},
player $i$ sets $B^r = B^r_\epsilon$ and sets his own $CERT^r$ to be the set of
$t_H$ valid $(r, 5)$-messages $m^{r, 5}_j = (ESIG_j(1), ESIG_j(v_j), \sigma^{r, 5}_j)$ he has received.

\smallskip
Finally, similar to Lemma \ref{lem:honestleader_roundr},
$$T^{r+1}\leq \min_{i\in HSV^{r, 6}} \alpha^{r, 6}_i+t_6 \leq T^r+\lambda+t_6 = T^r+10\lambda+\Lambda,$$
and all honest users know $B^r$ in the time interval $I^{r+1}$,
because the first honest user $i$ who knows $B^r$ has helped propagating the $(r, 5)$-messages in his $CERT^r$.

\item[Case 2.]
{\em There exists a verifier $\hat{i}\in HSV^{r, 4}$ with $b_{\hat{i}} = 0$.}

This happens following Case 2 of GC and
is the more complex case.
By the analysis of GC, in this case there exists a valid message
$m^{r, 1}_\ell$ such that $v_i = H(B^r_\ell)$ for all $i\in HSV^{r, 4}$.
Note that the verifiers in $HSV^{r, 4}$ may not have an agreement on their $b_i$'s.

For any step $s\in \{5, \dots, m+3\}$ and verifier $i\in HSV^{r, s}$,
by Lemma \ref{lem:basic} player $i$ would have received all messages sent by all honest verifiers in $HSV^{r, 4}\cup\cdots\cup HSV^{r, s-1}$ if he has waited for time~$t_s$.

\smallskip
We now consider the following event $E$:
{\em there exists a step $s^*\geq 5$ such that,
for the first time in the binary BA, some player
$i^*\in SV^{r, s^*}$ (whether malicious or honest)
should stop without propagating anything.}
We use ``should stop'' to emphasize the fact that, if player~$i^*$ is malicious,
then he may pretend that he should not stop according to the protocol and propagate messages of the Adversary's choice.

\smallskip
Moreover, by the construction of the protocol,  either
\begin{newitemize}
\item[($E.a$)]  $i^*$
is able to collect or generate at least $t_H$ valid messages
$m^{r, s'-1}_j = (ESIG_j(0), ESIG_j(v),$ $\sigma^{r, s'-1}_j)$ for the same $v$ and $s'$,
with $5\leq s'\leq s^*$ and $s'-2\equiv 0\mod 3$;
or

\item[($E.b$)] $i^*$ is able to collect or generate at least
$t_H$ valid messages $m^{r, s'-1}_j= (ESIG_j(1), ESIG_j(v_j),$ $\sigma^{r, s'-1}_j)$ for the same $s'$,
with
$6\leq s'\leq s^*$ and $s'-2\equiv 1 \mod 3$.
\end{newitemize}
Because the honest $(r, s'-1)$-messages are received by all honest $(r, s')$-verifiers before they are done waiting in Step $s'$,
and because the Adversary receives everything no later than the honest users,
without loss of generality we have $s' = s^*$ and player $i^*$ is malicious.
Note that we did not require the value $v$ in $E.a$ to be the hash of a valid block: as it will become clear in the analysis,
$v=H(B^r_\ell)$ in this sub-event.

\smallskip
Below we first analyze Case 2 following event $E$,
and then show that the value of $s^*$
is essentially distributed accordingly to $L^r$
(thus event $E$ happens before Step $m+3$ with overwhelming probability
given the relationships for parameters).
To begin with, for any step $5\leq s<s^*$,
every honest verifier $i\in HSV^{r, s}$ has waited  time $t_{s}$
and set $v_{i}$ to be the majority vote of the valid $(r, s-1)$-messages he has received.
Since player $i$ has received all honest $(r, s-1)$-messages following Lemma \ref{lem:basic},
since all honest verifiers in $HSV^{r, 4}$ have signed $H(B^r_\ell)$ following Case 2 of GC,
and since $|HSV^{r, s-1}|> 2|MSV^{r, s-1}|$ for each $s$,
by induction we have that player $i$ has set
$$v_{i} = H(B^r_\ell).$$
The same holds for every honest verifier $i\in HSV^{r, s^*}$ who does not stop without propagating anything.
Now we consider Step $s^*$ and distinguish four subcases.

\begin{newitemize}
\item[Case 2.1.a.]
{\em Event $E.a$ happens and
there exists an honest verifier $i'\in HSV^{r, s^*}$ who should also stop without propagating anything.}

In this case, we have $s^*-2\equiv 0 \mod 3$ and  Step $s^*$ is a Coin-Fixed-To-0 step.
By definition, player
$i'$ has received at least $t_H$ valid $(r, s^*-1)$-messages
of the form $(ESIG_j(0), ESIG_j(v), \sigma^{r, s^*-1}_j)$.
Since all verifiers in $HSV^{r, s^*-1}$ have signed $H(B^r_\ell)$
and $|MSV^{r, s^*-1}|< t_H$,
we have $v = H(B^r_\ell)$.

Since
at least
$t_H - |MSV^{r, s^*-1}|\geq 1$
of the $(r, s^*-1)$-messages received by $i'$ for $0$ and $v$
are sent by verifiers in $HSV^{r, s^*-1}$
after time $T^r+t_{s^*-1}\geq T^r+t_4 \geq T^r+\lambda + \Lambda \geq \beta^{r, 1}_\ell + \Lambda$,
player $i'$ has received $m^{r, 1}_\ell$ by the time he receives those $(r, s^*-1)$-messages.
Thus player $i'$ stops without propagating anything; sets $B^r = B^r_\ell$; and sets his own $CERT^r$ to be the set of valid $(r, s^*-1)$-messages for 0 and $v$ that he has received.

\medskip
Next, we show that, any other verifier $i\in HSV^{r, s^*}$ has either stopped with $B^r = B^r_\ell$,
or has set $b_{i} = 0$ and propagated $(ESIG_{i}(0), ESIG_{i}(H(B^r_\ell)), \sigma^{r, s}_{i})$.
Indeed, because Step $s^*$ is the first time some verifier should stop without propagating anything,
there does not exist a step $s'< s^*$ with $s'-2\equiv 1 \mod 3$
such that $t_H$ $(r, s'-1)$-verifiers have signed $1$.
Accordingly, no verifier in $HSV^{r, s^*}$ stops with $B^r = B^r_\epsilon$.

Moreover, as all honest verifiers in steps $\{4, 5, \dots, s^*-1\}$ have signed $H(B^r_\ell)$,
there does not exist a step $s'\leq s^*$ with $s'-2\equiv 0 \mod 3$
such that $t_H$ $(r, s'-1)$-verifiers have signed some $v''\neq H(B^r_\ell)$ ---indeed,
$|MSV^{r, s'-1}|<t_H$.
Accordingly, no verifier in $HSV^{r, s^*}$ stops with $B^r \neq B^r_\epsilon$ and $B^r\neq B^r_\ell$.
That is, if a player $i\in HSV^{r, s^*}$ has stopped without propagating anything, he must have set
$B^r = B^r_\ell$.

If a player $i\in HSV^{r, s^*}$ has waited time $t_{s^*}$ and propagated a message at time $\beta^{r, s^*}_i
= \alpha^{r, s^*}_i+t_{s^*}$,
he has received all messages from $HSV^{r, s^*-1}$,
including at least $t_H-|MSV^{r, s^*-1}|$ of them for
$0$ and $v$.
If $i$ has seen $>2/3$ majority for $1$,
then he has seen more than $2(t_H-|MSV^{r, s^*-1}|)$ valid $(r, s^*-1)$-messages
for $1$,
with more than
$2t_H - 3|MSV^{r, s^*-1}|$ of them from honest $(r, s^*-1)$-verifiers.
However, this implies $|HSV^{r, s^*-1}|\geq
t_H-|MSV^{r, s^*-1}| + 2t_H - 3|MSV^{r, s^*-1}|> 2n-4|MSV^{r, s^*-1}|$,
contradicting the fact that
$$|HSV^{r, s^*-1}| + 4|MSV^{r, s^*-1}|< 2n,$$
 which comes
from the relationships for the parameters.
Accordingly, $i$ does not see $>2/3$ majority for 1, and he sets $b_i = 0$
because Step $s^*$ is a Coin-Fixed-To-0 step.
As we have seen, $v_i = H(B^r_\ell)$.
Thus $i$ propagates
$(ESIG_{i}(0), ESIG_{i}(H(B^r_\ell)), \sigma^{r, s}_{i})$ as we wanted to show.

\medskip
For Step $s^*+1$, since player $i'$ has helped propagating the messages in his $CERT^r$ on or before time $\alpha^{r, s^*}_{i'}+t_{s^*}$, all honest verifiers in $HSV^{r, s^*+1}$
have received at least
$t_H$ valid $(r, s^*-1)$-messages
for bit $0$ and value $H(B^r_\ell)$ on or before they are done waiting.
Furthermore, verifiers in $HSV^{r, s^*+1}$ will not stop before receiving those $(r, s^*-1)$-messages,
because there do not exist any other $t_H$ valid
$(r, s'-1)$-messages for bit $1$
with
$s'-2\equiv 1\mod 3$ and $6\leq s'\leq s^*+1$, by the definition of Step $s^*$.
In particular, Step $s^*+1$ itself is a Coin-Fixed-To-1 step,
but no honest verifier in $HSV^{r, s^*}$ has propagated a message for 1,
and $|MSV^{r, s^*}|<t_H$.

Thus all honest verifiers in $HSV^{r, s^*+1}$ stop without propagating anything and set $B^r = B^r_\ell$:
as before, they have received $m^{r, 1}_\ell$ before they receive the desired $(r, s^*-1)$-messages.%
\footnote{If $\ell$ is malicious, he might
send out $m^{r, 1}_\ell$ late, hoping that some honest users/verifiers have not received
$m^{r, 1}_\ell$ yet when they receive the desired certificate for it.
However, since verifier $\hat{i}\in HSV^{r, 4}$ has set $b_{\hat{i}}=0$ and $v_{\hat{i}}=H(B^r_\ell)$,
as before we have that more than half of honest verifiers $i\in HSV^{r, 3}$ have set $v_i = H(B^r_\ell)$.
This further implies more than half of honest verifiers $i\in HSV^{r, 2}$ have set $v_i = H(B^r_\ell)$, and those $(r, 2)$-verifiers have all
received $m^{r, 1}_\ell$. As the Adversary cannot distinguish a verifier from a non-verifier,
he cannot target the propagation of $m^{r, 1}_\ell$ to $(r, 2)$-verifiers without having the non-verifiers seeing it.
In fact, with high probability, more than half (or a good constant fraction) of all honest users have seen $m^{r, 1}_\ell$
after waiting for $t_2$ from the beginning of their own round $r$. From here on,
the time $\lambda'$ needed for $m^{r, 1}_\ell$ to reach the remaining honest users
is much smaller than $\Lambda$, and for simplicity we do not write it out in the analysis.
If $4\lambda\geq \lambda'$ then the analysis goes through without any change: by the end of Step 4, all honest users would have received $m^{r, 1}_\ell$.
If the size of the block becomes enormous and $4\lambda<\lambda'$, then in Steps 3 and~4, the protocol could ask each verifier to wait for $\lambda'/2$ rather than $2\lambda$,
and the analysis continues to hold.}
The same can be said for all honest verifiers in future steps and all honest users in general.
In particular,
 they all know $B^r = B^r_\ell$ within the time interval $I^{r+1}$
and
$$T^{r+1}\leq \alpha^{r, s^*}_{i'}+t_{s^*} \leq T^r+\lambda + t_{s^*}.$$

\item[Case 2.1.b.]
{\em Event $E.b$ happens and there exists
an honest verifier $i'\in HSV^{r, s^*}$ who should also stop without propagating anything.}

In this case we have $s^*-2\equiv 1\mod 3$ and
Step $s^*$ is a Coin-Fixed-To-1 step.
The analysis is similar to Case 2.1.a and many details have been omitted.

As before,
player $i'$ must have received
at least $t_H$ valid $(r, s^*-1)$-messages
of the form $(ESIG_j(1), ESIG_j(v_j), \sigma^{r, s^*-1}_j)$.
Again by the definition of $s^*$, there does not exist
a step $5\leq s'<s^*$ with $s'-2\equiv 0\mod 3$, where
at least $t_H$ $(r, s'-1)$-verifiers have signed 0 and the same $v$.
Thus player $i'$ stops without propagating anything;
sets $B^r = B^r_\epsilon$; and sets his own $CERT^r$ to be the set of valid
$(r, s^*-1)$-messages for bit 1 that he has received.

\smallskip
Moreover,
any other verifier $i\in HSV^{r, s^*}$ has either stopped with $B^r = B^r_\epsilon$, or has set $b_{i} = 1$ and propagated
$(ESIG_{i}(1), ESIG_{i}(v_{i}), \sigma^{r, s^*}_{i})$.
Since player $i'$ has helped propagating the
$(r, s^*-1)$-messages in his $CERT^r$ by time $\alpha^{r, s^*}_{i'}+t_{s^*}$,
again all honest verifiers in $HSV^{r, s^*+1}$ stop without propagating anything and set $B^r = B^r_\epsilon$.
Similarly, all honest users know $B^r = B^r_\epsilon$ within the time interval $I^{r+1}$
and
$$T^{r+1}\leq \alpha^{r, s^*}_{i'}+t_{s^*} \leq T^r+\lambda + t_{s^*}.$$

\item[Case 2.2.a.]
{\em Event $E.a$ happens and there does not exist an honest verifier $i'\in HSV^{r, s^*}$ who
should also stop without propagating anything.}

In this case, note that
player $i^*$ could have a valid $CERT^r_{i^*}$ consisting of the $t_H$ desired $(r, s^*-1)$-messages
the Adversary is able to collect or generate.
However,
the malicious verifiers may not help propagating those messages,
so we cannot conclude that the honest users will receive them in time $\lambda$.
In fact, $|MSV^{r, s^*-1}|$ of those messages may be from malicious $(r, s^*-1)$-verifiers,
who did not propagate their messages at all and only send them to the malicious verifiers in step $s^*$.

\medskip
Similar to Case 2.1.a, here we have
$s^*-2\equiv 0\mod 3$, Step $s^*$ is a Coin-Fixed-To-0 step,
and
the $(r, s^*-1)$-messages in $CERT^r_{i^*}$
are for bit $0$
and $v=H(B^r_\ell)$. Indeed, all honest $(r, s^*-1)$-verifiers sign $v$,
thus the Adversary cannot generate $t_H$ valid $(r, s^*-1)$-messages
for a different $v'$.

Moreover,
all honest $(r, s^*)$-verifiers have waited time $t_{s^*}$
and do not see $>2/3$ majority
for bit $1$,
again because
$|HSV^{r, s^*-1}| + 4|MSV^{r, s^*-1}|< 2n$.
Thus every honest verifier $i\in HSV^{r, s^*}$ sets $b_{i} = 0$, $v_{i} = H(B^r_\ell)$ by the majority vote,
and propagates $m^{r, s^*}_{i} = (ESIG_{i}(0), ESIG_{i}(H(B^r_\ell)), \sigma^{r, s^*}_{i})$ at time $\alpha^{r, s^*}_{i}+t_{s^*}$.

\medskip
Now consider the honest verifiers in Step $s^*+1$ (which is a Coin-Fixed-To-1 step).
If the Adversary actually sends the messages in $CERT^r_{i^*}$
 to some of them and causes them to stop,
then similar to Case 2.1.a, all honest users know $B^r = B^r_\ell$ within the time interval $I^{r+1}$
and
$$T^{r+1} \leq T^r+\lambda + t_{s^*+1}.$$

Otherwise, all honest verifiers in Step $s^*+1$ have
received all the $(r, s^*)$-messages
for $0$ and
$H(B^r_\ell)$ from $HSV^{r, s^*}$
after waiting time $t_{s^*+1}$,
which leads to $>2/3$ majority, because $|HSV^{r, s^*}|>2|MSV^{r, s^*}|$.
Thus all the verifiers in $HSV^{r, s^*+1}$
propagate their messages for $0$ and $H(B^r_\ell)$ accordingly.
Note that the verifiers in $HSV^{r, s^*+1}$ do not stop with
$B^r = B^r_\ell$,
because Step $s^*+1$
is not a Coin-Fixed-To-0 step.

\medskip
Now consider the honest verifiers in Step $s^*+2$ (which is
a Coin-Genuinely-Flipped step).
If the Adversary sends the messages in $CERT^r_{i^*}$ to some of them and causes them to stop,
then again all honest users know $B^r = B^r_\ell$ within the time interval $I^{r+1}$ and
$$T^{r+1}\leq T^r+\lambda + t_{s^*+2}.$$

Otherwise, all honest verifiers in Step $s^*+2$ have received
all the $(r, s^*+1)$-messages for $0$ and $H(B^r_\ell)$
from $HSV^{r, s^*+1}$
after waiting time $t_{s^*+2}$,
which leads to $>2/3$ majority.
Thus all of them propagate
 their messages for $0$ and $H(B^r_\ell)$ accordingly:
that is they do not ``flip a coin'' in this case.
Again, note that they do not stop without propagating,
because Step $s^*+2$ is not a Coin-Fixed-To-0 step.

\medskip
Finally, for the honest verifiers in Step $s^*+3$ (which is another Coin-Fixed-To-0 step),
all of them would have received
at least $t_H$ valid messages for $0$ and $H(B^r_\ell)$ from $HSV^{s^*+2}$,
if they really wait time $t_{s^*+3}$.
Thus, whether or not the Adversary sends the messages in $CERT^r_{i^*}$ to
any of them,
all verifiers in $HSV^{r, s^*+3}$ stop with $B^r = B^r_\ell$, without propagating anything.
Depending on how the Adversary acts, some of them may have their own $CERT^r$ consisting of those $(r, s^*-1)$-messages in $CERT^r_{i^*}$,
and the others have their own $CERT^r$ consisting of those $(r, s^*+2)$-messages.
In any case, all honest users know $B^r = B^r_\ell$ within the time interval $I^{r+1}$ and
$$T^{r+1}\leq T^r+\lambda + t_{s^*+3}.$$

\item[Case 2.2.b.]
{\em Event $E.b$ happens and there does not exist an honest verifier $i'\in HSV^{r, s^*}$ who
should also stop without propagating anything.}

The analysis in this case is similar to those in Case 2.1.b and Case 2.2.a, thus many details
have been omitted.
In particular, $CERT^r_{i^*}$ consists of the $t_H$ desired
$(r, s^*-1)$-messages for bit 1
that the Adversary is able to collect or generate,
$s^*-2\equiv 1 \mod 3$,
Step $s^*$ is a Coin-Fixed-To-1 step,
 and no honest $(r, s^*)$-verifier could have
 seen $>2/3$ majority for $0$.

\medskip
Thus, every verifier $i\in HSV^{r, s^*}$ sets $b_{i}=1$ and
propagates $m^{r, s^*}_{i} = (ESIG_{i}(1), ESIG_{i}(v_{i}),$ $\sigma^{r, s^*}_{i})$ at time
$\alpha^{r, s^*}_{i}+t_{s^*}$.
Similar to Case 2.2.a,
in at most 3 more steps
(i.e., the protocol reaches Step $s^*+3$, which is another Coin-Fixed-To-1 step),
all honest users know $B^r = B^r_\epsilon$ within the time interval $I^{r+1}$.
Moreover, $T^{r+1}$ may be $\leq T^r+\lambda + t_{s^*+1}$,
or $\leq T^r+\lambda + t_{s^*+2}$,
or $\leq T^r+\lambda + t_{s^*+3}$,
depending on when is the first time an honest verifier is able to stop without propagating.

\end{newitemize}

Combining the four sub-cases,
we have that
all honest users know $B^r$ within the time interval
$I^{r+1}$, with
\begin{newitemize}
\item[]
$T^{r+1}\leq T^r+\lambda + t_{s^*}$
in Cases 2.1.a and 2.1.b,
and

\item[]
$T^{r+1}\leq T^r+\lambda + t_{s^*+3}$
in Cases 2.2.a and 2.2.b.
\end{newitemize}
It remains to upper-bound $s^*$ and thus $T^{r+1}$
for Case 2,
and we do so by considering how many times the
Coin-Genuinely-Flipped steps
are actually executed in the protocol: that is,
some honest verifiers actually have flipped a coin.

\smallskip
In particular, arbitrarily fix a Coin-Genuinely-Flipped
step $s'$ (i.e., $7\leq s'\leq m+2$ and $s'-2 \equiv 2 \mod 3$),
and let
$\ell' \triangleq \argmin_{j\in SV^{r, s'-1}} H(\sigma^{r, s'-1}_j)$.
For now let us assume $s'<s^*$, because otherwise
no honest verifier actually flips a coin in Step $s'$,
according to previous discussions.

\smallskip
By the definition of $SV^{r, s'-1}$,
the hash value of the credential of $\ell'$
is also the smallest among all users in $PK^{r-k}$.
Since the hash function is a random oracle,
ideally player $\ell'$ is honest with
probability at least $h$.
As we will show later, even if the Adversary tries his best to predict the output of the random oracle and tilt the probability,
player $\ell'$ is still honest with probability at least $p_h = h^2(1+h-h^2)$.
Below we consider the case when that indeed happens:
that is, $\ell'\in HSV^{r, s'-1}$.

\smallskip
Note that every honest verifier $i\in HSV^{r, s'}$ has received all
messages from $HSV^{r, s'-1}$ by time $\alpha^{r, s'}_i+t_{s'}$.
If player $i$ needs to flip a coin
(i.e., he has not seen $>2/3$ majority
for the same bit $b\in \{0, 1\}$),
then he sets $b_i = \lsb(H(\sigma^{r, s'-1}_{\ell'}))$.
If there exists another honest verifier $i'\in HSV^{r, s'}$
who
has seen $>2/3$ majority for
a bit $b\in \{0, 1\}$,
then by Property (d) of Lemma~\ref{lem:basic},
no honest verifier in $HSV^{r, s'}$ would have
seen $>2/3$ majority for a
bit $b'\neq b$.
Since $\lsb(H(\sigma^{r, s'-1}_{\ell'})) = b$ with probability $1/2$,
all honest verifiers in $HSV^{r, s'}$ reach an agreement on $b$ with probability $1/2$.
Of course, if
such a verifier $i'$ does not exist,
then all honest verifiers in $HSV^{r, s'}$
agree on the bit $\lsb(H(\sigma^{r, s'-1}_{\ell'}))$ with probability 1.

\smallskip
Combining the probability for $\ell'\in HSV^{r, s'-1}$,
we have that the honest verifiers in $HSV^{r, s'}$
reach an agreement on a bit $b\in \{0,1\}$
with probability at least $\frac{p_h}{2} = \frac{h^2(1+h-h^2)}{2}$.
Moreover, by induction on the majority vote as before, all honest verifiers in $HSV^{r, s'}$
have their $v_i$'s set to be $H(B^r_\ell)$.
Thus, once an agreement on $b$ is reached in Step $s'$,
$T^{r+1}$ is
$$\mbox{either } \leq T^r+\lambda +t_{s'+1} \mbox{ or } \leq T^r+\lambda + t_{s'+2},$$
depending on whether $b=0$ or $b=1$,
following the analysis of Cases 2.1.a and 2.1.b. In particular, no further Coin-Genuinely-Flipped step will be executed:
that is, the verifiers in such steps still check that they are
the verifiers and thus wait, but they will all stop without propagating anything.
Accordingly, before Step $s^*$, the number of times the Coin-Genuinely-Flipped steps
are executed
is distributed according to the random variable $L^r$.
Letting Step $s'$ be the last Coin-Genuinely-Flipped step according to $L^r$,
by the construction of the protocol we have
$$s' = 4+3L^r.$$

\smallskip
When should the Adversary make Step $s^*$ happen if he wants to delay $T^{r+1}$ as much
as possible? We can even assume that the Adversary knows the realization of $L^r$ in advance.
If $s^*>s'$ then it is useless, because
the honest verifiers have already reached an agreement in Step $s'$.
To be sure, in this case $s^*$ would be $s'+1$ or $s'+2$, again depending on whether
$b=0$ or $b=1$.
However, this is actually Cases 2.1.a and 2.1.b,
and the resulting $T^{r+1}$ is exactly the same as in that case.
More precisely,
$$T^{r+1}\leq T^r+\lambda+t_{s^*} \leq T^r+\lambda+t_{s'+2}.$$

If $s^*<s'-3$ ---that is, $s^*$ is before the second-last Coin-Genuinely-Flipped step--- then
by the analysis of Cases 2.2.a and 2.2.b,
$$T^{r+1}\leq T^r+\lambda + t_{s^*+3}<T^r+\lambda+t_{s'}.$$
That is,
the Adversary is actually
making the agreement on $B^r$ happen faster.

\smallskip
If $s^* = s'-2$ or $s'-1$ ---that is,
the Coin-Fixed-To-0 step or the Coin-Fixed-To-1 step immediately before Step $s'$---
then
by the analysis of the four sub-cases,
the honest verifiers in Step $s'$ do not get to flip coins anymore,
 because they have either stopped without propagating,
or have seen $>2/3$ majority
for the same bit $b$.
Therefore we have
$$T^{r+1}\leq T^r+\lambda+t_{s^*+3}\leq T^r+\lambda+t_{s'+2}.$$

In sum, no matter what $s^*$ is, we have
\begin{eqnarray*}
& & T^{r+1}\leq T^r+\lambda+t_{s'+2}= T^r+\lambda + t_{3L^r+6} \\
&=&
T^r + \lambda + (2(3L^r+6)-3)\lambda + \Lambda \\
&=& T^r +(6L^r+10)\lambda + \Lambda,
\end{eqnarray*}
as we wanted to show. The worst case is when $s^*=s'-1$ and Case 2.2.b happens.
\end{newitemize}

Combining Cases 1 and 2 of the binary BA protocol, Lemma \ref{lem:leadermalicious_roundr} holds.
\end{proof}

\subsection{Security of the Seed $\bm{Q^r}$ and Probability of An Honest Leader}\label{subsec:security}
It remains to prove Lemma \ref{lem:security}.
Recall that the verifiers in round $r$
are taken from $PK^{r-k}$ and are chosen according to the quantity $Q^{r-1}$.
The reason for introducing the look-back parameter $k$ is to make sure that, back at round $r-k$, when the Adversary is able to add new malicious users to $PK^{r-k}$,
he cannot predict the quantity $Q^{r-1}$ except with negligible probability.
Note that the hash function is a random oracle and $Q^{r-1}$ is one of its inputs when selecting verifiers for
round~$r$.
Thus, no matter how malicious users are added to $PK^{r-k}$, from the Adversary's point of view each one of them is still
selected to be a verifier in a step of round $r$ with the required probability $p$ (or $p_1$ for Step 1).
More precisely, we have the following lemma.

\begin{lemma}\label{lem:Q}
With $k = O(\log_{1/2} F)$, for each round $r$, with overwhelming probability the Adversary did not query $Q^{r-1}$
to the random oracle back at round $r-k$.
\end{lemma}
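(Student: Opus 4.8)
The plan is to show that, as seen by the Adversary at round $r-k$, the $256$-bit string $Q^{r-1}$ is distributed like a fresh output of the random oracle, so that none of his polynomially many queries up to round $r-k$ can involve it except with negligible probability; the look-back parameter $k$ will enter only to guarantee, with probability $\ge 1-F$, the one structural hypothesis this argument rests on — that some round in the window $\{r-k,\dots,r-1\}$ had an honest leader. (Recall that ``querying $Q^{r-1}$'' is the relevant notion because evaluating the round-$r$ selection function $H(SIG_i(r,s,Q^{r-1}))$ requires submitting a string containing $Q^{r-1}$ to $H$, so if the Adversary never did this by round $r-k$, the malicious users he injects at round $r-k$ are, from his viewpoint, selected into round-$r$ committees exactly with the honest probabilities.)

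First I would record the recursive structure: $Q^{r-1}$ is obtained from the publicly known $Q^{r-k-1}$ by $k$ successive applications of $H$, one per round $j\in\{r-k,\dots,r-1\}$, where the input producing $Q^j$ is $(Q^{j-1},j)$ when $B^j$ is empty and $(SIG_{\ell^j}(Q^{j-1}),j)$ otherwise — in both cases an input that literally contains $Q^{j-1}$ as a substring. Now suppose that for some round $e\in\{r-k,\dots,r-1\}$ the leader $\ell^e$ is honest. Then by the Completeness Lemma (Lemma~\ref{lem:honestleader_roundr}) $B^e$ is generated by $\ell^e$, hence non-empty, with overwhelming probability, so $Q^e=H(SIG_{\ell^e}(Q^{e-1}),e)$; since $\ell^e$ is honest it reveals $m^{e,1}_{\ell^e}$ (and thus $SIG_{\ell^e}(Q^{e-1})$) only at round $e>r-k$, and by unforgeability of its unique signature the Adversary cannot produce $SIG_{\ell^e}(Q^{e-1})$ before then except with negligible probability. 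Hence, conditioned on the Adversary's round-$(r-k)$ view, the input $(SIG_{\ell^e}(Q^{e-1}),e)$ is one he has not queried, and therefore $Q^e$ is uniform on $\{0,1\}^{256}$ and independent of that view.

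I would then propagate this forward by induction on $j$ from $e+1$ to $r-1$: if $Q^{j-1}$ is uniform and independent of the Adversary's round-$(r-k)$ view, then to have queried (by round $r-k$) the input producing $Q^j$ he would have had to include the string $Q^{j-1}$ in one of his at most $q_{RO}$ queries, which occurs with probability at most $q_{RO}\cdot 2^{-256}$; conditioning on this not happening keeps $Q^j$ uniform and independent of that view, and this holds regardless of whether $B^j$ is empty and of which (possibly malicious) user plays $\ell^j$, precisely because the input producing $Q^j$ contains $Q^{j-1}$ in every case. Summing the $q_{RO}\cdot 2^{-256}$ error over the at most $k$ rounds yields: whenever the window $\{r-k,\dots,r-1\}$ contains an honest leader, the probability that the Adversary queried $Q^{r-1}$ by round $r-k$ is at most $(k+1)\,q_{RO}\,2^{-256}$ plus the negligible forgery and empty-block correction terms, hence negligible.

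The remaining, and main, obstacle is to bound $\Pr[\text{every one of }\ell^{r-k},\dots,\ell^{r-1}\text{ is malicious}]$ by $F$ when $k=\Theta(\log_{1/2}F)$. The danger is that a malicious leader $\ell^{j-1}$, or an empty round $j-1$, hands the Adversary a handful of candidate values for $Q^{j-1}$, and he will adopt whichever makes one of his keys win round $j$; iterating this could in principle keep the leader malicious for many rounds. The structural reason it cannot persist is the look-back rule: the potential leaders of any round $j\le r-1$ are drawn from $PK^{j-k}$ with $j-k\le r-k-1$, a user population frozen before round $r-k$, so the Adversary can neither enlarge it nor, for any fixed candidate value of the seed, learn the honest potential leaders' hashed credentials in advance (these are fresh, independent oracle values on inputs he cannot compute). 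Consequently, no matter the adversarial strategy, each round ``resets'' to an honest leader with probability bounded below by a constant $c=c(h)>0$ (using $h>2/3$), and a Markov-chain-type analysis of the joint process ``number of seed options the Adversary currently controls / honesty of the current leader'' shows $\Pr[\text{all }k\text{ leaders malicious}]\le(1-c)^{\Omega(k)}$. Taking $k=\Theta(\log_{1/2}F)$ makes this at most $F$, and a final union bound over this event together with the negligible forgery, collision, and empty-block events above completes the proof.
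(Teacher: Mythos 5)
Your overall strategy is the same as the paper's: (i) if any round in the window $\{r-k,\dots,r-1\}$ has an honest leader, then by unforgeability of his unique signature the seed becomes a fresh, unqueried random-oracle output and stays unpredictable through round $r-1$; (ii) the whole burden therefore falls on bounding the probability that all $k$ leaders in the window are malicious. Part (i) of your write-up is correct and matches the paper. The gap is in part (ii), which is the heart of the lemma and which you assert rather than prove. Specifically, your claim that ``each round resets to an honest leader with probability bounded below by a constant $c=c(h)>0$'' is false as stated: in the state where the Adversary currently holds $x$ candidate values for the seed (because the first $x-1$ positions of the current permutation are malicious, or because he can also force an empty block), the probability that \emph{every} one of the $x$ induced permutations for the next round begins with an honest user is $h^x$, which tends to $0$ as $x$ grows. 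So there is no uniform per-round escape probability, and $\Pr[\text{all }k\text{ leaders malicious}]\le(1-c)^{k}$ does not follow from a product bound.

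What actually saves the argument — and what the paper spends most of its proof establishing — is that reaching a large number of options $x$ is itself exponentially unlikely: the transition probability into state $y$ is $P(x,y)=(1-(1-h)^y)^x-(1-(1-h)^{y-1})^x\approx x(1-h)^{y-1}$, so the chain drifts strongly toward the absorbing state $0$. The paper makes this quantitative by showing $P^{(2)}(x,y)\le\tfrac{1}{2}P(x,y)$ for $h$ close to $1$, deducing $P^{(k)}(x,y)\le P(x,y)/2^{k-1}$ by induction, and then using the geometric decay of each row of $P$ in $y$ to control the union over all states $y\ge 2$; this is what yields $k=O(\log_{1/2}F)$ (with a more delicate constant for general $h\in(2/3,1]$). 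Your proposal correctly identifies the right Markov chain and the role of the frozen population $PK^{r-k-1}$, but without an argument of this type the quantitative conclusion — and hence the stated value of $k$ — is not established.
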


\begin{proof}

We proceed by induction.
Assume that
for each round $\gamma<r$,
the Adversary did not query $Q^{\gamma-1}$ to the random oracle back at round $\gamma-k$.%
\footnote{As $k$ is a small integer, without loss of generality one can assume that
the first $k$ rounds of the protocol are run under a safe environment and the inductive hypothesis holds
for those rounds.}
Consider the following mental game played by the Adversary at round $r-k$, trying to predict $Q^{r-1}$.

In Step 1 of each round $\gamma  = r-k,\dots, r-1$,
given a specific $Q^{\gamma-1}$ not queried to the random oracle,
by ordering the players
$i\in PK^{\gamma-k}$
according to the hash values $H(SIG_i(\gamma, 1, Q^{\gamma-1}))$ increasingly,
we  obtain a random permutation over $PK^{\gamma-k}$.
By definition, the leader $\ell^{\gamma}$ is the first user in the permutation and
is honest with probability $h$.
Moreover, when $PK^{\gamma-k}$ is large enough,
for any integer $x\geq 1$, the probability that the first $x$ users in the permutation
are all malicious but the $(x+1)$st is honest
is $(1-h)^{x}h$.

If $\ell^{\gamma}$ is honest, then $Q^{\gamma} = H(SIG_{\ell^{\gamma}}(Q^{\gamma-1}), \gamma)$.
As the Adversary cannot forge the signature of~$\ell^{\gamma}$,
$Q^{\gamma}$ is distributed uniformly at random
from the Adversary's point of view and,
except with exponentially small probability,%
\footnote{That is, exponential in the length of the output of $H$. Note that this probability is way smaller than $F$.}
was not queried to $H$ at round $r-k$.
Since
each $Q^{\gamma+1}, Q^{\gamma+2}, \dots, Q^{r-1}$
respectively is the output of $H$ with $Q^{\gamma}, Q^{\gamma+1}, \dots, Q^{r-2}$ as one of the inputs,
they all look random to the Adversary and
the Adversary could not have queried $Q^{r-1}$ to $H$ at round $r-k$.

Accordingly, the only case where the Adversary can predict $Q^{r-1}$
with good probability at round $r-k$ is when all the leaders
$\ell^{r-k}, \dots, \ell^{r-1}$ are malicious.
Again consider a round $\gamma \in \{r-k\dots, r-1\}$
and the random permutation over $PK^{\gamma-k}$ induced by the corresponding hash values.
If for some $x\geq 2$, the first $x-1$ users in the permutation
are all malicious and the $x$-th is honest,
then the Adversary has $x$ possible choices for $Q^{\gamma}$:
either of the form
$H(SIG_i(Q^{\gamma-1}, \gamma))$, where $i$ is one of
the first $x-1$ malicious users, by making player $i$
the actually leader of round $\gamma$;
or $H(Q^{\gamma-1}, \gamma)$, by forcing $B^{\gamma} = B^{\gamma}_\epsilon$.
Otherwise, the leader of round $\gamma$ will be the first honest user in the permutation and
$Q^{r-1}$ becomes unpredictable to the Adversary.

Which of the above $x$ options of $Q^{\gamma}$ should the Adversary pursue?
To help the Adversary answer this question, in the mental game we actually make him more powerful than he actually is, as follows.
First of all, in reality, the Adversary cannot
compute the hash of a honest user's signature, thus cannot
decide, for each $Q^\gamma$,
the number $x(Q^\gamma)$ of malicious users at the beginning of the random permutation in round $\gamma+1$ induced by $Q^\gamma$.
In the mental game,
we give him the numbers $x(Q^\gamma)$ for free.
Second of all, in reality,
having the first $x$ users in the permutation all being malicious
does not necessarily mean they can all be made into the leader,
because the hash values of their signatures must also be less than $p_1$.
We have ignored this constraint in the mental game,
giving the Adversary even more advantages.

It is easy to see that in the mental game,
the optimal option for the Adversary, denoted by $\hat{Q}^{\gamma}$, is the one that
produces the longest sequence of malicious users
at the beginning of the random permutation in round $\gamma+1$.
Indeed, given a specific $Q^{\gamma}$, the protocol does not depend on $Q^{\gamma-1}$ anymore
and the Adversary can solely focus on the new permutation in round $\gamma+1$,
which has the same distribution for the number of malicious users at the beginning.
Accordingly, in each round $\gamma$,
 the above mentioned $\hat{Q}^{\gamma}$
 gives him the largest number of options for $Q^{\gamma+1}$ and thus
 maximizes the probability that
the consecutive leaders are all malicious.

Therefore, in the mental game
the Adversary is following a Markov Chain from round $r-k$ to round $r-1$,
with the state space being $\{0\}\cup\{x: x\geq 2\}$.
State $0$ represents the fact that the first user in the random permutation in the current round $\gamma$
is honest, thus the Adversary fails the game
for predicting $Q^{r-1}$;
 and
each state $x\geq 2$ represents the fact that the first $x-1$ users in the permutation are malicious
and the $x$-th is honest,
thus the Adversary has $x$ options for $Q^{\gamma}$.
The transition probabilities $P(x, y)$ are as follows.
\begin{newitemize}
\item
$P(0, 0) = 1$ and $P(0, y) = 0$ for any $y\geq 2$. That is, the Adversary fails the game once the first user in the permutation becomes honest.

\item
$P(x, 0) = h^x$ for any $x\geq 2$. That is, with probability $h^x$, all the $x$ random permutations have their first users being
honest, thus the Adversary fails the game in the next round.

\item
For any $x\geq 2$ and $y\geq 2$, $P(x, y)$ is the probability that,
among the $x$ random permutations induced by the $x$ options of $Q^{\gamma}$,
the longest sequence of malicious users at the beginning of some of them is $y-1$,
thus the Adversary has $y$ options
for $Q^{\gamma+1}$ in the next round.
That is,
$$P(x, y) = \left(\sum_{i=0}^{y-1}(1-h)^i h\right)^x - \left(\sum_{i=0}^{y-2}(1-h)^i h\right)^x=(1-(1-h)^y)^x - (1-(1-h)^{y-1})^x.$$

\end{newitemize}
Note that state 0 is
the unique absorbing state in the transition matrix $P$,
 and every other state $x$ has a positive probability of going to 0.
We are interested in upper-bounding the number $k$ of rounds needed for the
Markov Chain to converge to 0 with overwhelming probability: that is,
no matter which state the chain starts at,
with overwhelming probability the Adversary loses the game and fails to predict $Q^{r-1}$ at round $r-k$.

Consider the transition matrix $P^{(2)} \triangleq P\cdot P$ after two rounds.
It is easy to see that $P^{(2)}(0, 0) = 1$ and $P^{(2)}(0, x) = 0$ for any $x\geq 2$.
For any $x\geq 2$ and $y\geq 2$, as $P(0, y) = 0$, we have
$$P^{(2)}(x, y) = P(x, 0)P(0, y) + \sum_{z\geq 2}P(x, z)P(z, y)= \sum_{z\geq 2}P(x, z)P(z, y).$$
Letting $\bar{h} \triangleq 1-h$, we have
$$P(x, y) = (1-\bar{h}^y)^x - (1-\bar{h}^{y-1})^x$$
 and
$$P^{(2)}(x, y) = \sum_{z\geq 2}[(1-\bar{h}^z)^x - (1-\bar{h}^{z-1})^x][(1-\bar{h}^y)^z - (1-\bar{h}^{y-1})^z].$$
Below we compute the limit of
$\frac{P^{(2)}(x, y)}{P(x, y)}$ as $h$ goes to $1$ ---that is, $\bar{h}$ goes to 0.
Note that the highest order of $\bar{h}$ in $P(x, y)$ is $\bar{h}^{y-1}$, with coefficient $x$.
Accordingly,
\begin{eqnarray*}
&&\lim_{h\rightarrow 1} \frac{P^{(2)}(x, y)}{P(x, y)}
= \lim_{\bar{h}\rightarrow 0} \frac{P^{(2)}(x, y)}{P(x, y)}
= \lim_{\bar{h}\rightarrow 0} \frac{P^{(2)}(x, y)}{x \bar{h}^{y-1} + O(\bar{h}^{y})} \\
&=& \lim_{\bar{h}\rightarrow 0} \frac{\sum_{z\geq 2} [x \bar{h}^{z-1} + O(\bar{h}^{z})][z \bar{h}^{y-1} + O(\bar{h}^{y})]}{x \bar{h}^{y-1} + O(\bar{h}^{y})}
= \lim_{\bar{h}\rightarrow 0} \frac{2x\bar{h}^{y} + O(\bar{h}^{y+1})}{x \bar{h}^{y-1} + O(\bar{h}^{y})} \\
&=&  \lim_{\bar{h}\rightarrow 0} \frac{2x\bar{h}^y}{x\bar{h}^{y-1}}
= \lim_{\bar{h}\rightarrow 0} 2\bar{h} = 0.
\end{eqnarray*}
When $h$ is sufficiently close to $1$,%
\footnote{For example, $h = 80\%$ as suggested by the specific choices of parameters.}
we have
$$\frac{P^{(2)}(x, y)}{P(x, y)}\leq \frac{1}{2}$$
 for any $x\geq 2$ and $y\geq 2$.
By induction, for any $k> 2$,
$P^{(k)} \triangleq P^k$ is such that
\begin{newitemize}
\item
$P^{(k)}(0, 0) = 1$, $P^{(k)}(0, x) = 0$ for any $x\geq 2$, and

\item
for any $x\geq 2$ and $y\geq 2$,
\begin{eqnarray*}
&&P^{(k)}(x, y) = P^{(k-1)}(x, 0) P(0, y) + \sum_{z\geq 2} P^{(k-1)}(x, z) P(z, y)
= \sum_{z\geq 2} P^{(k-1)}(x, z) P(z, y) \\
&\leq& \sum_{z\geq 2} \frac{P(x, z)}{2^{k-2}} \cdot P(z, y)
= \frac{P^{(2)}(x, y)}{2^{k-2}} \leq \frac{P(x, y)}{2^{k-1}}.
\end{eqnarray*}
\end{newitemize}
As $P(x, y)\leq 1$, after $1-\log_2 F$ rounds,
the transition probability into any state $y\geq 2$ is negligible, starting with any state $x\geq 2$.
Although there are many such states $y$, it is easy to see that
$$\lim_{y\rightarrow +\infty} \frac{P(x, y)}{P(x, y+1)} =
\lim_{y\rightarrow +\infty} \frac{(1-\bar{h}^y)^x - (1-\bar{h}^{y-1})^x}{(1-\bar{h}^{y+1})^x - (1-\bar{h}^{y})^x}
=\lim_{y\rightarrow +\infty} \frac{\bar{h}^{y-1}-\bar{h}^y}{\bar{h}^y-\bar{h}^{y+1}} = \frac{1}{\bar{h}} = \frac{1}{1-h}.$$
Therefore each row $x$ of the transition matrix $P$ decreases as a geometric sequence with rate $\frac{1}{1-h}>2$
when $y$ is large enough,
and the same holds for $P^{(k)}$.
Accordingly, when $k$ is large enough but still on the order of
$\log_{1/2} F$,
$\sum_{y\geq 2}P^{(k)}(x, y) < F$ for any $x\geq 2$.
That is, with overwhelming probability the Adversary loses the game and fails to predict $Q^{r-1}$ at round $r-k$.
For $h\in (2/3, 1]$, a more complex analysis shows that there exists
a constant $C$ slightly larger than $1/2$,
 such that it suffices to take $k = O(\log_C F)$.
Thus Lemma \ref{lem:Q} holds.
\end{proof}

\medskip
\noindent
{\bf Lemma \ref{lem:security}.} (restated)
{\em
Given Properties 1--3 for each round before $r$,
$p_h = h^2(1+h-h^2)$ for $L^r$, and
the leader $\ell^r$ is honest with probability at least $p_h$.
}

\begin{proof}
Following Lemma \ref{lem:Q}, the Adversary cannot predict
$Q^{r-1}$ back at round $r-k$ except with negligible probability.
Note that this does not mean
the probability of an honest leader is $h$ for each round.
Indeed, given $Q^{r-1}$,
 depending on how many malicious users are
at the beginning of the random permutation of $PK^{r-k}$,
the Adversary may have more than one options for $Q^r$ and thus
can increase the probability of a malicious leader in round $r+1$ ---again we
 are giving him some unrealistic advantages as in Lemma \ref{lem:Q}, so as
 to simplify the analysis.

However, for each $Q^{r-1}$ that was not queried to $H$ by the Adversary
back at round $r-k$, for any $x\geq 1$,
with probability $(1-h)^{x-1} h$
the first honest user occurs at position $x$
in the resulting random permutation of $PK^{r-k}$.
When $x = 1$, the probability of an honest leader in round $r+1$ is indeed $h$;
while when $x = 2$, the Adversary has two options for $Q^r$ and
the resulting probability is $h^2$.
Only by considering these two cases,
we have that the probability of an honest leader in round $r+1$ is at least
$h\cdot h + (1-h)h \cdot h^2 = h^2(1+h-h^2)$ as desired.

Note that the above probability only considers the randomness in the protocol
from round $r-k$ to round $r$.
When all the randomness from round 0 to round $r$ is taken into consideration,
$Q^{r-1}$ is even less predictable to the Adversary and  the probability of
an honest leader in round $r+1$ is at least $h^2(1+h-h^2)$. Replacing $r+1$ with $r$ and shifts
everything back by one round,
the leader $\ell^r$ is honest with probability at least $h^2(1+h-h^2)$, as desired.

Similarly, in each Coin-Genuinely-Flipped step $s$, the ``leader'' of that step ---that is the verifier in $SV^{r,s}$ whose credential has the smallest hash value, is honest with probability
at least $h^2(1+h-h^2)$. Thus $p_h = h^2(1+h-h^2)$ for $L^r$ and Lemma \ref{lem:security} holds.
\end{proof}

\section{$\bm{\alg_2}$}

In this section, we construct a version of  $\alg$ working under the following assumption.
\begin{description}
  \item[{\sc Honest Majority of Users  Assumption:}] {\em More than 2/3 of the users in each $PK\upr$ are honest.}

\end{description}
In Section \ref{sec:hmm}, we show how to replace the above assumption with the desired Honest Majority of Money assumption.

\subsection{Additional Notations  and Parameters for $\bm{\alg_2}$}\label{sec:para}

\paragraph{Notations}

\begin{newitemize}

\item
$\mu\in \bZ^+$: a pragmatic upper-bound to the
number of steps that, with overwhelming probability,  will actually taken in one round. (As we shall see, parameter $\mu$ controls how many ephemeral keys a user prepares in advance for each round.)

\item
$L^r$:
a random variable
representing the number of Bernoulli trials needed to see a 1,
when each trial is 1 with probability $\frac{p_h}{2}$.
$L^r$ will be used to upper-bound the time needed to generate block $B^r$.

\item
$t_H$: a lower-bound for the number of honest verifiers in a step $s>1$ of round $r$, such that
with overwhelming probability (given $n$ and $p$), there are $>t_H$ honest verifiers in
$SV^{r, s}$.

\end{newitemize}

\paragraph{Parameters}

\begin{newitemize}

\item
{\em Relationships among various parameters.}

\begin{newitemize}

\item[---] For each step $s>1$ of round $r$,
$n$
is chosen so that, with overwhelming probability,
\begin{newcenter}
$|HSV^{r, s}|>t_H$ \quad and \quad $|HSV^{r, s}|+2|MSV^{r, s}|<2t_H$.
\end{newcenter}
Note that the two inequalities above together imply $|HSV^{r, s}|>2|MSV^{r, s}|$: that is, there
is a $2/3$ honest majority among selected verifiers.

The closer to 1 the value of $h$ is, the smaller $n$ needs to be.
In particular, we use (variants of) Chernoff bounds to ensure the desired conditions hold with overwhelming probability.

\end{newitemize}

\item
{\em Example choices of important parameters.}
\begin{newitemize}

\item[---]
$F=10^{-18}$.

\item[---]
$n\approx 4000$, $t_H\approx 0.69n$,
 $k=70$.

  \end{newitemize}

\end{newitemize}

\subsection{Implementing Ephemeral Keys in $\alg_2$}\label{sec:ephemeral}

Recall  that a verifier $i\in SV\rs$
digitally signs his message $m_i\rs$ of step $s$ in round $r$,
relative to an ephemeral public key $pk_i\rs$, using an ephemeral secrete key $sk^{r, s}_i$ that
 he promptly destroys after using.
When the number of possible steps that a round may take is capped by a given integer $\mu$, we have already seen how
to practically handle ephemeral keys. For example, as we have explained in $\alg_1$ (where $\mu=m+3$),  to handle all his possible ephemeral keys, from a round $r'$ to a round $r'+10^6$,
$i$ generates a pair $(PMK,SMK)$, where $PMK$ public master key of an identity based signature scheme, and $SMK$ its corresponding secret master key. User $i$ publicizes $PMK$ and uses $SMK$ to generate the secret key of each possible ephemeral public key (and destroys $SMK$ after having done so).
The set of $i$'s ephemeral public keys for the relevant rounds is
$S=\{i\}\times \{r',\ldots,r'+10^6\}\times \{1, \ldots, \mu\}$.
(As discussed, as the round $r'+10^6$ approaches, $i$ ``refreshes" his pair $(PMK,SMK)$.)

In practice, if $\mu$ is large enough, a round of $\alg_2$
will not take more than $\mu$ steps. In principle, however, there is the remote possibility that, for some round $r$ the number of steps actually taken will exceed $\mu$. When this happens, $i$ would be unable to sign his message $m_i\rs$ for any step $s>\mu$, because he has prepared in advance only $\mu$ secret keys for round $r$.
Moreover, he could not prepare and publicize a new stash of ephemeral keys, as discussed before. In fact, to do so, he would need to insert a new public master key $PMK'$ in a new block. But, should round $r$ take more and more steps, no new blocks would be generated.

However,  solutions exist. For instance, $i$ may use the last ephemeral key of round $r$, $pk_i^{r,\mu}$, as follows.
He generates another stash of key-pairs for round $r$ ---e.g., by (1) generating another master key pair $(\overline{PMK}, \overline{SMK})$; (2) using this pair to generate another, say, $10^6$ ephemeral keys, $\overline{sk}_i^{r, \mu+1},\ldots,\overline{sk}_i^{r, \mu+10^6}$, corresponding to steps $\mu+1,...,\mu+10^6$ of round $r$;  (3)  using $sk_i^{r,\mu}$ to digitally sign $\overline{PMK}$ (and any $(r,\mu)$-message if $i\in SV^{r,\mu}$), relative to $pk_i^{r,\mu}$; and (4) erasing $\overline{SMK}$ and $sk_i^{r,\mu}$.
Should $i$ become a verifier in a step $\mu+s$ with $s\in \{1,\dots, 10^6\}$,
then $i$ digitally signs his $(r, \mu+s)$-message $m^{r, \mu+s}_i$ relative to his new key $\overline{pk}_i^{r, \mu+s} = (i, r, \mu+s)$.
Of course, to verify this signature of~$i$, others need to be certain that this public key
corresponds to
$i$'s new public master key $\overline{PMK}$.
Thus, in addition to this signature, $i$ transmits
his digital signature  of $\overline{PMK}$ relative to $pk_i^{r, \mu}$.

Of course, this approach can be repeated, as many times as necessary, should round $r$ continue for more and more steps!
The last ephemeral secret key is used to authenticate a new master public key, and thus another stash of ephemeral keys for round $r$. And so on.

\subsection{The Actual Protocol $\bm{\alg_2}$}

Recall again that, in each step $s$ of a round $r$, a verifier $i\in SV^{r, s}$
uses his long-term public-secret key pair to produce his credential, $\sigma_i^{r,s}\triangleq SIG_i(r, s, Q^{r-1})$,
as well as $SIG_i\left(Q^{r-1}\right)$ in case $s=1$. Verifier $i$ uses his ephemeral key pair, ($pk_i^{r,s},sk_i^{r,s})$,
to sign any other message $m$ that may be required.
For simplicity, we write $esig_i(m)$, rather than $sig_{pk^{r, s}_i}(m)$, to denote $i$'s proper ephemeral signature of $m$ in this step, and write $ESIG_i(m)$ instead of $SIG_{pk_i^{r,s}}(m) \triangleq (i, m, esig_i(m))$.

\begin{center}
\framebox[\textwidth]{
\centering
\begin{minipage}{.98\textwidth}
\begin{center}
Step 1: Block Proposal

\end{center}

Instructions for every user $i\in PK^{r-k}$: User $i$ starts his own Step 1 of round $r$ as soon as he has $CERT^{r-1}$, which
allows $i$ to unambiguously compute $H(B^{r-1})$ and $Q^{r-1}$.

\begin{itemize}

\item
User $i$ uses $Q^{r-1}$ to check
whether $i\in SV^{r,1}$ or not. If $i\notin SV^{r,1}$,
he does nothing for Step 1.

\item
If $i\in SV^{r, 1}$, that is, if $i$ is a potential leader, then he does the following.

 \begin{itemize}
\item[(a)]
If $i$ has seen $B^0,\ldots,B^{r-1}$ himself (any $B^j=B^j_\epsilon$ can be easily derived from its hash value in $CERT^j$ and is thus assumed ``seen''),
then he collects the round-$r$ payments that have been propagated to him so far and computes
a maximal payset $PAY^r_i$ from them.

\item[(b)]
If $i$ hasn't seen all $B^0,\ldots, B^{r-1}$ yet, then he sets $PAY^r_i = \emptyset$.

\item[(c)]
Next, $i$ computes his ``candidate block''
$B^r_i = (r, PAY^r_i, SIG_i(Q^{r-1}), H(B^{r-1}))$.

\item[(c)]
Finally, $i$ computes the message $m^{r, 1}_i = (B^r_i, esig_i(H(B^r_i)), \sigma^{r, 1}_i)$,
destroys his ephemeral secret key $sk^{r, 1}_i$,
and then propagates two messages,
$m^{r, 1}_i$ and $(SIG_i(Q^{r-1}), \sigma^{r, 1}_i)$, separately but simultaneously.%
\footnote{When $i$ is the leader,
 $SIG_i(Q^{r-1})$ allows others to compute $Q^r = H(SIG_i(Q^{r-1}), r)$.}

\end{itemize}
\end{itemize}
\end{minipage}
}
\end{center}

\begin{center}
\framebox[\textwidth]{
\centering
\begin{minipage}{.98\textwidth}
\begin{center}
Selective Propagation

\end{center}
To shorten the global execution of Step 1 and the whole round,
it is important that the $(r,1)$-messages
are {\em selectively propagated}.
That is, for every user $j$ in the system,

\begin{itemize}
\item
For the first $(r, 1)$-message
that he ever receives and successfully verifies,%
\footnote{That is, all the signatures are correct and, if it is of the form $m^{r, 1}_i$,
both the block and its hash are valid ---although $j$ does not check whether
the included payset is maximal for $i$ or not.}
whether it contains a block or is just a credential and a signature of $Q^{r-1}$,
player $j$ propagates it as usual.

\item
For all the other $(r, 1)$-messages
that player $j$ receives and successfully verifies,
he propagates it
only if the hash value of
the credential it contains is the {\em smallest} among the hash values of the credentials contained in all
$(r, 1)$-messages
he has received and successfully verified so far.

\item
However, if $j$ receives two different messages of the form $m^{r, 1}_i$ from the same player $i$,%
\footnote{Which means $i$ is malicious.}
 he discards the second one no matter what the hash value of $i$'s credential is.
\end{itemize}

Note that, under selective propagation it is useful that
each potential leader $i$
propagates his credential  $\sigma^{r, 1}_i$ separately from $m^{r, 1}_i$:%
\footnote{We thank Georgios Vlachos for suggesting this.}
those small messages travel faster than blocks,
ensure timely propagation of the $m^{r, 1}_i$'s where the contained credentials have small hash values,
while make those with large hash values disappear quickly.

\end{minipage}
}
\end{center}

\begin{center}
\framebox[\textwidth]{
\centering
\begin{minipage}{.98\textwidth}
\begin{center}
Step 2: The First Step of the Graded Consensus Protocol $GC$
\end{center}

Instructions for every user $i\in PK^{r-k}$:
User $i$ starts his own Step 2 of round $r$ as soon as he
has $CERT^{r-1}$.

\begin{itemize}

\item
User $i$ waits a maximum
amount of time  $t_2 \triangleq \lambda +\Lambda$. While waiting, $i$ acts as follows.

\begin{newitemize}

\item[1.]
After waiting for time $2\lambda$, he finds the user $\ell$ such that $H(\sigma^{r, 1}_{\ell})\leq H(\sigma^{r, 1}_{j})$ for  all credentials $\sigma^{r, 1}_{j}$ that are part of the successfully verified $(r,1)$-messages he has received so far.%
\footnote{Essentially, user $i$ privately decides that the leader of round $r$ is user $\ell$.}

\item[2.]
If he has received a block $B^{r-1}$, which matches the hash value $H(B^{r-1})$ contained in $CERT^{r-1}$,%
\footnote{Of course, if $CERT^{r-1}$ indicates that $B^{r-1}=B^{r-1}_\epsilon$, then $i$ has already ``received'' $B^{r-1}$ the moment he has $CERT^{r-1}$.}
and if he has received from $\ell$ a valid message $m^{r, 1}_{\ell} = (B^r_{\ell}, esig_{\ell}(H(B^r_{\ell})), \sigma^{r, 1}_{\ell})$,%
\footnote{Again, player $\ell$'s signatures and the hashes are all successfully verified, and $PAY^r_{\ell}$ in $B^r_{\ell}$ is a valid payset for round $r$ ---although $i$ does not check whether $PAY^r_{\ell}$ is maximal for $\ell$ or not.
If $B^r_\ell$ contains an empty payset, then there is actually no need for $i$ to see $B^{r-1}$ before verifying whether $B^r_\ell$ is valid or not.}
 then $i$ stops waiting and sets $v'_i \triangleq (H(B^r_{\ell}), \ell)$.

\item[3.] Otherwise, when time $t_2$ runs out,
$i$ sets $v'_i \triangleq \bot$.

\item[4.]
When the value of $v'_i$ has been set,
$i$ computes $Q^{r-1}$ from $CERT^{r-1}$
and checks whether $i\in SV^{r,2}$ or not.

\item[5.]
If $i\in SV^{r,2}$, $i$  computes the message
$m^{r, 2}_i \triangleq (ESIG_i(v'_i), \sigma^{r, 2}_i)$,%
\footnote{The message
$m^{r, 2}_i$ signals that player $i$ considers the first component of $v'_i$ to be the hash of the next block, or considers the next block to be empty.} destroys his ephemeral secret key $sk^{r, 2}_i$, and then propagates $m^{r, 2}_i$.
Otherwise,
$i$ stops without propagating anything.

\end{newitemize}

\end{itemize}
\end{minipage}
}
\end{center}

\begin{center}
\framebox[\textwidth]{
\centering
\begin{minipage}{.98\textwidth}
\begin{center}
Step 3: The Second Step of $GC$
\end{center}

Instructions for every user $i\in PK^{r-k}$:
User $i$ starts his own Step 3 of round $r$ as soon as he
has $CERT^{r-1}$.

\begin{itemize}

\item
User $i$ waits a maximum amount of time  $t_3 \triangleq
t_2+2\lambda = 3\lambda+\Lambda$. While waiting,
$i$ acts as follows.
\begin{newitemize}

\item[1.]
If there exists a value $v$ such that
he has received at least $t_H$ valid messages $m^{r, 2}_j$ of the form $(ESIG_j(v), \sigma^{r, 2}_j)$,
without any contradiction,%
\footnote{That is,
he has not received two valid messages containing $ESIG_j(v)$ and a different $ESIG_j(\hat{v})$ respectively, from a player $j$.
Here and from here on, except in the Ending Conditions defined later, whenever an honest player wants messages of a given form, messages contradicting each other are never counted or considered valid.}
then he
stops waiting and sets $v'=v$.

\item[2.]
Otherwise, when time $t_3$ runs out, he sets $v'=\bot$.

\item[3.]
When the value of $v'$ has been set,
$i$ computes $Q^{r-1}$ from
$CERT^{r-1}$
and checks whether $i\in SV^{r,3}$ or not.

\item[4.]
If $i\in SV^{r, 3}$, then
$i$ computes the message
 $m_i^{r, 3} \triangleq (ESIG_i(v'), \sigma^{r, 3}_i)$,
destroys his ephemeral secret key $sk^{r, 3}_i$, and then propagates $m_i^{r, 3}$. Otherwise,
$i$ stops without propagating anything.

\end{newitemize}

\end{itemize}
\end{minipage}
}
\end{center}

\begin{center}
\framebox[\textwidth]{
\centering
\begin{minipage}{.98\textwidth}
\begin{center}
Step 4: Output of $GC$ and The First Step of $\BBA*$

\end{center}

Instructions for every user $i\in PK^{r-k}$:
User $i$ starts his own Step 4 of round $r$ as soon as
he finishes his own Step 3.

\begin{itemize}

\item
User $i$ waits a maximum amount of time $2\lambda$.%
\footnote{Thus, the maximum {\em total} amount of
time since $i$ starts his Step 1 of round $r$
could be $t_4 \triangleq
t_3+2\lambda = 5\lambda+\Lambda$.}
While waiting,
$i$ acts as follows.

\begin{newitemize}

\item[1.] He computes $v_i$ and $g_i$, the output of GC, as follows.

\begin{newitemize}

\item[(a)]
If there exists a value
$v'\neq \bot$
such that he has received at least $t_H$ valid messages
$m_j^{r, 3} = (ESIG_j(v'), \sigma_j^{r, 3})$,
then
he stops waiting and sets $v_i \triangleq v'$ and $g_i \triangleq 2$.

\item[(b)]
If he has received at least $t_H$ valid messages
$m_j^{r, 3} = (ESIG_j(\bot), \sigma_j^{r, 3})$,
then he stops waiting and sets $v_i \triangleq \bot$
and $g_i\triangleq 0$.%
\footnote{Whether Step (b) is in the protocol or not does not affect its correctness. However, the presence of Step (b) allows Step 4 to end in less than $2\lambda$ time if sufficiently many Step-3 verifiers have ``signed $\bot$.''}

\item[(c)]
Otherwise, when time $2\lambda$ runs out, if
there exists a
value $v'\neq \bot$
such that he
has received at least $\lceil \frac{t_H}{2} \rceil$ valid messages
$m^{r, j}_j = (ESIG_j(v'), \sigma_j^{r, 3})$,
then he sets $v_i \triangleq v'$ and $g_i \triangleq 1$.%
\footnote{It can be proved that the $v'$ in this case, if exists, must be unique.}

\item[(d)]
Else, when time $2\lambda$ runs out, he sets $v_i \triangleq \bot$
 and $g_i \triangleq 0$.

\end{newitemize}

\item[2.]
When the values $v_i$ and $g_i$ have been set, $i$ computes $b_i$,
the input of $\BBA*$, as follows:

  $b_i \triangleq 0$ if $g_i = 2$, and $b_i \triangleq 1$ otherwise.

\item[3.]
$i$ computes $Q^{r-1}$ from
$CERT^{r-1}$ and
checks whether $i\in SV^{r,4}$ or not.

\item[4.]
If $i\in SV^{r, 4}$, he computes the message $m^{r, 4}_i \triangleq (ESIG_i(b_i), ESIG_i(v_i), \sigma^{r, 4}_i)$, destroys his ephemeral secret key $sk^{r, 4}_i$, and propagates $m^{r, 4}_i$. Otherwise, $i$ stops without propagating anything.

\end{newitemize}

\end{itemize}
\end{minipage}
}
\end{center}

\begin{center}
\framebox[\textwidth]{
\centering
\begin{minipage}{.98\textwidth}
\begin{center}
Step $s$, $5\leq s\leq m+2$, $s -2 \equiv 0 \mod 3$: A Coin-Fixed-To-0 Step of $\BBA*$

\end{center}

Instructions for every user $i\in PK^{r-k}$:
User $i$ starts his own Step $s$ of round $r$ as soon as he
finishes his own Step $s-1$.

\begin{newitemize}

\item
User $i$ waits a maximum amount of time $2\lambda$.%
\footnote{Thus, the maximum {\em total} amount of time since $i$ starts his Step 1 of round $r$ could be
$t_s \triangleq t_{s-1}+2\lambda = (2s-3)\lambda+\Lambda$.}
While waiting, $i$ acts as follows.

\begin{newitemize}

\item[--] {\em Ending Condition 0:}
If
at any point
there exists a string $v\neq \bot$
and a step $s'$ such that

\begin{newitemize}

 \item[(a)] $5\leq s'\leq s$, $s'-2 \equiv 0 \mod 3$ ---that is, Step $s'$ is a Coin-Fixed-To-0 step,

\item[(b)] $i$ has received at least $t_H$
valid messages
 $m^{r, s'-1}_j = (ESIG_j(0),$ $ESIG_j(v), \sigma^{r, s'-1}_j)$,%
\footnote{Such a message from player $j$ is counted even if player $i$ has also received a message from $j$ signing for 1. Similar things for Ending Condition 1. As shown in the analysis, this is to ensure that all honest users know
$CERT^r$
within time $\lambda$ from each other.}
  and

\item[(c)] $i$ has received a valid message $(SIG_j(Q^{r-1}), \sigma^{r,1}_j)$ with $j$ being the second component of $v$,

\end{newitemize}

then, $i$ stops waiting and ends his own execution of Step $s$ (and in fact of round $r$) right away without propagating anything as a $(r,s)$-verifier;
sets $H(B^r)$ to be the first component of $v$;
and  sets his own $CERT^r$ to be the set of messages $m^{r, s'-1}_j$ of step (b) together with $(SIG_j(Q^{r-1}), \sigma^{r,1}_j)$.%
\footnote{User $i$ now knows
$H(B^r)$
and his own round $r$ finishes. He just needs to wait until the actually block $B^r$ is propagated to him, which may take some additional time.
He still helps propagating messages as a generic user, but does not initiate any propagation as a $(r, s)$-verifier.
In particular,
he has helped propagating all messages in his $CERT^r$, which is enough for our protocol.
Note that he should also set $b_i \triangleq 0$ for the binary BA protocol, but $b_i$ is not needed in this case anyway.
Similar things for all future instructions.}

\item[--] {\em Ending Condition 1:}
If
at any point there exists a step $s'$ such that

\begin{newitemize}

\item[(a')] $6\leq s'\leq s$, $s'-2\equiv 1 \mod 3$ ---that is, Step $s'$ is a Coin-Fixed-To-1 step, and

\item[(b')]
$i$ has received at least $t_H$ valid messages $m^{r, s'-1}_j = (ESIG_j(1), ESIG_j(v_j),$ $\sigma^{r, s'-1}_j)$,%
\footnote{In this case, it does not matter what the $v_j$'s are.}

\end{newitemize}
then, $i$  stops waiting and ends his own execution of Step $s$ (and in fact of round $r$) right away without propagating anything as a $(r, s)$-verifier; sets $B^r = B^r_\epsilon$; and sets his own $CERT^r$ to be the set of messages $m^{r, s'-1}_j$  of sub-step (b').

\item[--]
If at any point he has received at least $t_H$ valid
$m^{r, s-1}_j$'s
of the form $(ESIG_j(1), ESIG_j(v_j), \sigma^{r, s-1}_j)$,
then he stops waiting and sets $b_i \triangleq 1$.

\item[--]
If at any point he has received at least $t_H$ valid $m^{r, s-1}_j$'s
of the form $(ESIG_j(0), ESIG_j(v_j), \sigma^{r, s-1}_j)$, but
they do not agree on the same $v$, then
he stops waiting and sets $b_i\triangleq 0$.

\item[--]
Otherwise, when time $2\lambda$ runs out,
$i$ sets $b_i\triangleq 0$.

\item[--]
When the value $b_i$ has been set,
$i$ computes $Q^{r-1}$ from $CERT^{r-1}$
and
checks whether $i\in SV^{r,s}$.

\item[--]
If $i\in SV^{r, s}$,
$i$ computes the message $m^{r, s}_i \triangleq (ESIG_i(b_i), ESIG_i(v_i), \sigma^{r, s}_i)$ with $v_i$ being the value he has computed in Step 4,
destroys his ephemeral secret key $sk^{r, s}_i$, and then propagates $m^{r, s}_i$. Otherwise, $i$ stops without propagating anything.

\end{newitemize}

\end{newitemize}

\end{minipage}
}
\end{center}

\begin{center}
\framebox[\textwidth]{
\centering
\begin{minipage}{.98\textwidth}
\begin{center}
Step $s$, $6\leq s\leq m+2$, $s -2 \equiv 1 \mod 3$: A Coin-Fixed-To-1 Step of $\BBA*$

\end{center}

Instructions for every user $i\in PK^{r-k}$:
User $i$ starts his own Step $s$ of round $r$ as soon as he
finishes his own Step $s-1$.

\begin{itemize}

\item
User $i$ waits a maximum amount of time $2\lambda$.
While waiting, $i$ acts as follows.

\begin{newitemize}

\item[--] {\em Ending Condition 0:} The same instructions as in a Coin-Fixed-To-0 step.

\item[--] {\em Ending Condition 1:} The same instructions as in a Coin-Fixed-To-0 step.

\item[--]
If at any point he has received at least $t_H$ valid
$m^{r, s-1}_j$'s
of the form $(ESIG_j(0), ESIG_j(v_j), \sigma^{r, s-1}_j)$,
then he stops waiting and sets $b_i \triangleq 0$.%
\footnote{Note that receiving $t_H$ valid $(r, s-1)$-messages signing for 1 would mean Ending Condition 1.}

\item[--]
Otherwise, when time $2\lambda$ runs out, $i$ sets $b_i\triangleq 1$.

\item[--]
When the value $b_i$ has been set,
$i$ computes $Q^{r-1}$ from $CERT^{r-1}$
and
checks whether $i\in SV^{r,s}$.

\item[--]
If $i\in SV^{r, s}$,
$i$ computes the message $m^{r, s}_i \triangleq (ESIG_i(b_i), ESIG_i(v_i), \sigma^{r, s}_i)$ with $v_i$ being the value he has computed in Step 4,
destroys his ephemeral secret key $sk^{r, s}_i$, and then propagates $m^{r, s}_i$. Otherwise, $i$ stops without propagating anything.

\end{newitemize}
\end{itemize}
\end{minipage}
}
\end{center}

\begin{center}
\framebox[\textwidth]{
\centering
\begin{minipage}{.98\textwidth}
\begin{center}
Step $s$, $7\leq s\leq m+2$, $s -2 \equiv 2 \mod 3$: A Coin-Genuinely-Flipped Step of $\BBA*$

\end{center}

Instructions for every user $i\in PK^{r-k}$:
User $i$ starts his own Step $s$ of round $r$ as soon as he
finishes his own step $s-1$.

\begin{itemize}

\item
User $i$ waits a maximum amount of time $2\lambda$.
While waiting, $i$ acts as follows.

\begin{newitemize}

\item[--] {\em Ending Condition 0:} The same instructions as in a Coin-Fixed-To-0 step.

\item[--] {\em Ending Condition 1:} The same instructions as in a Coin-Fixed-To-0 step.

\item[--]
If at any point he has received at least $t_H$ valid
$m^{r, s-1}_j$'s
of the form $(ESIG_j(0), ESIG_j(v_j), \sigma^{r, s-1}_j)$,
then he stops waiting and sets $b_i \triangleq 0$.

\item[--]
If at any point he has received at least $t_H$ valid
$m^{r, s-1}_j$'s
of the form $(ESIG_j(1), ESIG_j(v_j), \sigma^{r, s-1}_j)$,
then he stops waiting and sets $b_i \triangleq 1$.

\item[--]
Otherwise, when time $2\lambda$ runs out,
letting $SV^{r, s-1}_i$ be the set of $(r, s-1)$-verifiers from whom he has received a valid message $m^{r, s-1}_j$,
$i$ sets $b_i \triangleq \lsb(\min_{j\in SV^{r, s-1}_i} H(\sigma^{r, s-1}_j))$.

\item[--]
When the value $b_i$ has been set,
$i$ computes $Q^{r-1}$ from $CERT^{r-1}$
and
checks whether $i\in SV^{r,s}$.

\item[--]
If $i\in SV^{r, s}$,
$i$ computes the message $m^{r, s}_i \triangleq (ESIG_i(b_i), ESIG_i(v_i), \sigma^{r, s}_i)$ with $v_i$ being the value he has computed in Step 4,
destroys his ephemeral secret key $sk^{r, s}_i$, and then propagates $m^{r, s}_i$. Otherwise, $i$ stops without propagating anything.

\end{newitemize}

\end{itemize}
\end{minipage}
}
\end{center}

\paragraph{Remark.} In principle, as considered in subsection \ref{sec:ephemeral}, the protocol may take arbitrarily many steps in some round.
Should this happens, as discussed, a user $i\in SV\rs$ with $s>\mu$ has exhausted his stash of pre-generated ephemeral keys and has to authenticate his $(r,s)$-message $m^{r, s}_i$
by a ``cascade" of ephemeral keys.
Thus $i$'s message
becomes a bit longer
and transmitting these longer messages will take a bit more time.
Accordingly, after so many steps of a given round, the value of the parameter $\lambda$ will automatically increase slightly. (But it reverts to the original $\lambda$ once a new block is produced and a new round starts.)

\begin{center}
\framebox[\textwidth]{
\centering
\begin{minipage}{.98\textwidth}
\begin{center}
Reconstruction of the Round-$r$ Block by Non-Verifiers

\end{center}

Instructions for every user $i$ in the system:
User $i$ starts his own round $r$ as soon as he has $CERT^{r-1}$.

\begin{itemize}

\item
$i$ follows the instructions of each step of the protocol, participates the propagation of all messages,
but does not initiate any propagation in a step if he is not a verifier in it.

\item
$i$ ends his own round $r$ by entering either Ending Condition 0 or Ending Condition 1 in some step, with the corresponding $CERT^{r}$.

\item
From there on, he starts his round $r+1$ while waiting to receive the actual block $B^r$ (unless he has already received it), whose hash $H(B^r)$ has been pinned down by $CERT^r$. Again, if $CERT^r$ indicates that $B^r = B^r_\epsilon$, the $i$ knows $B^r$ the moment he has $CERT^r$.
\end{itemize}

\end{minipage}
}
\end{center}

\subsection{Analysis of $\bm{\alg_2}$}

The analysis of $\alg_2$ is easily derived from that of $\alg_1$. Essentially, in $\alg_2$, with overwhelming probability, (a)
all honest users agree on the same block $B^r$; the leader of a new block is honest with probability at least $p_h= h^2(1+h-h^2)$.

\section{Handling  Offline Honest users}\label{sec:LazyHonesty}

As we said, a honest user follows all his prescribed instructions, which include that of being online and running the protocol. This is not a major burden in Algorand, since the computation and bandwidth required from a honest user are quite modest. Yet, let us point out that Algorand can be easily modified so as to work in two models, in which honest users are allowed to be offline in great numbers.

Before discussing these two models, let us point out that, if the percentage of honest players were 95\%,  Algorand could still be run setting all parameters assuming instead that $h=80\%$. Accordingly,  Algorand would continue to work properly even if  at most half of the honest players chose to go offline (indeed, a major case of``absenteeism"). In fact, at any point in time, at least 80\% of the players online would be honest.

\paragraph{From Continual Participation to Lazy Honesty}

As we saw, $\alg_1$ and $\alg_2$ choose the look-back parameter $k$. Let us now show that choosing $k$ properly large enables one to remove the Continual Participation requirement. This requirement ensures a crucial property: namely, that the underlying BA protocol $\BBA*$ has a proper honest majority. Let us now explain how lazy honesty provides an alternative and attractive way to satisfy this property.

Recall that a user $i$ is lazy-but-honest if (1) he follows all his prescribed
instructions, when he is asked to participate to the protocol, and (2) he is asked to participate
to the protocol only very rarely ---e.g., once a week---  with suitable advance notice, and potentially receiving significant rewards when he participates.

To allow Algorand to work with such players, it just suffices to ``choose the verifiers of the current round among the users already in the system in a much earlier round."
Indeed, recall that the verifiers for a round $r$ are chosen from users in round $r-k$, and the selections are made based on the quantity $Q^{r-1}$.
Note that a week consists of roughly 10,000 minutes, and assume that a round takes roughly (e.g., on average) 5 minutes, so  a week has roughly 2,000 rounds.
Assume that,
at some point of time,
a user~$i$
wishes
to plan his time and know whether he is going to be a verifier in the coming week.
The protocol now chooses the verifiers for a round $r$ from users in round $r-k-2,000$,
and the selections are based on $Q^{r-2,001}$.
At round $r$, player $i$
already
knows the values $Q^{r-2,000}, \ldots, Q^{r-1}$,
since they are actually part of the blockchain. Then,
for each $M$ between 1 and 2,000,
$i$ is a verifier in a step $s$ of round $r+M$ if and only if
$$.H\left(SIG_i\left(r+M, s, Q^{r+M-2,001}\right)\right)\leq p \enspace. $$
Thus, to check whether he is going to be called to act as
a verifier in the next 2,000 rounds, $i$ must
compute $\sigma^{M,s}_i = SIG_i\left(r+M, s, Q^{r+M-2,001} \right)$
for $M = 1$ to $2,000$ and for each step $s$,
and check whether $.H(\sigma^{M, s}_i) \leq p$ for some of them.
If computing a digital signature takes a millisecond,
then this entire operation will take him about 1 minute
of computation.
If he is not selected as a verifier in any of these rounds, then he can go off-line with an ``honest
conscience". Had he continuously participated,
he would have essentially taken 0 steps in the next
2,000 rounds anyway! If, instead, he is selected to be a verifier in one of these rounds, then he readies himself (e.g., by obtaining all the information necessary) to act as an honest verifier at the proper round.

By so acting, a lazy-but-honest
potential verifier $i$ only misses participating to the
propagation of messages. But message propagation is typically robust.
Moreover, the payers and
the payees of recently propagated payments are expected to be online to watch what happens to their
payments, and thus they will participate to message propagation, if they are honest.

\section{Protocol $\bm{\alg}$ with Honest Majority of Money}\label{sec:hmm}

 We now, finally, show how to replace the Honest Majority of Users assumption  with the much more meaningful Honest Majority of Money  assumption.
The basic idea  is (in a proof-of-stake flavor)
``to select a user $i\in PK^{r-k}$
to belong to $SV\rs$ with a weight
(i.e., decision power)
proportional to the amount of money owned by $i$.''%
\footnote{We should say $PK^{r-k-2,000}$ so as to
replace continual participation. For simplicity, since one may wish to require continual participation anyway, we use $PK^{r-k}$
as before, so as to carry one less parameter.}

By our HMM assumption, we can choose whether that amount should be owned at round $r-k$ or at (the start of) round $r$. Assuming that we do not mind continual participation, we opt for the latter choice. (To remove continual participation, we would have opted for the former choice. {\em Better said, for the amount of money owned at round $r-k-2,000$.})

There are many ways to implement this  idea.
The simplest way would be to have each key hold at most 1 unit of money and then select at random $n$ users $i$ from $PK^{r-k}$ such that $a_i^{(r)}=1$.

\subsubsection*{The Next Simplest Implementation}

The next simplest implementation may be to demand that each public key
owns a maximum amount of money $M$, for some fixed $M$.
The value $M$ is small enough compared with the total amount of money in the system,
such that the probability a key belongs to
the verifier set of more than one step in ---say--- $k$ rounds is negligible.
Then, a key $i\in PK^{r-k}$, owning an amount of money $a_i^{(r)}$ in round $r$,  is chosen to belong to $SV\rs$
if
$$ .H\left(SIG_i\left(r,s,Q\uprm \right)\right)\leq p \cdot \frac{a_i^{(r)}}{M} \enspace.$$
And all proceeds as before.

\subsubsection*{A More Complex Implementation}

The last implementation  ``forced a rich participant in the system to own many keys".

An alternative implementation, described below, generalizes the notion of status
and consider each user $i$ to consist of $K+1$ {\em copies} $(i,v)$, each of which is independently selected to be a verifier, and will own his own ephemeral key $(pk_{i,v}^{r,s}, sk_{i,v}^{r,s})$ in a step $s$ of a round $r$.
The value $K$ depends on the amount of money $a_i^{(r)}$ owned by $i$ in round $r$.

Let us now see how such a system works in greater detail.

\paragraph{Number of  Copies}

Let  $n$ be the targeted expected cardinality of each verifier set,
and let $a_i^{(r)}$ be the amount of money owned by a user $i$ at round~$r$. Let~$A\upr$ be the total amount of money owned by the users in $PK^{r-k}$ at round $r$, that is,
$$A\upr=\sum_{i\in PK^{r-k}}a_i^{(r)}.$$
If $i$ is an user in $PK^{r-k}$,
then
$i$'s copies are $(i,1),\ldots,(i,K+1)$, where
$$K=\left\lfloor \frac{n\cdot a_i^{(r)}} {A\upr}\right\rfloor \enspace.$$

\medskip

\scparagraph{Example.} Let $n=1,000$, $A\upr=10^9$, and $a_i^{(r)}= 3.7$ millions.
Then,
$$K=\left\lfloor \frac{10^3 \cdot (3.7\cdot 10^6)}{10^9}\right\rfloor=\lfloor 3.7\rfloor=3\enspace.$$

\paragraph{Verifiers and Credentials}

Let $i$ be a user in $PK^{r-k}$
with $K+1$
copies.

For each $v=1,\ldots,K$, copy $(i,v)$  belongs to $SV\rs$ automatically.
That is, $i$'s credential is $\sigma^{r, s}_{i,v} \triangleq SIG_i((i, v), r, s, Q^{r-1})$, but the corresponding condition becomes
$.H(\sigma^{r, s}_{i, v})\leq 1$, which is always true.

\medskip

For copy $(i,K+1)$,
for each Step $s$ of round $r$,
$i$ checks whether
$$  .H\big( SIG_i \big(\, (i,K+1), r, s,Q\uprm \, \big)\big)\leq a_i^{(r)} \frac{n}{A\upr} - K  \enspace. $$
If so, copy $(i,K+1)$ belongs to $SV\rs$. To prove it, $i$ propagates the credential
$$\sigma_{i, K+1}^{r,1} = SIG_i \big(\, (i,K+1), r, s,Q\uprm \, \big).$$

\medskip

\scparagraph{Example.}
As in the previous example, let $n=1K$,  $a_i^{(r)}=3.7M$, $A\upr=1B$, and $i$
has 4 copies: $(i,1),\ldots,(i,4)$.
Then, the first 3 copies belong to $SV\rs$ automatically. For the 4th one, conceptually, $\alg$
independently rolls a biased coin, whose probability of Heads is  0.7.
Copy $(i,4)$ is selected if and only if the coin toss is Heads.

(Of course, this biased coin flip is implemented by hashing, signing, and comparing ---as we have done all along in this paper--- so as to enable $i$ to prove his result.)

\bigskip

\paragraph{Business as Usual}
Having explained how verifiers are selected and
how their credentials are computed at each step of a round $r$, the execution of a round
is similar to that already explained.

\section{Handling Forks}

Having reduced the probability of forks to $10^{-12}$ or $10^{-18}$, it is practically unnecessary to handle them in the remote chance that they occur.
Algorand, however, can also employ various  fork resolution procedures, with or without proof of work.

One possible way of instructing the users to resolve forks is as follows:
\begin{itemize}
\item
Follow the longest chain if a user sees multiple chains.

\item
If there are more than one longest chains, follow the one with a non-empty block at the end.
If all of them have empty blocks at the end, consider their second-last blocks.

\item
If there are more than one longest chains with non-empty blocks at the end, say the chains are of length $r$,
follow the one whose leader of block $r$ has the smallest credential.
If there are ties, follow the one whose block $r$ itself has the smallest hash value.
If there are still ties, follow the one whose block $r$ is ordered the first lexicographically.
\end{itemize}

\section{Handling Network Partitions}\label{sec:partition}
As said, we assume the propagation times
of messages among all users in the network are upper-bounded by $\lambda$ and $\Lambda$. This is not a strong assumption,
as today's Internet is fast and robust, and the actual values of these parameters are quite reasonable.
Here, let us point out that $\alg_2$ continues to work even if the Internet occasionally got partitioned into two parts. The case when the Internet is partitioned into more than two parts is similar.

\subsection{Physical Partitions}
First of all, the partition may be caused by physical reasons. For example, a huge earthquake may end up completely breaking down
the connection between Europe and America.
In this case, the malicious users
are also partitioned and there is no communication between the two parts.
Thus there will be two Adversaries, one for part 1 and the other for part 2.
Each Adversary still tries to break the protocol in its own part.

Assume the partition happens in the middle of round $r$. Then each user is still selected as a verifier based on $PK^{r-k}$, with the same probability as before.
Let $HSV_i^{r, s}$ and $MSV_i^{r, s}$ respectively be the set of honest and malicious
verifiers in a step $s$ in part $i\in \{1, 2\}$.
We have
$$|HSV_1^{r, s}| + |MSV_1^{r, s}| + |HSV_2^{r, s}|+|MSV_2^{r, s}| = |HSV^{r, s}| + |MSV^{r, s}|.$$
Note that $|HSV^{r, s}| + |MSV^{r, s}|< |HSV^{r, s}|+2|MSV^{r, s}|<2t_H$ with overwhelming probability.

If some part $i$ has $|HSV_i^{r, s}| + |MSV_i^{r, s}|\geq t_H$ with non-negligible probability, e.g., $1\%$, then
the probability that $|HSV_{3-i}^{r, s}| + |MSV_{3-i}^{r, s}|\geq t_H$ is very low, e.g., $10^{-16}$ when $F=10^{-18}$.
In this case, we may as well treat the smaller part as going offline, because there will not be enough verifiers in this part to generate $t_H$ signatures
to certify a block.

Let us consider the larger part, say  part 1 without loss of generality.
Although $|HSV^{r, s}|<t_H$ with negligible probability in each step $s$,
when the network is partitioned, $|HSV_1^{r, s}|$ may be
less than $t_H$ with some non-negligible probability.
In this case the Adversary may, with some other non-negligible probability,
force the binary BA protocol into a fork in round $r$,
with a non-empty block $B^r$ and the empty block $B^r_\epsilon$ both having $t_H$ valid signatures.%
\footnote{Having a fork with two non-empty blocks is not possible with or without partitions, except with negligible probability.}
For example, in a Coin-Fixed-To-0 step $s$, all verifiers in $HSV_1^{r, s}$ signed for bit 0 and $H(B^r)$, and propagated their messages.
All verifiers in $MSV_1^{r, s}$ also signed 0 and $H(B^r)$, but withheld their messages.
Because $|HSV_1^{r, s}|+|MSV_1^{r, s}|\geq t_H$,
the system has enough signatures to certify $B^r$.
However, since the malicious verifiers withheld their signatures, the users enter step $s+1$,
which is a Coin-Fixed-To-1 step.
Because $|HSV_1^{r, s}|<t_H$ due to the partition,
the verifiers in $HSV_1^{r, s+1}$ did not see $t_H$ signatures for bit 0
and they all signed for bit 1.
All verifiers in $MSV_1^{r, s+1}$ did the same.
Because $|HSV_1^{r, s+1}|+|MSV_1^{r, s+1}|\geq t_H$,
the system has enough signatures to certify $B^r_\epsilon$.
The Adversary then creates a fork by releasing the signatures of $MSV_1^{r, s}$ for 0 and $H(B^r)$.

Accordingly, there will be two $Q^{r}$'s, defined by the corresponding blocks of round $r$.
However, the fork will not continue and only one of the two branches may grow in round $r+1$.

\paragraph{Additional Instructions for $\bm{\alg_2}$.}
{\em When seeing a non-empty block $B^r$ and the empty block $B^r_\epsilon$,
follow the non-empty one (and the $Q^r$ defined by it).}

\bigskip

Indeed, by instructing the users to go with the non-empty block in the protocol,
if a large amount of honest users in $PK^{r+1-k}$ realize there is a fork at the beginning of round $r+1$, then
 the empty block will not have enough followers and will not grow.
Assume the Adversary manages to partition the honest users so that some honest users see $B^r$ (and perhaps $B^r_\epsilon$), and
some only see $B^r_\epsilon$.
Because the Adversary cannot tell which one of them will be a verifier following $B^r$
and which will be a verifier following $B^r_\epsilon$, the honest users are randomly partitioned and
each one of them
still becomes a verifier (either with respect to $B^r$ or with respect to $B^r_\epsilon$) in a step $s>1$ with probability $p$.
For the malicious users, each one of them may have two chances to become a verifier,
one with $B^r$ and the other with $B^r_\epsilon$, each with probability $p$ independently.

Let $HSV_{1; B^r}^{r+1, s}$ be the set of honest verifiers in step $s$ of round $r+1$ following $B^r$.
Other notations such as $HSV_{1; B^r_\epsilon}^{r+1, s}$, $MSV_{1; B^r}^{r+1, s}$ and $MSV_{1; B^r_\epsilon}^{r+1, s}$ are similarly defined.
By Chernoff bound, it is easy to see that with overwhelming probability,
$$|HSV_{1; B^r}^{r+1, s}| + |HSV_{1; B^r_\epsilon}^{r+1, s}| + |MSV_{1; B^r}^{r+1, s}| + |MSV_{1; B^r_\epsilon}^{r+1, s}| < 2t_H.$$
Accordingly, the two branches cannot both have $t_H$ proper signatures certifying a block for round $r+1$ in the same step $s$.
Moreover, since the selection probabilities for two steps $s$ and $s'$ are the same and the selections are independent,
also with overwhelming probability
$$|HSV_{1; B^r}^{r+1, s}| + |MSV_{1; B^r}^{r+1, s}| + |HSV_{1; B^r_\epsilon}^{r+1, s'}| + |MSV_{1; B^r_\epsilon}^{r+1, s'}| < 2t_H,$$
for any two steps $s$ and $s'$.
When $F=10^{-18}$, by the union bound,
as long as the Adversary cannot partition the honest users for a long time (say $10^4$ steps, which is more than 55 hours with $\lambda = 10$ seconds%
\footnote{Note that a user finishes a step $s$ without waiting for $2\lambda$ time only if he has seen at least $t_H$ signatures for the same message.
When there are not enough signatures, each step will last for $2\lambda$ time.}),
with high probability (say $1-10^{-10}$)
at most one branch will have $t_H$ proper signatures to certify a block in round $r+1$.

Finally, if the physical partition has created two parts with roughly the same size, then
the probability that $|HSV_i^{r, s}| + |MSV_i^{r, s}|\geq t_H$ is small for each part $i$.
Following a similar analysis, even if the Adversary manages to create a fork with some non-negligible probability in each part for round $r$,
at most one of the four branches may grow in round $r+1$.

\subsection{Adversarial Partition}
Second of all, the partition may be caused by the Adversary, so that the messages propagated by the honest users in one part will not reach the honest
users in the other part directly, but the Adversary is able to forward messages between the two parts.
Still, once a message from one part
reaches an honest user in the other part, it will be propagated in the latter as usual. If the Adversary is willing to spend a lot of money,
it is conceivable that he may be able to hack the Internet and partition it like this for a while.

The analysis is similar to that for the larger part in the physical partition above (the smaller part can be considered as having population 0):
the Adversary may be able to create a fork and each honest user only sees one of the branches,
but at most one branch may grow.

\subsection{Network Partitions in Sum}
Although network partitions can happen and a fork in one round may occur under partitions,
there is no lingering ambiguity:
a fork is very short-lived, and in fact lasts for at most a single round.
In all parts of the partition except for at most one, the users cannot generate a new block and thus (a) realize there is a
partition in the network and (b) never rely on blocks that will ``vanish''.

\section*{Acknowledgements}\label{Ack}

We would like to first acknowledge Sergey Gorbunov, coauthor of the cited Democoin system.

Most sincere thanks go to Maurice Herlihy, for many enlightening discussions, for pointing out that pipelining will improve Algorand's throughput performance,  and for greatly improving the exposition of an earlier version of this paper.
Many thanks to Sergio Rajsbaum, for his comments on an earlier version of this paper.
Many thanks to
Vinod Vaikuntanathan, for several deep discussions and insights.
Many thanks to Yossi Gilad, Rotem Hamo, Georgios Vlachos, and Nickolai Zeldovich
for starting to test these ideas, and for many helpful comments and discussions.

Silvio Micali would like to personally thank Ron Rivest for innumerable discussions and guidance in cryptographic research over more than 3 decades, for coauthoring the cited micropayment system that has inspired one of the verifier selection mechanisms of Algorand.

We hope to bring this technology to the next level. Meanwhile the travel and companionship are great fun, for which we are very grateful.

\end{document}